\def\arXiv#1{\href{http://arxiv.org/abs/#1}{arXiv:#1}}
\newcolumntype{P}[1]{>{\centering\arraybackslash}m{#1}}
\def\wrtext#1{\relax\ifmmode{\leavevmode\hbox{#1}}\else{#1}\fi}
\def\abs#1{\left|#1\right|}
\def\?[#1]{\textbf{[#1]}\marginpar{\Large{\textbf{??}}}}
\let\epsilon=\varepsilon 
\def\norm#1{||\,#1\,||}
\newtheorem{theo}{Theorem}
\newtheorem{prop}{Proposition}[section]
\newtheorem{lemm}[prop]{Lemma}
\numberwithin{equation}{section}
\DeclareMathOperator{\ad}{ad}
\DeclareMathOperator{\Spec}{Spec}
\let\Im=\Imag
\let\Re=\Real
\DeclareMathOperator{\sgn}{sgn}
\DeclareMathOperator{\Ad}{Ad}
\DeclareMathOperator{\neigh}{neigh}
\def\indic{\operatorname{1\hskip-2.75pt\relax l}}
\newcommand\reallywidehat[1]{\arraycolsep=0pt\relax%
\begin{array}{c}
\stretchto{
  \scaleto{
    \scalerel*[\widthof{\ensuremath{#1}}]{\kern-.5pt\bigwedge\kern-.5pt}
    {\rule[-\textheight/2]{1ex}{\textheight}} 
  }{\textheight} %
}{0.5ex}\\           
#1\\                 
\rule{-1ex}{0ex}
\end{array}
}
\begin{document}

\title[Overdamped QNM]{Overdamped QNM for Schwarzschild black holes}

\author{Michael Hitrik}
\address{Department of Mathematics, University of California,
Los Angeles, CA 90095, USA.}
\email{hitrik@math.ucla.edu}


\author{Maciej Zworski}
\address{Department of Mathematics, University of California,
Berkeley, CA 94720, USA.}
\email{zworski@math.berkeley.edu}

\begin{abstract}
We show that the number of quasinormal modes (QNM) for Schwarz\-schild and Schwarzschild--de Sitter
black holes in a disc of radius $ r $ is bounded from below by $ c r^3 $, proving that the recent upper bound by J\'ez\'equel \cite{jez} is sharp. The argument is an application of a spectral asymptotics result for non-self-adjoint operators which provides a finer description of QNM, explaining the emergence of a distorted lattice and generalizing the lattice structure in strips described in \cite{saz} (see Figure \ref{f:qnm}). As a by-product we obtain an exponentially accurate Bohr--Sommerfeld quantization rule for one dimensional problems. The resulting description of QNM allows their accurate evaluation ``deep in the complex" where numerical methods break down due to pseudospectral effects (see Figure \ref{f:pse}).
\end{abstract}

\maketitle

\section{Introduction and statement of the results}

Quasinormal modes remain a topic of interest in physics and mathematics -- see for instance
the following recent references \cite{card} and \cite{gawa} respectively; one can find there
pointers to the vast literature on the subject. In this note we consider a purely mathematical question
of global lower  bounds on the number of quasinormal modes for Schwarzschild (S) or Schwarzschild--de Sitter (SdS)
black holes.
\begin{figure}
\includegraphics[width=16.5cm]{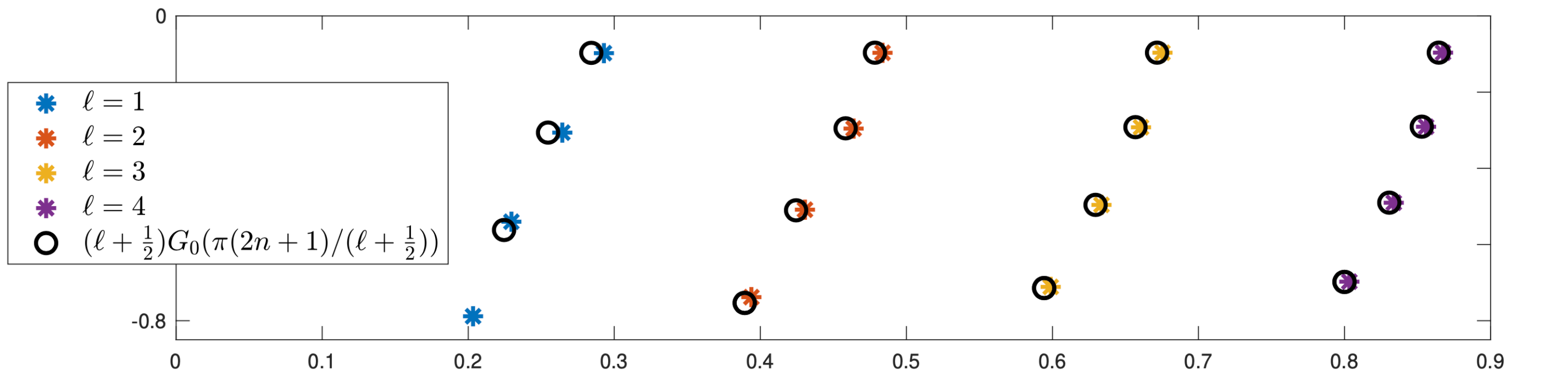}
\caption{\label{f:qnm}
A comparison of numerically computed QNM in \cite[Figure 6]{jan}
for the case of $ m = 1$, $ \Lambda = 0 $ and the leading term in the semiclas\-sical expression for QNM \eqref{eq:defG} given by
\eqref{eq:G0}. Even at large values of $ h = (\ell + \frac12)^{-1} = \frac23, \frac25, \frac27, \frac29$,
one sees a reasonable agreement as well as the emergence of the distorted lattice deeper in the complex
(as opposed to the lattice in strips described in S\'a Barreto--Zworski \cite{saz}).}
\end{figure}
Our motivation comes from a recent paper by J\'ez\'equel \cite{jez} who proved a global upper bound
for the number of quasinormal modes for SdS black holes: if we denote by
$ {\rm{QNM}}(m,\Lambda ) $ the set of quasinormal modes for an SdS black hole of mass $ m$ with
the cosmological constant $ \Lambda $ (see \S \ref{s:rewh}), then we have
\begin{equation}
\label{eq:jez}  |  {\rm{QNM}}(m,\Lambda ) \cap D ( 0 , r )  | \leq C(m, \Lambda ) r^3, \ \ \
0 < \Lambda < \frac{1}{9m^2}. \end{equation}
We recall from \cite{saz} that for $ \Lambda > 0 $, $ {\rm{QNM}}(m,\Lambda ) $ is a discrete set, while
for $ \Lambda = 0 $ (S black hole) the situation is more complicated and was described only recently by Stucker \cite{stuck}. However we have
\begin{theo}
\label{t:1}
For $ 0 < t \ll 1 $ and $ 0 <  \Lambda < 1/9m^2 $, let
$ A_t ( r ) := \{ \lambda : 1 \leq |\lambda | \leq r , \arg \lambda > -  t \}  $. Then,
there exist   $ c ( t, m, \Lambda ) > 0 $, such that, as $ r \to \infty $,
\begin{equation}
\label{eq:defN}  | {\rm{QNM}}(m,\Lambda ) \cap  A_t ( r ) |   =   c ( t, m , \Lambda ) r^3  + o (r^3 )
  \end{equation}
When $ \Lambda = 0 $ the asymptotic equality in \eqref{eq:defN} should be replaced by $ \geq $.
 \end{theo}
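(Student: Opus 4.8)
The plan is to recast the problem as one about the joint distribution of resonances of a one-parameter family of one-dimensional, non-self-adjoint, semiclassical operators, to describe that distribution by an \emph{exponentially accurate} Bohr--Sommerfeld quantization rule, and then to count lattice points. After the separation of variables recalled in \S\ref{s:rewh}, a QNM $\lambda$ of angular momentum $\ell$ is a resonance of $-\partial_x^2+V_\ell$, where $x$ is the Regge--Wheeler variable and $V_\ell=(\ell+\tfrac12)^2W_0+W_2$; here $W_0=f(r(x))/r(x)^2\ge 0$ has a single nondegenerate maximum at the photon sphere $r=3m$, decays exponentially at the two horizons when $\Lambda>0$, and, when $\Lambda=0$, exponentially at the event horizon but only like $x^{-2}$ at spatial infinity. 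Putting $h=(\ell+\tfrac12)^{-1}$ and $E=(h\lambda)^2$ (modulo lower order), this says that $E$ is a resonance of the barrier operator $P_\ell(h)=-h^2\partial_x^2+W_0+h^2W_2$ in the semiclassical limit $h\to 0$, and ${\rm QNM}(m,\Lambda)$ --- counted with multiplicities, in particular with the $(2\ell+1)$-fold multiplicity of the $\ell$-th angular sector, which is what will produce cubic rather than quadratic growth --- is the union over $\ell$ of the rescaled ($\lambda=h^{-1}\sqrt E$) resonance sets of the $P_\ell(h)$.

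The crux is the quantization rule. I would establish that, on a fixed complex neighborhood $\Omega$ of $[0,\max W_0]$, the resonances of $P_\ell(h)$ in $\Omega$ are exactly the solutions of
\begin{equation*}
 G(E,h)\in 2\pi h\bigl(\ZZ+\tfrac12\bigr)\quad\text{modulo } O(e^{-c/h}),
\end{equation*}
where $G(\cdot,h)$ is holomorphic on $\Omega$, has an asymptotic expansion $G(E,h)\sim G_0(E)+hG_1(E)+\cdots$, and $G_0(E)=\oint_{\gamma(E)}\sqrt{E-W_0(x)}\,dx$ is the complex action around the two (complex) turning points, i.e.\ the roots of $W_0(x)=E$. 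This is proved by an exact-WKB analysis across the single analytic barrier: one builds the outgoing WKB solutions on the two exterior regions, propagates through the complex turning points by the connection formulae, and reads the resonance condition off as the matching condition; real-analyticity of $W_0$ together with a Borel-summability argument then upgrades the remainder from $O(h^\infty)$ to $O(e^{-c/h})$, which is the exponentially accurate Bohr--Sommerfeld rule advertised in the abstract. The delicate points are the \emph{uniformity} of the construction over $\Omega$ --- once $\Im E\lesssim -1$ the operator is far from self-adjoint, its pseudospectrum is large, and naive resolvent bounds are useless --- and quantitative control of $\Im G_0$, which governs how deep the resonances lie; it is precisely a change in the Stokes geometry (new Stokes lines crossing) that obstructs enlarging $\Omega$ past $\arg\lambda=-t$, and this is why the statement is phrased on $A_t$.

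Granting the rule, the QNM of angular momentum $\ell$ are, up to $O(e^{-c(\ell+1/2)})$, the points $\lambda=(\ell+\tfrac12)\mu$ with $G_0(\mu^2)+O(h)\in 2\pi h(\ZZ+\tfrac12)$; since $G_0$ is holomorphic in $\mu$ and nonvanishing on the relevant set, this exhibits the QNM as a holomorphic distortion of the standard lattice --- reducing near the real axis to the lattice in strips of \cite{saz} and bending away from it deeper in the complex plane, as in Figure \ref{f:qnm} --- with leading term $\lambda_{n,\ell}\approx(\ell+\tfrac12)\,Z\!\bigl(\tfrac{n+1/2}{\ell+1/2}\bigr)$, $Z$ the local inverse of $\mu\mapsto\tfrac{1}{2\pi}G_0(\mu^2)$. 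One then counts, with multiplicity $2\ell+1$, the pairs $(\ell,n)$ with $\lambda_{n,\ell}\in A_t(r)$. Rescaling $(\ell+\tfrac12,\,n+\tfrac12)=r(\xi,\eta)$ turns the conditions defining $A_t(r)$ into membership of $(\xi,\eta)$ in an explicit bounded region $\mathcal R_t\subset(0,\infty)^2$ cut out by $G_0$ and $t$ (bounded because the admissible ratios $(n+\tfrac12)/(\ell+\tfrac12)$ are bounded: for a large ratio the resonance has left both $\Omega$ and the sector $\{\arg\lambda>-t\}$), and a standard lattice-point estimate then gives
\begin{equation*}
 \bigl|{\rm QNM}(m,\Lambda)\cap A_t(r)\bigr|=r^3\!\!\iint_{\mathcal R_t}\!\!2\xi\,d\xi\,d\eta+o(r^3)=:c(t,m,\Lambda)\,r^3+o(r^3),
\end{equation*}
with $c>0$. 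The $o(r^3)$ absorbs the $O(e^{-c/h})$ and $O(h)$ corrections --- each displaces a given $\lambda_{n,\ell}$ by $o(1)$ and so reshuffles only $o(r^3)$ points near $\partial A_t(r)$ --- together with the finitely many small $\ell$ (contributing $O(r)$) and the lattice-point discrepancy.

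For $\Lambda=0$ one gets only the lower bound. There $W_0$ decays like $x^{-2}$ at infinity, the resolvent continues to a logarithmic cover, and ${\rm QNM}(m,0)$ is the more delicate set of \cite{stuck}, for which the clean enumeration above is not available; but the quantization rule of the second step is local near the photon-sphere barrier and insensitive to the tail, so it still produces, for every large $\ell$, the family $\lambda_{n,\ell}$, and a perturbation argument --- truncating the $x^{-2}$ tail to exponential decay and comparing by Rouch\'e, or, equivalently, letting $\Lambda\downarrow 0$ --- shows that these approximate resonances lie within $o(1)$ of genuine, distinct QNM, whence $\bigl|{\rm QNM}(m,0)\cap A_t(r)\bigr|\ge c(t,m,0)\,r^3+o(r^3)$. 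The main obstacle throughout is the second step: proving the exponentially accurate quantization rule \emph{uniformly} over a region of the spectral plane reaching well away from the self-adjoint regime, where resolvent norms are exponentially large; once that is in hand, the reduction, the lattice-point count, and the $\Lambda=0$ perturbation are comparatively routine.
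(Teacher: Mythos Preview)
Your overall architecture---reduce to a semiclassical barrier problem, establish an exponentially accurate quantization rule, then count lattice points with multiplicity $2\ell+1$---matches the paper's, and your counting step is essentially the same computation that yields \eqref{eq:const}. The substantive difference is in how you obtain the quantization rule. You propose exact WKB: build outgoing solutions, connect across the complex turning points, and upgrade to exponential accuracy by Borel summability. The paper instead passes to the FBI side, applies Vey's holomorphic Morse lemma (Theorem~\ref{t:vey}) to put the principal symbol in the form $g(z\zeta)$, quantizes the resulting canonical transformation by an analytic Fourier integral operator, averages to the normal form $G(2zhD_z+h;h)$ (Propositions~\ref{p:ave}, \ref{p:norm}), and reads off the spectrum via a Grushin problem. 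The paper explicitly chooses this route over exact WKB, noting that it gives a geometric interpretation of $G_0$ as the inverse of a holomorphic action and that the analytic-symbol machinery is general rather than specialized to one complex dimension.

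There is a real gap in your sketch, however. You ask for the quantization rule uniformly on a neighbourhood of $[0,\max W_0]$, but the only energies that matter for $A_t(r)$ are those near the barrier top $\max W_0$: \S\ref{s:free} shows that away from the critical value the levels are non-trapping and there are no resonances in the relevant sector. Near $E=\max W_0$ the two turning points \emph{coalesce}, and the standard connection-formula picture you invoke breaks down; this is precisely the regime where uniform exact-WKB requires a separate local model (parabolic-cylinder type) and a matching argument, none of which you address. The paper's use of Vey's lemma is exactly what handles this degeneration cleanly: it replaces the turning-point analysis by a single normal form at the nondegenerate critical point of the symbol. Your Borel-summability remark is also too quick---exponential accuracy here is obtained not by resumming a WKB series but by working throughout in the analytic-symbol class and controlling remainders in $H_\Phi$-norms (Propositions~\ref{p:warm_up}, \ref{p:new}). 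Finally, for $\Lambda=0$ the paper does not claim a Rouch\'e or $\Lambda\downarrow 0$ perturbation; it simply observes (Remark at the end of \S\ref{s:free}) that discreteness of ${\rm QNM}(m,0)$ in the needed region is not established, so Theorem~\ref{t:2} only yields the lower bound.
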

In particular, this shows that J\'ez\'equel's estimate is sharp. Theorem \ref{t:1} is a consequence of a more precise result about the distribution of quasinormal modes obtained from a one dimensional spectral result in \S \ref{s:gen}. Theorem \ref{t:2} below generalizes the result of S\'a Barreto--Zworski \cite{saz}, which was based on the work of Sj\"ostrand \cite{Sj86} (applied in the special 1D case).   The constant $ c ( t, m , \Lambda ) $ can be computed using the function $ G_0 $ appearing in Theorem \ref{t:2}  -- see \eqref{eq:const}.

A finer description of quasinormal modes is based on the spectral analysis
of the complex scaled (see \S \ref{s:cs})
Regge--Wheeler potential (see \S \ref{s:rewh}). The main point is that for each angular momentum $ \ell$ we obtain the description of quasinormal modes,
$ {\rm QNM}_\ell ( m , \Lambda ) $,  with $ \Im \lambda \geq - t \ell $, $ 0 < t \ll 1 $
and with $ e^{ - c |\lambda| } $ accuracy (though that level of finery is irrelevant for Theorem \ref{t:1}).
In particular, this gives
a mathematical description of the deformed lattice seen in numerical calculations.
For smaller neighbourhoods of the real axis ($
\Im \lambda > - |\lambda|^{1-\delta} $ for any $ \delta > 0 $), Iantchenko \cite{Ia} used the
results of Kaidi--Kerdelhu\'e \cite{N} to describe quasinormal modes with $ \mathcal O ( \langle \lambda\rangle^{-\infty} )$ accuracy.

\begin{theo}
\label{t:2}
For $ 0 \leq \Lambda < 1/9 m^2  $ and  $ \ell \gg 1 $,  the quasinormal modes (of multiplicity
$ 2 \ell + 1 $), with $ |\lambda | > c_0 \ell $ (for any positive $ c_0$) and $ \arg \lambda > - \theta $,
$ 0 < \theta \ll 1 $, are given by
\begin{equation}
\label{eq:defG}
\begin{gathered}
 \lambda_{\ell, n } = h^{-1}  G ( 2 \pi ( n + \tfrac12 ) h;h ),  \ \  h:=  ( \ell + \tfrac12 )^{-1}  , \ \ \ n\in \mathbb N,\ \ \
 G ( x; h ) \sim \sum_{ j=0}^\infty G_j ( x ) h^j ,
\end{gathered} \end{equation}
where the functions $ G_j $ are holomorphic near $ 0 $ and satisfy bounds $ |G_j | \leq A^{j+1} j! $, $ A > 0$
(that is, $ G $ is an analytic symbol),  $ G_0 $ is the inverse of a complex action (see Theorem
{\rm \ref{t:action}})
and $ G_1 \equiv 0 $.
\end{theo}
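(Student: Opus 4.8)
The plan is to derive Theorem~\ref{t:2} from a one--dimensional semiclassical spectral result (the content of \S\ref{s:gen}) by reducing quasinormal modes to an eigenvalue problem for a complex scaled Schr\"odinger operator, and then to read off the structure of $ G $ by inverting an exponentially accurate Bohr--Sommerfeld functional. For the reduction I would use the separation of variables and the tortoise coordinate of \S\ref{s:rewh}: for a fixed angular momentum $ \ell $ the set $ {\rm QNM}_\ell(m,\Lambda) $ is the set of resonances of $ -\partial_x^2 + V_\ell(x) $ on $ \RR $, the multiplicity $ 2\ell + 1 $ being that of the $ \ell $-th eigenspace of the Laplacian on $ S^2 $. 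Since $ \ell(\ell+1) = h^{-2} - \tfrac14 $, multiplying $ (-\partial_x^2 + V_\ell) u = \lambda^2 u $ by $ h^2 $ puts it in the form
\[
 P(h) u = E u, \qquad P(h) = -h^2\partial_x^2 + W(x) + h^2 \widetilde W(x), \qquad E := (h\lambda)^2,
\]
in which, crucially, \emph{no} $ \mathcal O(h) $ term appears: here $ W $ is the $ \ell $- and $ h $-independent barrier ($ W(x) = (1 - 2m/r - \Lambda r^2/3)/r^2 $ in the tortoise variable), a positive bump with a single non-degenerate maximum $ W_0 = W(x_0) $, holomorphic in a conic neighbourhood of $ \RR $ and decaying at $ \pm\infty $ (exponentially at both ends if $ \Lambda > 0 $; exponentially at the horizon and like $ x^{-2} $ at spatial infinity if $ \Lambda = 0 $). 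The region $ \{\arg\lambda > -\theta,\ |\lambda| > c_0\ell\} $ then corresponds to $ E $ in a fixed bounded open set $ \Omega $ with $ \overline{\Omega} \subset \{|E| > c_0^2,\ |\arg E| < 3\theta\} $, in particular bounded away from the value $ E = 0 $ at which a turning point escapes to infinity.

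Next, complex scaling near $ \pm\infty $ (\S\ref{s:cs}) identifies the resonances in $ \Omega $ with the eigenvalues in $ \Omega $ of a non-self-adjoint operator $ P_\Gamma(h) $ on a contour $ \Gamma $, which agrees with $ -h^2\partial_x^2 + W + h^2\widetilde W $ near $ x_0 $ and, after conjugation by an $ h $-independent nonvanishing factor, remains of Schr\"odinger type, i.e.\ has vanishing subprincipal symbol. I would then invoke the one--dimensional result of \S\ref{s:gen}: since $ W $ is analytic with a non-degenerate barrier, the eigenvalues of $ P_\Gamma(h) $ in $ \Omega $ are exactly the solutions of the exponentially accurate complex Bohr--Sommerfeld condition
\[
 F(E;h) = 2\pi(n+\tfrac12) h + \mathcal O(e^{-c/h}), \qquad n \in \NN,
\]
where $ F(\cdot\,;h) \sim \sum_{k\ge 0} F_k(\cdot) h^{2k} $ is an analytic symbol, $ F_0 $ is the complex action of Theorem~\ref{t:action} (a biholomorphism near $ W_0 $, with $ F_0(W_0) = 0 $ and $ F_0'(W_0) \neq 0 $), and $ F_1 \equiv 0 $ --- the last point because $ P(h) $ carries no $ \mathcal O(h) $ term and $ P_\Gamma(h) $ has vanishing subprincipal symbol, so that the expansion of $ F $ proceeds in powers of $ h^2 $.

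Finally, because $ F_0 $ is a local biholomorphism, the analytic symbol calculus --- inversion of an analytic symbol with nonvanishing leading term, and composition with $ z\mapsto\sqrt z $, both preserving the Gevrey-$ 1 $ bounds --- yields an analytic symbol $ G(x;h) \sim \sum_j G_j(x) h^j $ with $ G_j $ holomorphic near $ 0 $ and $ |G_j| \le A^{j+1} j! $, such that $ E $ solves $ F(E;h) = x $ iff $ h^{-1}\sqrt E = h^{-1} G(x;h) $; explicitly $ G_0 = (F_0^{-1})^{1/2} $, the square root of the inverse of the complex action, and since only even powers of $ h $ enter $ F $ they enter $ E = F^{-1} $ and its square root as well, whence $ G_1 \equiv 0 $. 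Setting $ x = 2\pi(n+\tfrac12)h $ and absorbing $ h^{-1}e^{-c/h} $ into $ e^{-c'|\lambda|} $ gives \eqref{eq:defG}, with multiplicity $ 2\ell+1 $ as above.

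The hard part is the one--dimensional result of \S\ref{s:gen}: showing that the WKB connection across the (generally complex, and for $ E $ near $ W_0 $ nearly coalescing) turning points $ x_\pm(E) $ can be performed with $ \mathcal O(e^{-c/h}) $ rather than merely $ \mathcal O(h^\infty) $ precision, uniformly for $ E\in\Omega $; this is where analyticity of $ W $ is essential and where the exponential accuracy of the quantization rule is actually earned. A secondary difficulty is verifying the analyticity and decay hypotheses on $ W $ uniformly in $ \Lambda\in[0,1/9m^2) $: for $ \Lambda = 0 $ the $ x^{-2} $ tail at spatial infinity forces a more delicate complex deformation near $ +\infty $ (and is ultimately why the count in Theorem~\ref{t:1} drops to a lower bound; Theorem~\ref{t:2}, staying in $ |\lambda| > c_0\ell $ and hence away from the escaping turning point at $ E = 0 $, is unaffected). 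Finally one must check the routine but necessary bookkeeping that the $ \mathcal O(h^2) $ term $ \widetilde W $ and the conjugation to Schr\"odinger form influence $ F_k $ and $ G_k $ only for $ k\ge 2 $, leaving $ G_0 $ and $ G_1 $ exactly as claimed.
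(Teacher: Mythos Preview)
Your high-level reduction---\S\ref{s:rewh} to pass to the complex-scaled semiclassical operator, then Theorem~\ref{t:gen}, then a square root---matches the paper. Two substantive differences and one omission are worth noting.

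The paper does \emph{not} prove Theorem~\ref{t:gen} by exact WKB or turning-point connection formulas, as you speculate in your ``hard part''. It constructs $G$ directly: an FBI transform, Vey's Morse lemma (appendix), the analytic averaging of Proposition~\ref{p:ave}, and Lemma~\ref{l:F2f} reduce $P$ microlocally to $G(2zh\partial_z+h;h)$, whose spectrum on monomials $z^n$ is explicit; a Grushin problem (Propositions~\ref{p:warm_up} and~\ref{p:new}) then upgrades this local normal form to a global spectral statement, which is the genuine difficulty since in the non-self-adjoint setting quasimodes alone do not force eigenvalues. Your Bohr--Sommerfeld formulation via an action functional $F$ and subsequent inversion is equivalent (cf.\ Proposition~\ref{p:inverse} and the paragraph preceding it), but the paper never builds $F$.

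Your argument for $G_1\equiv 0$ via a full even-power expansion of $F$ overclaims: the paper only establishes $G_1\equiv 0$ and explicitly leaves $G_{2j+1}\equiv 0$ as an expectation (Remark after Theorem~\ref{t:2}). Its argument is that with vanishing subprincipal symbol the FIO $A_0$ in Proposition~\ref{p:norm} can be chosen so that conjugation already gives a normal form modulo $\mathcal O(h^2)$ (last paragraph of \S\ref{s:gen}, following \cite[(a.3.11)]{HelSj}). The weaker claim $F_1\equiv 0$, which does propagate through analytic-symbol inversion and the square root, would suffice for you. Finally, you omit the resonance-free-region argument of \S\ref{s:free}: the region $\{|\lambda|>c_0\ell,\ \arg\lambda>-\theta\}$ is not bounded in $E=(h\lambda)^2$, and Theorem~\ref{t:gen} applies only in a fixed neighbourhood of the critical value, so one must first use non-trapping (see \eqref{eq:resfr}--\eqref{eq:c0}) to exclude the complement.
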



\noindent
{\bf Remark.} Just as in the self-adjoint case for Schr\"odinger operators
with $ h$-independent potentials (see \cite{CdV} for a clear
presentation) we expect $ G_{ 2j + 1} \equiv 0 $  but as our proofs work for more general operators we restrict ourselves to showing
that $ G_1 \equiv 0 $. That holds for all operators with $ 0 $ subprincipal symbol.

The definition of $ G_0 $ as the inverse of a complex action (see Theorem \ref{t:action}) gives
\begin{equation}
\label{eq:G00}    G_0 ( x ) = \frac{ ( 1 - 9 \Lambda m^2)^{\frac12} }{ 3 \sqrt 3 m } \left( 1 - \frac{i x}{ 2 \pi }  + \mathcal O (x^2 ) \right) , \end{equation}
which describes the lattice of QNM from \cite{saz}. The constant in \eqref{eq:defN} is given by
\begin{equation}
\label{eq:const}  c ( t, m, \Lambda ) =  \pi^{-1} ( 1 - 9 \Lambda m^2)^{-\frac32} 3^{\frac72} m^3 | \{ x \in \mathbb R_+ :
\arg G_0 ( x ) > - t \} |,
\end{equation}
where $ | E| $ is the length of an interval $ E $. For $ \Lambda = 0 $ and $ m = 1 $ additional terms in the expansion of $ G_0 $ are calculated by
Mathematica as follows:
\begin{equation}
\label{eq:G0}
G_0 ( x ) = \frac{1}{3\sqrt{3}} - \frac{i  x}{6 \sqrt{3} \pi} - \frac{5  x^2}{432\sqrt{3}\pi^2}
-\frac{235 i x^3}{93312\sqrt{3}\pi^3} + \frac{17795  x^4}{40310784 \sqrt{3}\pi^4} +
\cdots
\end{equation}

Despite the mathematical assumption on the smallness of the semiclassical parameter $ h$, the asymptotic formula (\ref{eq:defG})
is effective starting with $ \ell = 1$ -- see Figure \ref{f:qnm}. The terms $ G_j $ are in principle computable,
and using different methods and conventions similar expansions have been proposed in the physics literature as early as  \cite{iywi}. For an interesting recent physics perspective, see \cite{seiwi} and reference given there. Here we remark that as $ G $  in \eqref{eq:defG} is an analytic symbol,
the usual truncation of the expansion at $ j = [ ( eAh)^{-1}] $ approximates $ \lambda_{n,\ell} $ up to
errors of the size $ e^{ - c_0 \ell }$ -- see \S \ref{s:anas}.

\begin{figure}
\includegraphics[width=16cm]{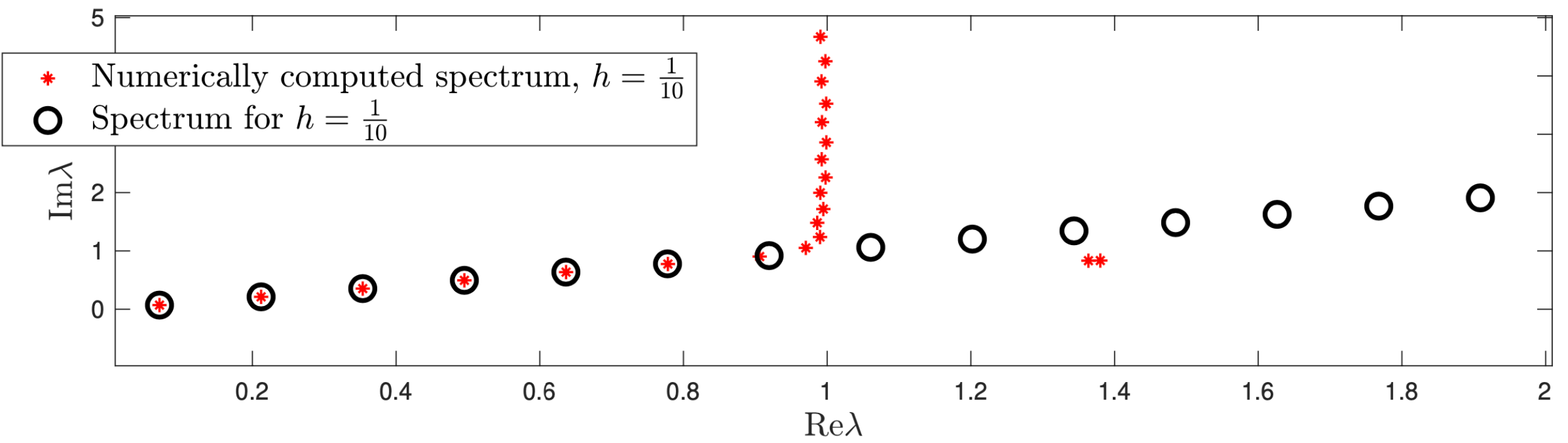}
\includegraphics[width=16cm]{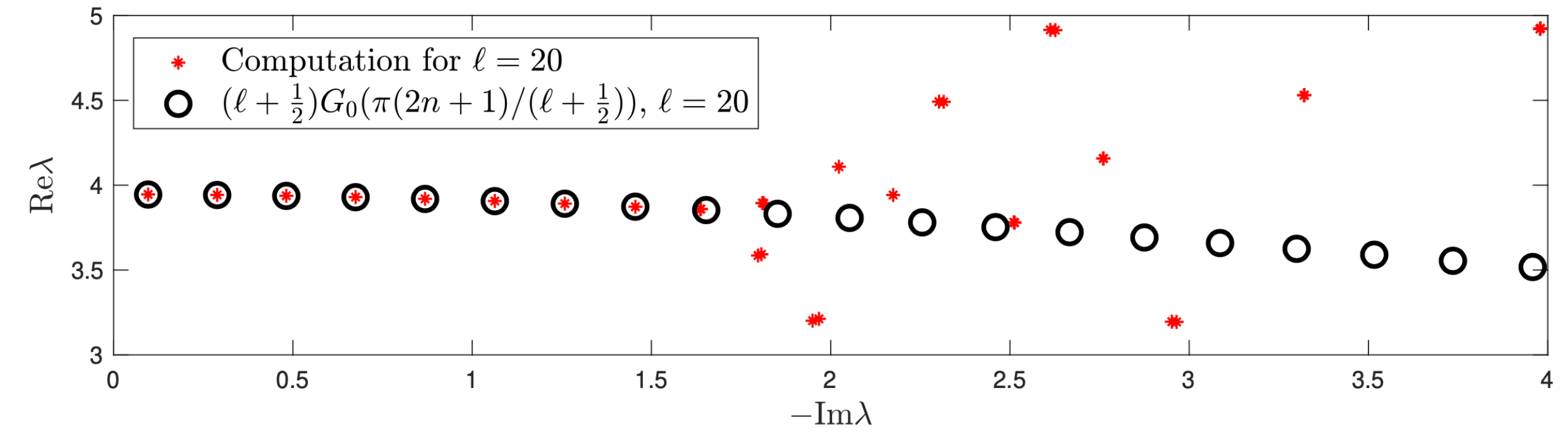}
\caption{\label{f:pse}
The top plot shows the spectrum of the rotated harmonic oscillator $ -h^2 \partial_x^2 + i x^2 $
used by Davies and Embree--Trefethen \cite[I.5]{trem} to illustrate spectral instability for non-normal
pseudodifferential operators. The eigenvalues (explicitly given by $ e^{\pi i /4 } h ( 2n + 1 )) $
are com\-pared to eigenvalues computed numerically
using the basis of the first 151 Hermite functions (eigenfunctions of $ - h^2 \partial_x^2 + x^2 $).
This illustrates the fragility of eigenvalues ``deep in the complex".
The bottom plot shows a calculation of QNM for $ m = 1 $, $ \Lambda = 0 $ and $ \ell = 20 $ using the Mathe\-matica code \cite{jan}. We again see divergence from the (mathematically) established result
(modulo the issues of the size of the region).}
\end{figure}

Theorem \ref{t:2} is an immediate consequence of a more general semiclassical Theorem \ref{t:gen}  about
spectra of analytic non-self-adjoint operators satisfying suitable ellipticity conditions near infinity. It differs from the
results of Hitrik \cite{Hi04} by changing the class of operators and by providing analyticity of the expansion.
Rather than using exact WKB methods specialized to one (complex) dimensional holomorphic equations,
we use general methods of analytic microlocal theory \cite{HiSj15}, \cite{sam} and the essential one dimensional
feature is the use of Vey's theorem \cite{vey} (holomorphic and symplectic version of Morse lemma) in (phase space) dimension two. It then provides a geometric interpretation of $ G_0 $ in (\ref{eq:defG}) as the inverse
of a holomorphic action.

For regular energy levels, the recent thesis by Duraffour \cite{Dur} provides Bohr--Sommer\-feld quantization rules
with analytic symbols in the same way as is done in Theorems \ref{t:2} and \ref{t:gen} for fixed neighbourhoods of critical levels. We also remark that using methods developed by Hagedorn and Joye, Toloza \cite{tolo} obtained an exponentially accurate description of a fixed number of lowest eigenvalues for Schr\"odinger operators with real analytic potentials having non-degenerate minima.

\noindent
{\bf Outline of the proof.} In \S \ref{s:rewh} we use the complex scaling method
(see \cite[\S 4.8]{res} for pointers to the literature) as presented in \cite[\S 5]{SjL}, to reduce the
problem of finding quasinormal modes to finding eigenvalues of an analytic non-self-adjoint
operator $ Q ( x, h D_x; h ) $ which is elliptic at $ 0 $ energy (in the semiclassical sense) except at $ ( x, \xi ) = (0,0) $
(see Lemma \ref{l:ell}). Near that point, the principal symbol of this operator is (essentially) given by
\begin{equation}
\label{eq:qxx}  q ( x, \xi ) =  \mu ( \xi^2 +x^2 ) + \mathcal O ( |x|^3 + |\xi|^3 ) , \ \ \ \mu := - i .
\end{equation}
 The result then follows from
a general result presented in \S \ref{s:gen} which holds under more general assumptions \eqref{eq:assH}.
We present the main steps in the case of \eqref{eq:qxx}, avoiding various technical issues which
require subtle modifications. The general strategy follows Helffer--Sj\"ostrand \cite{HelSj} (self-adjoint case)
and Hitrik \cite{Hi04} (description of the spectrum mod $ \mathcal O ( h^\infty ) $ and slightly stronger assumptions) but we try to make all the arguments as explicit as possible.

To that end, \S \ref{s:anas}
presents the needed results about analytic symbols (an example of one is given by
$ G ( x; h ) $ in \eqref{eq:defG}) and their quantization as formal pseudodifferential operators. The
key results are Proposition \ref{p:ave} which provides a variant of the method of averaging (with
\cite[Proposition 3.2]{HitSj3a} being a crucial component) and Lemma \ref{l:F2f} which, in a suitable
sense, writes quantization of $ g ( x^2 + \xi^2; h ) $ as $ G ( (hD_x)^2 +x^2; h ) $ (this ultimately allows
the quantization rule akin to \eqref{eq:defG}).

All the analysis is ultimately done ``on the FBI transform side", that is, after applying
an FBI transform
\begin{gather*}  T_h u ( z ) := \int e^{ \frac i h \varphi ( z, x ) } u ( x ) dx , \ \ \ T : L^2 ( \mathbb R ) \to
H_{\Phi_0} ( \mathbb C ) , \ \ \ \Phi_0 ( z ) := \tfrac12 |z|^2  \\
\varphi ( z, x ):= i\left(\tfrac{1}{2} z^2- \sqrt{2} zx + \tfrac{1}{2} x^2 \right), \ \
 H_{\Phi_0} ( \mathbb C ) := L^2 ( \mathbb C, e^{ -  |z|^2 /h  } dL ( z ) ) \cap \mathscr O ( \mathbb C ) .
\end{gather*}
As reviewed in \S \ref{s:global} we then have
\[  T_h Q ( x, h D_x; h  )u = P ( z, h D_z; h ) T_h u, \ \ \  P ( z, h D_z; h ) : H_{\Phi_0}(\mathbb C) \to H_{\Phi_0} (\mathbb C) . \]
(We neglect here the Sobolev orders, see \eqref{eq2.6.1}.) The operator $ P = P(z,hD_z;h)$ quantizes $ P ( z, \zeta; h ) $,
$ ( z, \zeta ) \in \Lambda_{\Phi_0 } := \{ ( z, (2/i) \partial_z \Phi_0 ( z ) ); z\in \mathbb C \} $ which is the image of
$ T^* \mathbb R $ under the complex canonical transformation,
$ ( x, - \varphi'_x(z,x)) \mapsto ( z , \varphi'_z(z,x) ) $, associated to $ T_h $. From \eqref{eq:qxx}
and the standard (and easy to check) fact
\[   T_h ( ( h D_x)^2 + x^2 ) u  = ( 2 zh \partial_z + h ) T_h u , \]
we see that the principal symbol of $ P $ is given by
\[ p ( z, \zeta ) = 2 \mu i z \zeta + \mathcal O ( |z|^3 +
|\zeta|^3 ), \ \ \ \  ( z, \zeta ) \in  \Lambda_{\Phi_0} . \]
We also note that the spectra of
$ P $ and $ Q $ on $ H_{\Phi_0}(\mathbb C) $ and $ L^2 ( \mathbb R) $, respectively, are the same.  The fact that
$ Q ( x, \xi; h ) $ extends to a complex neighbourhood of $ T^* \mathbb R $ (the key analyticity assumption)
shows that the operator $ P ( z, h D_z; h ) $ is bounded, uniformly as $h\rightarrow 0^+$, on $ H_{\Phi}(\mathbb C)$ where $ \Phi $ is a compactly supported perturbation of $ \Phi_0 $. Since there is no change near infinity, the spaces $ H_\Phi(\mathbb C) $ and
$ H_{\Phi_0}(\mathbb C) $ are the same but with different ($h$-dependent) norms. All of this is presented in complete
detail in \S \ref{s:global}.

The key component now is Vey's Morse Lemma \cite{vey}, \cite{CV} which, in our special case, is proved in the appendix.  It gives a holomorphic canonical transformation, $ \kappa $,  defined near $ 0 \in \mathbb C^2 $, such that $ \kappa^* p ( z, \zeta ) = g(2iz \zeta)$. To quantize it we need a theory of {\em local}
Fourier integral operators on the FBI transform side and that is reviewed in \S \ref{s:local} (necessarily without
all the proofs and relying on specific references to \cite{sam}, \cite{HiSj15}, and \cite{leb}). In this outline
we will {\em pretend} that $ \kappa $ is defined globally. Then
\begin{equation}
\label{eq:kappa0}     \kappa ( \Lambda_{\Phi_0 } ) = \Lambda_{\Phi } , \ \ \kappa^* p ( z, \zeta ) = g(2iz \zeta), \end{equation}
for some strictly convex weight $ \Phi $ which is close to $ \Phi_0 $ (only the behaviour near $ 0 \in \mathbb C $ ultimately matters). The canonical transformation $\kappa$ is {\em quantized} by a Fourier integral operator $ A : H_{\Phi_0 } \to H_\Phi $ (see Proposition \ref{p:AB}) which comes with a
 (microlocal) inverse $ B : H_{\Phi } \to H_{\Phi_0 }$. In view of \eqref{eq:kappa0}, the principal symbol of
 $ B P A: H_{\Phi_0 } \to H_{\Phi_0 } $ is given by $g(2iz \zeta)$. Using Proposition \ref{p:ave}
 and Lemma \ref{l:F2f} we can modify $ A $ and $ B $ so that, modulo exponentially small errors and
 microlocally near $ 0 \in \Lambda_{\Phi_0 }  $, $ B P A $ is equivalent to $
 G (  2 z h \partial_z + h; h ) $, for an analytic symbol $ G = g + \mathcal O(h)$ -- see Proposition \ref{p:norm} for the
 precise statement.  If these were global, we would immediately conclude that the spectrum of $ P $ and hence of $ Q $, is given by $ G (( 2 n + 1) h; h ) $. In practice, localization
 arguments are needed: since we are in a non-self-adjoint situation, construction of quasi-modes does not
 suffice for the existence of eigenvalues. That is carried out in \S \ref{s:gen} using a Grushin problem obtained
 from eigenstates of $ 2 z h \partial z+ h $, $ z^n $, $ n \leq c_0/h  $. This gives an exponentially accurate
 description of the spectrum of $ Q $ in a fixed neighbourhood of $ 0 $. In the special case of the
 (complex scaled) Regge--Wheeler operator we obtain Theorem \ref{t:2} and consequently Theorem \ref{t:1}.

\noindent
{\bf Acknowledgements.} We would like to thank Justin Holmer and  Zhongkai Tao for their generous help with Mathematica which led to the calculation \eqref{eq:G0} and to Figures~\ref{f:qnm}, \ref{f:pse},  Georgios Mavrogiannis whose question about global lower bounds for QNM provided additional motivation for this project, and San V\~u Ng\d{o}c for informing us of \cite{Dur}. The first named author is also grateful to Reid Johnson for very helpful discussions. The second named author gratefully acknowledges partial support by the NSF grant DMS-1901462. Partial support of both authors by the Simons Foundation under a ``Moir\'e Materials Magic" grant is also most gratefully acknowledged.

\section{Regge--Wheeler potentials}
\label{s:rewh}

In this section we review the definition of quasinormal modes for Schwarzschild and  Schwarzschild--de Sitter black holes using the method of complex scaling.

\subsection{Complexified Regge--Wheeler potentials}
\label{s:complexi}
We recall the Regge--Wheeler potential defined for $ 0 \leq \Lambda < 1/9 m^2$,~\cite[\S 4]{saz},
\begin{equation}
\label{eq:VRW}
\begin{gathered} V_\ell ( x ) := \alpha (r)^2r^{-2} ( \ell ( \ell + 1 ) + r \partial_r ( \alpha ( r)^2 ) ),  \\ x' ( r ) = \alpha(r)^{-2} , \ \ \alpha( r)^2 :=  1- \frac{2m}r - \tfrac13 \Lambda r^2.
\end{gathered} \end{equation}
We recall also that
\[  \alpha ( r)^2 = \left\{ \begin{array}{ll} \ \ \ \ \ \ \ \ \ \ \ r^{-1} ( r - 2m ), & \Lambda = 0 , \\
\tfrac13 \Lambda^2 r^{-1} ( r - r_0 ) ( r - r_- ) ( r_+ - r ), & \Lambda > 0 , \end{array} \right. \]
where $ r_0 < 0 < r_- < r_+ $. We have $ x : ( 2m , \infty ) \to (-\infty , \infty ) $ ($\Lambda = 0 $)
and $ x : ( r_- , r_+ ) \to ( -\infty, \infty ) $ ($ \Lambda > 0 $) and the transformations are
unique up to additive constants. Explicitly,  for $ \Lambda = 0 $,
\begin{equation}
\label{eq:xLa0}
\begin{gathered}
x ( r ) = r + 2m \log ( r - 2m ) ,  \ \ 2m < r < \infty,   \\
r ( x ) := 2m + 2m W_0 \left(  e^{ ({x}/{2m}) - 1 }/2m  \right),
\end{gathered}
\end{equation}
where $ W_0 $ is the first branch of the Lambert W-function,
$ W_0 ( t e^t ) = t$,   $ t > 0 $. For $ \Lambda > 0 $,
\[ x( r ) = a_0 \log ( r -r_0 ) + a_- \log ( r - r_- ) + a_+ \log ( r_+ - r ) ,  \ \ r_- < r < r_+ ,
\]
where $  \mp a_\pm, a_0  > 0 $ and we choose the principal branch of the logarithm (real on the positive real axis).

The transformation $ r \mapsto x ( r ) $ is holomorphic and conformal for $ r $ in open neighbourhoods of $ ( 2m, \infty ) \subset
\mathbb C \setminus \{ 2m \}  $ and
$ ( r_- , r_+ ) \subset \mathbb C \setminus \{r_-, r_+  \}   $, for $\Lambda = 0$ and $\Lambda > 0$, respectively, and we have a unique inverse $ r = r ( x ) $, for $ x $ in a neighbourhood of $ \mathbb R \subset \mathbb C $.
For $ \Lambda = 0 $ and $ |r | \gg 2m $,
\[
x ( r ) = r + 2m\log r + 2m\log \left( 1 - \frac{2m}{r}\right) = r + 2m\log r - \frac{4m^2}{r} + \cdots , \ \
   | \arg r | \leq \theta < \pi ,
\]
which produces a holomorphic inverse
\[ r(x) = x - 2m \log x + \sum_{ k=1 }^\infty \sum_{\ell = 0}^\infty c_{k\ell} \frac{ \log^\ell x }{ x^k }, \ \
|x| \gg 1 , \ \  | \arg x | \leq \theta < \pi .  \]

Let $ \Lambda_{\underline r } $ denote the logarithmic plane branched at $ \underline r $, that is, the Riemann surface of $ r  = f( z ) :=  \underline{r} + \exp z $, $ f:  \mathbb C  \to \Lambda_{\underline {r}} $ bi-holomorphically. Suppose $ \pi : \Lambda_{\underline r}
\to \mathbb C \setminus 0 $ is the natural projection and consider $
D_{\rho}  \subset \Lambda_{\underline r} $ defined by $ | \pi ( r ) | < \rho $.

If we take $ \underline r = 2m $ and $ \Lambda = 0 $, then
for $ \rho $ sufficiently small $ x : D_\rho \to \Omega_\rho \supset H_\rho := \{ x : \Re x < c_0 \log \rho \} $
is a bi-holomorphic map. We then have an inverse $ r = r ( x ) $ defined in $ H_\rho$.
We note that $ \alpha ( r )^2 r^{-2} $ is holomorphic in $ D_\rho $ and goes to $ 0 $ as
$ |r | \to 0 $. We conclude that
\[   \alpha ( r ( x ) )^2 r ( x )^{-2}  \in \mathscr O ( H_\rho ) , \ \ \
\alpha ( r ( x ) )^2 r ( x )^{-2}   \to 0 , \ \ \  \Re x \to - \infty. \]
(In fact the decay is exponential though this is not relevant in this paper.) The same property
is valid for $ r^{-1}  \partial_r ( \alpha ( r )^2 ) |_{ r = r ( x ) }$.

The same arguments work for $ \Lambda > 0 $ where we consider the logarithmic planes $ \Lambda_{r_\pm }$ with $ r_\pm $ corresponding to $ x= \pm \infty $. We summarize these results in the following lemma, see also~\cite[Proposition 4.1]{saz},
\begin{lemm}
\label{l:RWC}
In the notation of \eqref{eq:VRW} define the semiclasssical Regge--Wheeler potential as
\begin{equation}
\label{eq:SRW}
W ( x, h ) := ( \ell + \tfrac12)^{-2} V_\ell ( x ) = W_0 ( x ) + h^2 W_1 ( x ) , \ \ \ h = ( \ell + \tfrac12)^{-1} .
\end{equation}
Then for $ 0 \leq \Lambda < 1/9 m^2 $, the potentials $ W_j ( x )$  have holomorphic
extensions to $ \Omega_{\theta, \delta} := \{ z: |\Im z | < \delta , |z| < 2/\delta_0  \} \cup \{ z:  |\arg ( \pm z ) | < \theta,
 |z| > 1/\delta_0 \}$, and
\begin{equation}
\label{eq:RWC}
W_j \in \mathscr O ( \Omega_{\theta, \delta } ) , \ \  W_j ( z ) \to 0 , \  | z| \to \infty, \ z \in
\Omega_{\theta, \delta} ,
\end{equation}
where  $ 0 < \theta <  \pi/2 $ and $ 0 < \delta < \delta_0 $ (for some $ \delta_0 > 0$).
\end{lemm}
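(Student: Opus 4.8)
The plan is to reduce Lemma \ref{l:RWC} to the analyticity and decay statements already established in the body of \S \ref{s:complexi} for the two building blocks $\alpha(r(x))^2 r(x)^{-2}$ and $r(x)^{-1}\partial_r(\alpha(r)^2)\big|_{r=r(x)}$. Recall from \eqref{eq:VRW} that $V_\ell(x) = \alpha(r)^2 r^{-2}\big(\ell(\ell+1) + r\partial_r(\alpha(r)^2)\big)$ with $r = r(x)$; dividing by $(\ell+\tfrac12)^2$ and using $\ell(\ell+1) = (\ell+\tfrac12)^2 - \tfrac14$ gives the clean splitting
\[
W(x,h) = \alpha(r(x))^2 r(x)^{-2} + h^2\Big(\tfrac{-1}{4}\,\alpha(r(x))^2 r(x)^{-2} + r(x)^{-1}\partial_r(\alpha(r)^2)\big|_{r=r(x)}\Big),
\]
so that $W_0(x) = \alpha(r(x))^2 r(x)^{-2}$ and $W_1(x) = -\tfrac14 \alpha(r(x))^2 r(x)^{-2} + r(x)^{-1}\partial_r(\alpha(r)^2)\big|_{r=r(x)}$. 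Thus it suffices to show that these two functions extend holomorphically to $\Omega_{\theta,\delta}$ and decay at infinity there.

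The next step is to assemble the holomorphy of $r \mapsto x(r)$ and its inverse on the relevant pieces of $\Omega_{\theta,\delta}$. The region $\Omega_{\theta,\delta}$ has two parts: a bounded ``middle'' neighbourhood $\{|\Im z| < \delta,\ |z| < 2/\delta_0\}$ of a compact real interval, and two ``ends'' $\{|\arg(\pm z)| < \theta,\ |z| > 1/\delta_0\}$. On the middle part, the discussion preceding the lemma already records that $r \mapsto x(r)$ is holomorphic and conformal on open complex neighbourhoods of $(2m,\infty)$ (resp. $(r_-,r_+)$), with a holomorphic inverse $r = r(x)$ near $\mathbb{R}$; by compactness we may choose $\delta$ small enough that $r(x)$ is defined and holomorphic on the whole middle region, and there $\alpha(r)^2 r^{-2}$ and $r^{-1}\partial_r(\alpha(r)^2)$ are holomorphic functions of $r$ (the only poles of $\alpha^2 r^{-2}$, namely $r = 0$ and $r = r_0 < 0$ for $\Lambda > 0$, stay away from the range of $r(x)$), so the compositions are holomorphic there; decay is vacuous on a bounded set. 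On the ends, one uses the logarithmic-plane construction: for $\Lambda = 0$ and $\underline r = 2m$, the text already shows $x : D_\rho \to \Omega_\rho \supset H_\rho = \{\Re x < c_0\log\rho\}$ is bi-holomorphic for small $\rho$, giving $r(x) \in \mathscr{O}(H_\rho)$ with $\alpha(r(x))^2 r(x)^{-2} \to 0$ and $r(x)^{-1}\partial_r(\alpha(r)^2)|_{r(x)} \to 0$ as $\Re x \to -\infty$; and the expansion $r(x) = x - 2m\log x + \sum_{k\geq 1}\sum_{\ell\geq 0} c_{k\ell}\, x^{-k}\log^\ell x$ valid for $|x| \gg 1$, $|\arg x| \leq \theta < \pi$, handles the other end $x \to +\infty$ in a sector, where $r(x) \sim x \to \infty$ forces $\alpha(r(x))^2 r(x)^{-2} = 1 - 2m/r - \cdots \to 0$ wait — one must be slightly careful: $\alpha(r)^2 \to 1$ as $r \to \infty$, but $\alpha(r)^2 r^{-2} \to 0$ and $r\partial_r(\alpha(r)^2) = 2m/r + \cdots \to 0$, so indeed $W_0(x), W_1(x) \to 0$. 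For $\Lambda > 0$ the symmetric argument applies at both ends via the logarithmic planes $\Lambda_{r_\pm}$, where $\alpha(r)^2$ has a simple zero at $r_\pm$, so $\alpha(r(x))^2 \to 0$ exponentially and a fortiori $W_j(x) \to 0$.

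Finally I would patch the middle and end descriptions: choosing $\theta \in (0,\pi/2)$ and $\delta_0 > 0$ small enough that the sectorial ends $\{|\arg(\pm z)| > \text{large}, |z| > 1/\delta_0\}$ lie inside the domains $H_\rho$ (and its mirror for $+\infty$, or the $\Lambda_{r_\pm}$ images for $\Lambda > 0$) where the end-expansions converge, and that the middle neighbourhood $\{|\Im z| < \delta, |z| < 2/\delta_0\}$ lies inside the neighbourhood of $\mathbb{R}$ where $r(x)$ is holomorphic, one gets a single holomorphic function $r(x)$ on all of $\Omega_{\theta,\delta}$, hence holomorphic $W_j$ there, with the stated decay. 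The main obstacle — really the only non-bookkeeping point — is verifying that the two overlapping descriptions of $r(x)$ (the near-$\mathbb{R}$ one and the logarithmic-plane one at each end) agree on their overlap, so that they glue to a well-defined single-valued holomorphic function on the connected set $\Omega_{\theta,\delta}$; this follows because both are inverses of the same conformal map $x(r)$ and agree on a real interval, so by the identity theorem they coincide wherever both are defined, and the union of their domains is connected and contains $\Omega_{\theta,\delta}$ for suitable parameters. Everything else is a direct appeal to the computations already carried out above, together with the elementary fact that $\alpha(r)^2 r^{-2}$ and $r^{-1}\partial_r(\alpha(r)^2)$ are holomorphic in $r$ away from $r \in \{0, r_0\}$ and vanish as $r \to \infty$ and (for $\Lambda > 0$) as $r \to r_\pm$.
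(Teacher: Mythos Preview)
Your approach is correct and matches the paper's, which presents Lemma~\ref{l:RWC} as a summary of the preceding analysis in \S\ref{s:complexi}. One minor correction: the second term of $W_1$ should carry the factor $\alpha(r)^2$, i.e.\ $W_1 = -\tfrac14\,\alpha^2 r^{-2} + \alpha^2 r^{-1}\partial_r(\alpha^2)$; this matters for the decay claim, since for $\Lambda = 0$ one has $r^{-1}\partial_r(\alpha^2) = 2m/r^3 \to 1/(4m^2)$ as $r \to 2m$, so that factor alone does not vanish at the left end, whereas the product with $\alpha^2$ does.
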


\noindent
{\bf Remark.} We did not evaluate $ \delta_0 $ as in the application we will use Lemma \ref{l:RWC} only for small $ \theta $. But it might be interesting to investigate large angle global complex scaling for $ W ( x, h )$.

\subsection{Complex scaling}
\label{s:cs}

Quasinormal modes, or scattering resonances, for the operator $ P =  (hD_x)^2 + W(x, h ) $
are defined by considering $ P $ as a holomorphic operator in $ \Omega_{ \theta_0 , \delta} $,
$ 0 < \theta_0 < \pi/2 $ (see Lemma \ref{l:RWC}) and restricting it to a contour $ \Gamma_\theta $. Resonances of $ P $ in $ \arg \lambda > - 2\theta$ are then the eigenvalues of $ P_\theta := P|_{\Gamma_\theta} $.
We refer to \cite[\S 5]{SjL} and \cite[\S 7.2]{Sj02} for the presentation of the method in a way applicable to our operator.

The choice of $ \Gamma_\theta $ is tailored to the structure of the potential and we start by
putting
\begin{equation}
\label{eq:defV1} V ( x ) = \alpha ( r ( x + x_0  ) )^2 r(x + x_0 )^{-2 } -
 \alpha ( r(x_0 )   )^2 r(x_0 ) ^{-2 },
 \end{equation}
  where $ r ( x_0 ) $ is the unique critical point of
 $ \alpha ( r)^2 r^{-2} $. We then have
\begin{equation}
\label{eq:propV}   V( x ) = - c_0 x^2 + \mathcal O ( x^3) , \ \ c_0 >  0, \ \ \  x V' ( x) < 0 , \ \ x \neq 0 .
\end{equation}
With $ \Omega = \Omega_{ \theta, \delta } $ of Lemma \ref{l:RWC}
we have
 $V \in \mathscr O (\Omega)$, $V|_{\mathbb R}$ is real valued,  and
\begin{equation}
\label{eq1.1}
V(x) \rightarrow - E_0, \quad \abs{x}\rightarrow \infty, \quad x\in \Omega, \ \ \
E_0 := \alpha ( r(x_0 )   )^2 r(x_0 ) ^{-2 } > 0 .
\end{equation}
We put
\begin{equation}
\label{eq4}
\Gamma_{\theta} = \left\{(1 + i\theta)x; x\in \mathbb R\right\}\subseteq \mathbb C, \quad 0 < \theta \ll 1,
\end{equation}
and notice that
\[
T^* \Gamma_{\theta} = \{\left((1 + i\theta)x, (1+i\theta)^{-1}\xi\right); (x,\xi) \in \mathbb R^2\} \subseteq \mathbb C^2.
\]
Letting
\[
p(x,\xi) = \xi^2 + V(x) , \quad (x,\xi) \in \Omega \times \mathbb C,
\]
we shall be concerned with ellipticity properties of the scaled symbol
\begin{equation}
\label{eq5}
p_{\theta}(x,\xi) = p|_{T^*\Gamma_{\theta}}  = ((1+i\theta)^{-1}\xi)^2 + V(x + i\theta x) = ((1-i\theta) \eta)^2 + V(x + i\theta x) ,
\end{equation}
for $(x,\xi) \in \mathbb R^2$, where
\begin{equation}
\label{eq5.1}
\eta := (1 + \theta^2)^{-1}\xi.
\end{equation}

\medskip
\noindent
To this end, we first make the following well known observation concerning the global ellipticity of $p_{\theta}$ near infinity. First, in the region where
\begin{equation}
\label{eq6}
x\in \mathbb R,\quad \abs{\xi} \geq 2 \left(1 + \|V\|_{L^{\infty}(\Omega)}\right)^{1/2},
\end{equation}
we have using (\ref{eq5}), (\ref{eq5.1}),
\begin{equation*}
\begin{split}
\abs{p_{\theta}(x,\xi)} & \geq \eta^2 - \| V \|_{L^{\infty}(\Omega)} 
\geq \tfrac12 {\xi^2} - \| V \|_{L^{\infty}(\Omega)} \geq \tfrac14 {\xi^2}
\\ & \geq  \tfrac{1}{8}\left(1 + \xi^2\right).
\end{split}
\end{equation*}
We next choose $ c_0 > 0$ independent of $ \theta $ so that
\begin{equation}
\label{eq9}
\abs{(1-i\theta)^2 \eta^2 - E_0} \geq c_0 {\theta} (1 + \xi^2), \quad \xi \in \mathbb R.
\end{equation}
This is clearly possible for $\abs{\xi} \geq 4(1 + E_0)$, and in the region where $\abs{\xi}\leq 4(1 + E_0)$, we write
\begin{equation*}
\begin{split}
\abs{(1-i\theta)^2 \eta^2 - E_0} & = \abs{1-i\theta}^2 \abs{\eta^2 - E_0 (1-i\theta )^{-2} }  \geq
\abs{\eta^2 - E_0 (1 + i\theta + \mathcal O(\theta^2))^2} \\
& = \abs{\eta^2 - E_0 - 2i\theta E_0 - \mathcal O(\theta^2)} \geq c_0 {\theta}, 
\end{split}
\end{equation*}
showing (\ref{eq9}).

Now, let $ M(\theta ) $ be chosen large enough so that
\[ \abs{x} > M (\theta) \Longrightarrow \abs{V(x + i\theta x) + E_0} \leq \tfrac12 c_0 \theta.\]
Then we have for $ \xi \in \mathbb R $ and $ |x| > M ( \theta ) $,
\begin{equation*}
\begin{split}
\abs{p_{\theta}(x,\xi)} & \geq \abs{(1-i\theta)^2 \eta^2 - E_0} - \abs{V(x+i\theta x) + E_0} \geq c_0 \theta \left(1 + \xi^2\right) - \abs{V(x+i\theta x) + E_0}
\\ &
\geq \tfrac12 c_0 {\theta}\left(1 + \xi^2\right).
\end{split}
\end{equation*}
 We conclude that there exist constants $L>0$, independent of $0 < \theta \ll 1$, and $M (\theta) > 0$, depending on $\theta$, such that
letting
\begin{equation*}
K(\theta):= \{ ( x, \xi )\in \mathbb R^2; \abs{x}\leq M(\theta), \,\,\abs{\xi}\leq L  \} ,
\end{equation*}
we have
\begin{equation*}
\abs{p_{\theta}(x,\xi)} \geq \tfrac12 c_0 \theta (1 + \xi^2), \quad (x,\xi) \in \mathbb R^2 \setminus K(\theta).
\end{equation*}

\bigskip
\noindent
We next claim that we can choose $0 < \theta \ll 1$ small enough so that
\begin{equation}
\label{eq13}
p_{\theta}(x,\xi) = 0, \,\,(x,\xi) \in K(\theta)  \Longrightarrow  (x, \xi) = (0, 0 ).
\end{equation}
To verify (\ref{eq13}), we use Taylor's formula for $x\in \mathbb R$, 
\begin{equation}
\label{eq14}
V(x+i\theta x) = V(x) + V'(x)i\theta x + \frac{1}{2}V''(x) (i\theta x)^2 + \frac{(i\theta x)^3}{2!} \int_0^1 (1-t)^2 V'''(x + it\theta x)\, dt,
\end{equation}
and therefore
\begin{equation}
\label{eq15}
\Im \left(V(x + i\theta x)\right) = \theta x V'(x) + \mathcal O(\theta^3 \abs{x}^3),
\end{equation}
where the implicit constant is independent of $\theta$. We get, combining (\ref{eq5}), (\ref{eq5.1}), and (\ref{eq15}),
\begin{equation}
\label{eq16}
\Im \left(p_{\theta}(x,\xi)\right) = -2\theta(1 + \mathcal O(\theta^2))\xi^2 + \theta x V'(x) + \mathcal O(\theta^3 \abs{x}^3).
\end{equation}
Here we know, in view of (\ref{eq:propV}), that
\begin{equation}
\label{eq17}
xV'(x) = -2c_0 x^2 + \mathcal O(\abs{x}^3),\quad x\rightarrow 0,
\end{equation}
and we get, using (\ref{eq16}), (\ref{eq17}),
\begin{equation}
\label{eq18}
{\rm Im}\, \left(p_{\theta}(x,\xi)\right) = -2\theta (1 + \mathcal O(\theta^2))\xi^2 -2\theta c_0 x^2 + \mathcal O(\theta \abs{x}^3), \quad x\rightarrow 0.
\end{equation}
It is therefore clear that there exists $\delta > 0$ small enough, independent of $0 < \theta \ll 1$, such that we have for $\abs{x} \leq \delta$, $\xi \in \mathbb R$,
\begin{equation}
\label{eq19}
\abs{p_{\theta}(x,\xi)} \geq \abs{{\rm Im}\, p_{\theta}(x,\xi)} \geq c_1 {\theta}(x^2 + \xi^2),
\end{equation}
for some $ c_1 > 0 $ independent of $ \theta $.

It only remains to check that no zero of $p_{\theta}$ can occur in the region $\delta \leq \abs{x} \leq M(\theta)$, $\abs{\xi} \leq L$. To this end assume, seeking a contradiction, that $p_{\theta}(x,\xi) = 0$ for some $(x,\xi)$ with $(x,\xi) \in K(\theta)$, $\abs{x} \geq \delta$. It follows that, using (\ref{eq14}) again,
\begin{equation}
\label{eq20}
\begin{split}
0 & = \Im \left(p_{\theta}(x,\xi)\right)
\\& = -2\theta(1 + \mathcal O(\theta^2))\xi^2 + \theta x V'(x) - \frac{\theta^3 x^3}{2!} \int_0^1 (1-t)^2 {\rm Re}\,\left(V'''(x + it\theta x)\right) dt.
\end{split}
\end{equation}
It follows from the Cauchy estimates that the last term on the right hand side of (\ref{eq20}) is $\mathcal O(\theta ^3)$, uniformly for $x\in \mathbb R$, $\theta > 0 $, small enough, and we get
\begin{equation}
\label{eq21}
2\theta(1 + \mathcal O(\theta^2))\xi^2 = \theta x V'(x) + \mathcal O(\theta^3) \leq \mathcal O(\theta^3) \Longrightarrow \abs{\xi} = \mathcal O(\theta).
\end{equation}
Here we have also used (\ref{eq:propV}). We next write, using (\ref{eq5}) and (\ref{eq14}),
\begin{equation}
\label{eq22}
0 = \Re \left(p_{\theta}(x,\xi)\right) = (1-\theta^2)\eta^2 + V(x) - \tfrac{1}{2} V''(x) (\theta x)^2 + \mathcal O(\theta^3),
\end{equation}
and using the Cauchy estimates again, we get
\begin{equation}
\label{eq23}
0 = \Re \left(p_{\theta}(x,\xi)\right) = (1-\theta^2)\eta^2 + V(x) + \mathcal O(\theta^2).
\end{equation}
We obtain, using (\ref{eq5.1}), (\ref{eq21}), and (\ref{eq23}),
\begin{equation}
\label{eq24}
 - V(x) = (1-\theta^2)\eta^2 + \mathcal O(\theta^2) = \mathcal O(\theta^2).
\end{equation}
Here we recall that $\abs{x} \geq \delta$, for some $\delta > 0$ independent of $\theta$. It follows that
\begin{equation}
\label{eq25}
0 < \inf_{\abs{y}\geq \delta} \left( - V(y)\right) \leq \mathcal O(\theta^2),
\end{equation}
which gives a contradiction for $\theta > 0$ small enough.

We proved the following result. 
\begin{lemm}
\label{l:ell}
Let $V$ be given in \eqref{eq:defV1}
There exists $0 < \theta \ll 1$ small enough such that the scaled symbol
\begin{equation}
\label{eq26}
p_{\theta}(x,\xi) = ((1+i\theta)^{-1}\xi)^2 + V(x + i\theta x) , \quad (x,\xi) \in \mathbb R^2,
\end{equation}
satisfies
\[ p_{\theta}(0,0) = 0, \ \ \ dp_{\theta}(0,0) = 0, \ \ - \Im p_{\theta}''(0,0)\, \text{   positive definite}, \]
and for each $\varepsilon > 0$ there exists $\delta > 0$ such that
\begin{equation}
\label{eq27}
\abs{(x,\xi)} > \varepsilon \Longrightarrow \abs{p_{\theta}(x,\xi)} \geq \delta (1 + \xi^2), \quad (x,\xi) \in \mathbb R^2.
\end{equation}
\end{lemm}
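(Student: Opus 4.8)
The three infinitesimal conditions at $(0,0)$ come from Taylor-expanding $p_\theta$ and reading off the structure of $V$ in \eqref{eq:propV}. Since $V(0)=0$ and $V'(0)=0$, we get $p_\theta(0,0)=0$ and $dp_\theta(0,0)=0$ (the $\xi$-derivative of $p_\theta$ is $2(1+i\theta)^{-2}\xi$, the $x$-derivative is $(1+i\theta)V'(x+i\theta x)$). For the Hessian, $\partial_\xi^2 p_\theta(0,0) = 2(1+i\theta)^{-2} = 2 - 4i\theta + \mathcal O(\theta^2)$, $\partial_x^2 p_\theta(0,0) = (1+i\theta)^2 V''(0) = -2c_0 - 4ic_0\theta + \mathcal O(\theta^2)$ (using $V''(0) = -2c_0$), and the mixed derivative vanishes, so $-\Im p_\theta''(0,0) = 4\theta\,\mathrm{diag}(c_0,1) + \mathcal O(\theta^2)$, which is positive definite once $0<\theta\ll1$; this is consistent with \eqref{eq18}.

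For the ellipticity estimate \eqref{eq27} I would split $\mathbb R^2$ into a neighbourhood of infinity and a fixed compact box. Near infinity the plan is to use that the real part $((1-i\theta)\eta)^2$ carries all the growth in $\xi$ while $V(x+i\theta x)\to -E_0$: if $|\xi|$ is large, $|p_\theta(x,\xi)|\geq\tfrac14\xi^2$ follows directly from \eqref{eq5}--\eqref{eq5.1}; if $|\xi|$ is bounded, treat $(1-i\theta)^2\eta^2 - E_0$ as the model symbol, which is elliptic of size $\theta(1+\xi^2)$ because $\Im\big(E_0(1-i\theta)^{-2}\big) = 2\theta E_0 + \mathcal O(\theta^2)\neq0$, and absorb the perturbation $V(x+i\theta x)+E_0$, which is $\leq\tfrac12 c_0\theta$ for $|x|$ beyond some $M(\theta)$. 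This gives $|p_\theta(x,\xi)|\geq\tfrac12 c_0\theta(1+\xi^2)$ outside a box $K(\theta) = \{ |x|\leq M(\theta),\ |\xi|\leq L \}$ with $L$ independent of $\theta$.

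It then remains to control $p_\theta$ on $K(\theta)$, and here the real point is to show that $p_\theta$ has no zero in $K(\theta)$ other than $(0,0)$; once that is known, compactness of $K(\theta)\setminus B(0,\varepsilon)$ and continuity give $|p_\theta|\geq c(\varepsilon,\theta)>0$ there, and since $1+\xi^2\leq 1+L^2$ on $K(\theta)$ this yields $|p_\theta|\geq\tfrac{c}{1+L^2}(1+\xi^2)$, which together with the infinity estimate is \eqref{eq27}. To rule out spurious zeros I would compute $\Im p_\theta$ from the Taylor formula \eqref{eq14}: for $|x|\leq\delta$ one finds $\Im p_\theta(x,\xi) = -2\theta\xi^2 - 2\theta c_0 x^2 + \mathcal O(\theta|x|^3) + \mathcal O(\theta^3\xi^2)$, so $|p_\theta|\geq|\Im p_\theta|\geq c_1\theta(x^2+\xi^2)$ for a fixed small $\delta$ (independent of $\theta$), forcing any zero with $|x|\leq\delta$ to be the origin. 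On the remaining annular region $\delta\leq|x|\leq M(\theta)$, $|\xi|\leq L$, vanishing of $\Im p_\theta$ together with the sign condition $xV'(x)<0$ from \eqref{eq:propV} (and a Cauchy estimate bounding the cubic remainder by $\mathcal O(\theta^3)$) forces $\xi^2 = \mathcal O(\theta^2)$, whereupon vanishing of $\Re p_\theta$ forces $-V(x) = \mathcal O(\theta^2)$, contradicting $\inf_{|y|\geq\delta}(-V(y))>0$ once $\theta$ is small.

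The main obstacle is precisely this last step: $\theta$ must be fixed small in a way that simultaneously makes $-\Im p_\theta''(0,0)$ positive definite, makes the sign of $\Im p_\theta$ definite near the origin, and beats $\inf_{|y|\geq\delta}(-V(y))$ on the intermediate region — so $\theta$ has to be chosen \emph{after} $\delta$, and the whole argument leans on the monotonicity $xV'(x)<0$ and positivity $-V(x)>0$ for $x\neq0$ that are built into $V$ in \eqref{eq:propV}, together with the holomorphic extension of $V$ from Lemma \ref{l:RWC} needed to run the Cauchy estimates. Everything else is routine bookkeeping with Taylor's formula.
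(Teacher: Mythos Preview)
Your proposal is correct and follows essentially the same route as the paper: split into a region near infinity where $|\xi|$ large or $|x|>M(\theta)$ gives ellipticity of size $\theta(1+\xi^2)$, then on the compact box $K(\theta)$ use the Taylor expansion of $\Im p_\theta$ to show the only zero is at the origin --- definite sign for $|x|\leq\delta$, and for $\delta\leq|x|\leq M(\theta)$ the contradiction argument via $\Im p_\theta=0\Rightarrow\xi=\mathcal O(\theta)$ and $\Re p_\theta=0\Rightarrow -V(x)=\mathcal O(\theta^2)$. Your explicit Hessian computation at the origin and the compactness step to pass from ``no zeros'' to the quantitative bound are both implicit in the paper's presentation.
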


\subsection{Resonance free regions}
\label{s:free}
The quasinormal modes $ \lambda_{\ell, n} $ in Theorem \ref{t:2} are given by the square roots of the eigenvalues of $ P_\theta $,
\begin{equation}
\label{eq:Ptheta}
 P_\theta = P|_{\Gamma_\theta} ,   \ \ \ P :=(hD_x)^2 + W ( x, h ) , \ \ \ h = ( \ell + \tfrac12)^{-1},
 \end{equation}
multiplied by $h^{-1}$, see Lemma \ref{l:RWC} and \S \ref{s:cs}. We now explain why to obtain the theorem we only need find eigenvalues in
the region $ \{ | z- W_0 (x_0 ) | < \delta \}$, for $\delta > 0$ sufficiently small but fixed, where $ x_0$ is the critical point of $ W_0 $.
Suppose that $ 0 < E $ satisfies $ | W_0 ( x_0 ) - E | > \delta $. Then, as $ W_0 $ has a unique maximum at $x_0$, the energy level $ E $ is {\em non-trapping} for $ p := \xi^2 + W_0 (x ) $ in the sense that $ p^{-1} ( E ) \subset T^* \mathbb R $ is connected and unbounded.
Since $ W_0 $ is globally analytic it follows that
\begin{equation}
\label{eq:resfr}
\exists \,  \varepsilon_0 , h_0 \ \ \
\Spec ( P_\theta ) \cap \{ z:  |z - E | < \varepsilon_0  \} = \emptyset , \ \ \  0 < h < h_0 ,
\end{equation}
see \cite[\S 12.5]{Sj02} for a self-contained presentation and references. Taking the square root and rescaling by $ h =1/( \ell + \tfrac12)$ implies that for any (small) $ c_0 , \delta_0 > 0$  there exists $ t_0 $ such that
\[  \begin{gathered}
 {\rm QNM}_\ell ( m , \Lambda ) \cap U (c_0, \delta_0, t_0 ) = \emptyset, \ \ \  E_0 = (W_0 ( x_0 ))^{1/2}, \\
U(c_0, \delta_0, t_0 ) :=    \{ \lambda : \Re \lambda \in
 [ c_0 \ell, ( E_0 - \delta_0 ) \ell ] \cup [  ( E_0 + \delta_0 ) \ell , \ell /c_0 ] , \ \Im
\lambda_0 > - t_0 \Re \lambda \} .\end{gathered}  \]
To see that
\begin{equation}
\label{eq:c0} {\rm QNM}_\ell ( m , \Lambda ) \cap \{ \lambda :  \Re \lambda \in
  [   \ell /c_0 , \infty ) , \ \Im
\lambda_0 > - t_0 \Re \lambda \} = \emptyset , \end{equation}
we need a different choice of $ h $ in the passage from the Regge--Wheeler equation
$ D_x^2  + V_\ell ( x ) $ (see \eqref{eq:VRW}) to $ (hD_x)^2 + W ( x, h ) $. For that we
write
\[ P - z =  h^2 ( D_x^2 + V_\ell ( x ) - \lambda^2 ) = ( hD_x)^2 + h^2 ( \ell + \tfrac12)^2 W_0 ( x ) +
h^2 W_1 ( x ) - z  , \ \ \ z := (\lambda h)^2 . \]
If $ h < (\ell + \frac12 )^{-1} \sqrt E_0 ( 1 - \delta)$, $ \delta > 0$, then the energy level $
z = 1 $ is non-trapping for $ P$. Hence, as in \eqref{eq:resfr}, there exists $ h_0 $
and $ \varepsilon_0 $, such that if
$ h < \ell + \frac12 )^{-1} E_0 ( 1 - \delta)$. $ \ell \gg 1 $,  then $ P$ has no resonances $ z $
(elements of the spectrum of $ P_\theta $) for $ |z -1 | < \varepsilon_0 $. We now take
$ 1/h = \Re \lambda > ( \ell + \frac12) / \sqrt (E_0(1-\delta)) $ and that give \eqref{eq:c0}.

This shows that to obtain Theorem \ref{t:2} it is sufficient to describe the eigenvalues of
$ P_\theta $ near $ W_0 ( x_0 ) $, the maximal value of $ W_0$.

When $ \Lambda > 0 $ we can replace $ c_0 $ by $ 0 $ and then Theorem \ref{t:2} implies
\eqref{eq:defN} in Theorem \ref{t:1}. This follows from \cite[Proposition 4.4, (4.9)]{saz} and
the fact that for $ \Lambda > 0 $ quasinormal modes form a discrete set in $ \mathbb C $
(see \cite[\S 2]{saz} and also the careful presentation in \cite{boha}).

\noindent
{\bf Remark.} The same conclusion should be valid for the case of $ \Lambda = 0 $, but as far as the authors are aware, there is no mathematical argument for the discreteness of $ {\rm QNM} ( m , 0 ) $ in $ \{ \lambda : \Im \lambda >  - t_0 | \Re \lambda | \} $ and for
\[  {\rm QNM }_\ell ( m, 0 ) \cap \{ | \Re \lambda | < c_0 \ell , \Im \lambda >  -t_0 \Re \lambda \} =
\emptyset, \ \ \  0 < c_0, t_0 \ll 1 . \]
The latter would be needed to obtain \eqref{eq:defN} for $ \Lambda = 0 $.

\section{Analytic symbols}
\label{s:anas}

A (formal) classical analytic symbol in an open set $ \Omega \subset \mathbb C^d $ is a (formal) expression
\begin{equation}
\label{eq:classs}
\begin{gathered}
a ( z; h ) := \sum_{ k = 0 }^\infty h^k a_k ( z ) , \ \  a_k \in \mathscr O ( \Omega ) , \\
 \forall \, K \Subset \Omega \,\,
\exists \, C = C ( K ) \  \ | a_k ( z ) | \leq C^{ k+1} k^k , \ z \in K,\,\,k= 0,1,2,\ldots.
\end{gathered}
\end{equation}
Since all the symbols we use are classical, from now on we will drop that specification.

For open $ \Omega_1 \Subset \Omega $ we have a {\em realization} of $ a (z; h ) $ on $\Omega_1$ given by
the following holomorphic function:
\begin{equation}
\label{eq:reali}
a_{\Omega_1 } ( z; h ) := \sum_{k=0}^{ [( e C( \Omega_1) h )^{-1} ]} a_k ( z ) h^k.
\end{equation}
If $\Omega_1 \subset \Omega_2 \Subset \Omega$, we have, assuming, as we may, that $C(\Omega_1) < C(\Omega_2)$,
\[
\abs{\ a_{\Omega_1}(z;h) - a_{ \Omega_2 }(z;h)} \leq C(\Omega_1) \frac{e}{e-1} e^{-1/eC(\Omega_2)h},\,\,z\in \Omega_1.
\]
(See \eqref{eq:Borel} for another way to {\em realize} $ a $ as a holomorphic function.)
We refer to  \cite[\S 1]{sam} or \cite[\S 2.2]{HiSj15} for a detailed account.

For $ a = a ( x, \xi; h) $ and $ b = b ( x, \xi; h) $,     analytic symbols in $ \Omega
\subset \mathbb C_x^n \times \mathbb C_\xi^n $ we define
\begin{equation}
\label{eq:prod}  a \# b ( x, \xi; h ) := \sum_{\alpha \in \mathbb N^n } \frac{1} { \alpha!} \partial_\xi^{\alpha} a ( x, \xi; h )
( h D_x )^\alpha b ( x, \xi; h ) . \end{equation}
Then $ a \# b $ is an analytic symbol in $ \Omega $.
We now recall the fundamental result of Boutet de Monvel--Kr\'ee: if
$ a \neq 0 $ on $ \Omega $ and $ \Omega_0 \Subset \Omega $ is an open set, then
the formal   symbol $ b $ defined by
\begin{equation}
\label{eq:BdMK}
a \# b = 1
\end{equation}
is a formal   analytic symbol in $ \Omega_0 $.
The proof given by Sj\"ostrand in \cite{sam} uses norms on formal pseudodifferential operators
\[ \hat a  = a ( x,  h D_x; h ) , \  \ \ \ a  (x, \xi; h ) = \sum_{ \ell=0}^\infty a_\ell ( x, \xi ) h^\ell , \ \ \ \
 \hat a  \circ \hat b = \widehat { a \# b } . \]
To recall the definition of the norms we assume for simplicity that
$ \Omega = \{ (x, \xi )  : x_j , \xi_j \in \mathbb C, |x_j| < r_j, |\xi_j | < \rho_j  \} \subset \mathbb C^{2n}
$ is an open polydisc. We then define a family of polydiscs by
\begin{equation}
\label{eq:polydisc}
\Omega (t) := \{ (x, \xi )  : x_j , \xi_j \in \mathbb C, |x_j| < r_j-t, |\xi_j | < \rho_j-t  \}  , \ \ \
t < t_0 :=  \min_j \min (r_j, \rho_j ) .
\end{equation}
Then $ \Omega (t)  \subset \Omega ( s ) $ if $ t > s $ and by the Cauchy estimates,
\[
\sup_{ \Omega(t) } |  D^\alpha u | \leq \frac{\alpha!}{( t -s )^{|\alpha| }}\,\sup_{ \Omega(s) } | u |, \ \ \
s< t < t_0 , \ \ \ u \in (\mathscr O \cap L^{\infty})(\Omega_s) . \]
This allows the following definition for $\rho > 0$:
\begin{equation}
\label{eq:norhm}  \begin{gathered} \| \hat a \|_\rho = \sum_{k=0}^\infty \rho^k f_k ( \hat a ) , \ \ \ \ \ \
f_k ( \hat a ) := \sup_{ 0 \leq s < t < t_0} \frac{ ( t - s)^k \| A_k \|_{t,s} }{ k^k } , \\
 A_k ( x, \xi; D_x ) := \sum_{ | \alpha | + \ell = k } \frac{1}{ \alpha!} \partial_\xi^\alpha a_\ell ( x, \xi ) D_x^\alpha , \ \ \ \ \
 \| B \|_{t, s} = \sup_{0\neq u \in (\mathscr O \cap L^{\infty}) ( \Omega_s )} \frac{ \sup_{ \Omega(t) } | B u  |}{ \sup_
 {\Omega(s) } |u| } .
 \end{gathered}   \end{equation}
 (Formally, $ a ( x, \xi + h D_x;h) = \sum_{ k=0}^\infty h^k A_k(x,\xi;D_x) $.)

The key property is given by
\begin{equation}
\label{eq:key}   \| \hat a  \circ \hat b \|_\rho \leq \| \hat a \|_\rho \| \hat b \|_{\rho }.
\end{equation}
One immediate consequence is that the formal symbol $ b $ defined by
$  \hat b =\exp{\hat a} $ is an analytic symbol.

When working with the Weyl quantization of analytic symbols, rather than (\ref{eq:prod}), we define
\begin{equation}
\label{eq:new1}
a \# b ( x, \xi; h ) := \left(e^{\frac{ih}{2}\sigma(D_x,D_{\xi}; D_y, D_{\eta})} (a(x,\xi;h)\, b(y,\eta;h))\right)|_{y=x, \eta = \xi},
\end{equation}
where
\[
\sigma(D_x,D_{\xi}; D_y, D_{\eta}) = D_{\xi} \cdot D_y - D_{\eta} \cdot D_x.
\]
The formal differential operator of infinite order associated to $a$ is then given by
\begin{multline}
\label{eq:new2}
A(x,\xi,D_x,D_{\xi};h) u(x,\xi) = {\rm Op}_a(x,\xi,D_x,D_{\xi};h)u(x,\xi) \\
= \sum_{k=0}^{\infty} \frac{h^k}{k!} \left(\left(\frac{i}{2} \sigma(D_x,D_{\xi}; D_y, D_{\eta})\right)^k a(x,\xi;h) u(y,\eta)\right)\biggl |_{y=x, \eta = \xi},
\end{multline}
and we have $A(x,\xi,D_x,D_{\xi};h) = \sum_{k=0}^{\infty} h^k A_k(x,\xi, D_x,D_{\xi})$, where
\begin{equation}
\label{eq:new3}
A_k(x,\xi, D_x,D_{\xi}) = \sum_{\abs{\alpha + \beta} + \ell = k} \frac{(-1)^{\abs{\beta}}}{(2i)^{\abs{\alpha + \beta}} \alpha! \beta!}
\partial_{\xi}^{\alpha} \partial_x^{\beta} a_{\ell}(x,\xi) \partial_x^{\alpha} \partial_{\xi}^{\beta}.
\end{equation}
We have the corresponding definition of $\norm{\hat{a}}_{\rho}$, analogous to (\ref{eq:norhm}), and (\ref{eq:key}) still holds. See also~\cite[Section 3]{HitSj3a}. As observed in~\cite[Appendix A]{HitSj3a}, a key advantage of working with the Weyl quantization in this context is due to its metaplectic invariance. Let $\kappa: \mathbb C^{2n} \rightarrow \mathbb C^{2n}$ be a complex linear canonical transformation and let $\kappa^*$, $\kappa_*$ stand for the operations of pull-back and push-forward of functions on $\mathbb C^{2n}$, respectively. It follows from (\ref{eq:new1}) that $(\kappa_* a) \# (\kappa_* b) = \kappa_* (a\#b)$, and therefore using (\ref{eq:new2}) we get
\[
\kappa_* \circ {\rm Op}_a \circ \kappa^* = {\rm Op}_{\kappa_* a}.
\]
This implies that if the polydiscs $\Omega(t)$ in (\ref{eq:polydisc}) are invariant under $\kappa$, then we have
\begin{equation}
\label{eq:new4}
\norm{\widehat{\kappa_* a}}_{\rho} = \norm{\hat{a}}_{\rho}.
\end{equation}

We also use the following technical result of Hitrik--Sj\"ostrand \cite[Proposition 3.2]{HitSj3a}, which holds both for the classical and the Weyl quantization. (We stress that the two page proof can be read independently of the rest of the paper and depends only on the definitions
reviewed above.)
\begin{lemm}
\label{l:hitsj}
For analytic symbols $ a $ and $ b$, there exists a constant $ C ( b ) $ (independent of  $h$ and
$ \rho $) such that
\[   \| [ \hat a , \hat b ] \|_{\rho} \leq C (b) \rho \| \hat a \|_\rho . \]
\end{lemm}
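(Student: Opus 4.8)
\emph{Proof proposal.} The plan is to isolate the single structural fact responsible for the gain of a power of $\rho$: the top-order part (in the $h$-grading) of the commutator symbol vanishes, and more precisely every surviving term of $a\#b-b\#a$ carries at least one $\partial_\xi$-derivative of one factor matched by a $\partial_x$-derivative of the other. First I would write, with $[a,b]_\#:=a\#b-b\#a$,
\[
[\hat a,\hat b]=\widehat{a\#b}-\widehat{b\#a}=\widehat{[a,b]_\#},
\]
and expand the coefficient $[a,b]_{\#,k}$ of $h^k$ in $[a,b]_\#$ using \eqref{eq:prod} (or the Weyl formula \eqref{eq:new1}). The terms in which no $\partial_\xi$ falls on the first factor assemble into the pointwise products $\sum_{p+q=k}a_pb_q$, and those in which no $\partial_\xi$ falls on the second factor assemble into $\sum_{p+q=k}b_pa_q$; these are equal, so they cancel, and $[a,b]_{\#,k}$ is a finite sum of products of derivatives of some $a_p$ against derivatives of some $b_q$ in which at least one $\partial_\xi$ on the $a$-factor is matched by at least one $\partial_x$ on the $b$-factor (or with the roles of $a$ and $b$ exchanged). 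In particular, in the notation of \eqref{eq:norhm}, $\widehat{[a,b]_\#}=\sum_k h^kD_k$ with $D_0=0$.

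The second, and main, step is a per-level estimate of the form $f_k(\widehat{[a,b]_\#})\le C(b)\,f_{k-1}(\hat a)$ for $k\ge 1$, with $C(b)$ depending only on the fixed analytic symbol $b$ and on the polydisc $\Omega$. To obtain it one expands $D_k$ via \eqref{eq:norhm} starting from the description of $[a,b]_{\#,k}$ in Step~1, and organizes each resulting term as a ``matched'' operator $\tfrac1{\gamma!}(\partial_\xi^\gamma a_p)D_x^\gamma$ — one of the terms making up $A_m(\hat a)$ — of order $m=|\gamma|+p\le k-1$, times a factor $\partial_\xi^{\gamma'}\partial_x^{\delta'}b_q$ that is a derivative of the fixed symbol $b$. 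The drop $m\le k-1$ is precisely the $a$-derivative that Step~1 forces to be traded onto $b$; the $b$-factor is controlled by the Cauchy inequalities on a slightly smaller polydisc, and summing over the relevant multi-indices produces the constant $C(b)$, the factorial weights $1/\gamma'!\,\delta'!$ and the normalisation $(t-s)^k/k^k$ built into $f_k$ absorbing the Cauchy factors $\delta'!/(t-s)^{|\delta'|}$ exactly as in the proof of the submultiplicativity \eqref{eq:key}. Summing against $\rho^k$ and using the definition \eqref{eq:norhm} then gives
\[
\|[\hat a,\hat b]\|_\rho=\sum_{k\ge 1}\rho^k f_k\big(\widehat{[a,b]_\#}\big)\le C(b)\sum_{k\ge 1}\rho^k f_{k-1}(\hat a)=C(b)\,\rho\,\sum_{j\ge 0}\rho^j f_j(\hat a)=C(b)\,\rho\,\|\hat a\|_\rho,
\]
which is the assertion; here $C(b)$ does not depend on $h$ (everything is at the level of formal symbols, no value of $h$ being fixed) nor on $\rho$ (the normalisation in \eqref{eq:norhm} was chosen precisely so that no power of $\rho$ is generated, just as for \eqref{eq:key}). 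The same argument applies to the Weyl quantization, where Step~1 is even more transparent: $a\#b-b\#a$ is then $h$ times the analytic symbol $2\sum_{j\ge 0}\tfrac{h^{2j}}{(2j+1)!}\big(\tfrac i2\sigma(D_x,D_\xi;D_y,D_\eta)\big)^{2j+1}(a(x,\xi)\,b(y,\eta))\big|_{y=x,\eta=\xi}$, each term of which carries at least one derivative of $a$ and at least one of $b$.

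The part I expect to be the main obstacle is the bookkeeping in the second step: one must verify that, once the $a$-derivatives are repackaged into terms of the operators $A_m(\hat a)$ with $m\le k-1$ and the $b$-derivatives are bounded by Cauchy estimates, the combinatorial and factorial constants that appear are dominated — uniformly in $k$, in $h$, and in $\rho$ — by $C(b)$ times the single net factor $\rho$, and in particular that the loss which keeps the $a$-order at $\le k-1$ (rather than $k$) is genuinely available in every term. This is exactly the place where one leans on the careful choice of norm in \eqref{eq:norhm}, and it is carried out in detail in \cite[Proposition 3.2]{HitSj3a}, whose two-page proof is self-contained given the definitions recalled above.
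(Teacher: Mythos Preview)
The paper does not give its own proof of this lemma: it simply quotes it as \cite[Proposition 3.2]{HitSj3a} and remarks that the two-page argument there is self-contained given the definitions \eqref{eq:norhm}--\eqref{eq:key}. Your proposal therefore goes further than the paper, and what you sketch is exactly the mechanism of that cited proof: the zeroth-order term of $a\#b-b\#a$ vanishes, every surviving term carries at least one derivative that can be charged to the fixed symbol $b$, and this forces a drop of one unit in the $h$-grading on the $a$-side, producing the factor $\rho$ after summing. Your final paragraph correctly identifies the only delicate point---tracking the factorials and Cauchy constants so that the residual $b$-dependent factor is uniform in $k$, $h$, and $\rho$---and correctly defers it to \cite{HitSj3a}. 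One small caution: the clean per-level inequality $f_k(\widehat{[a,b]_\#})\le C(b)\,f_{k-1}(\hat a)$ is a slight oversimplification, since the $k$th level of the commutator receives contributions from several $A_m(\hat a)$ with $m\le k-1$, not just $m=k-1$; what one actually proves is an inequality of the form $f_k\le \sum_{m<k} c_{k,m}(b)\, f_m(\hat a)$ with coefficients whose generating function in $\rho$ is bounded by $C(b)\rho$, and this is where the careful bookkeeping in \cite[Proposition 3.2]{HitSj3a} is genuinely needed.
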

We need the following application of this lemma, where we write $\ad_{\hat a} \hat q = [\hat a, \hat q]$:
\begin{equation}
\label{eq:hitsj}   \| \ad_{\hat a }^p \hat q \|_\rho \leq
\rho 2^p C(q )  \| \hat a\|_{\rho}^p . \end{equation}
\begin{proof}[Proof of \eqref{eq:hitsj}]
We proceed by induction on $ p $, with $ p = 1 $ given by the lemma. Then
\[ \| \ad_{\hat a }^{p+1} \hat q \|_\rho = \| \hat a \circ \ad_{\hat a }^p \hat q -
\ad_{\hat a }^p \hat q  \circ \hat a \| \leq 2 \| \hat a \|_{\rho}
\| \ad^p_{\hat a } \hat q \|_\rho \leq 2^{p+1} C( q) \rho \| a \|_{\rho}^{p+1}, \]
completing the proof. \end{proof}

The key result we need is an analytic symbol version of an averaging result.
We present it in a special case needed here:

\begin{prop}
\label{p:ave}
Suppose that $ q = q ( z, \zeta; h ) $ is an analytic symbol in a neighbourhood of the origin in $\mathbb C^2$,
\begin{equation}
\label{eq:defq}  q = \sum_{ \ell=0}^\infty h^\ell q_\ell, \ \ \  q_0 ( z, \zeta ) = g ( z \zeta ) , \ \ \
q_1 ( z, \zeta ) = g_1 ( z\zeta ), \end{equation}
where $g(0) = 0$, $g'(0) \neq 0$. Then there exist analytic symbols $ a ( z, \zeta; h ) $ and
\[  G ( w; h ) = \sum_{ \ell = 0}^\infty h^\ell G_\ell ( w ) , \ \ \  G_0 ( w ) = g( w ) , \]
such that, in the sense of formal asymptotic expansions using Weyl quantization, we have
\begin{equation}
\label{eq:conj}
e^{\hat a } \circ \hat q \circ e^{-\hat a } = \hat b, \ \ \ b ( z, \zeta; h ) = G ( z\zeta; h ) .
\end{equation}
\end{prop}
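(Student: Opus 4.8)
\emph{Proof plan.} The plan is to carry out the classical method of averaging with respect to the $S^1$–action generated by the quadratic form $z\zeta$, organised so that the whole construction stays inside the class of analytic symbols, which is done by propagating the submultiplicative norms $\|\cdot\|_\rho$ of \S\ref{s:anas}.

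\emph{Cohomological machinery.} The Hamilton flow of $z\zeta$ is $\varphi_\theta(z,\zeta)=(e^{i\theta}z,e^{-i\theta}\zeta)$; it is $2\pi$–periodic and preserves every polydisc, in particular the $\Omega(t)$ of \eqref{eq:polydisc}. On holomorphic germs at $0$ one has the averaging projector $\Pi f:=\frac1{2\pi}\int_0^{2\pi}\varphi_\theta^* f\,d\theta$, whose range consists of the germs of functions of $w=z\zeta$ (it annihilates $z^j\zeta^k$ for $j\neq k$), together with a right inverse $E$ of the cohomological operator $f\mapsto\{z\zeta,f\}$ on the range of $I-\Pi$, namely $Ef:=\frac1{2\pi}\int_0^{2\pi}(\pi-\theta)\,\varphi_\theta^* f\,d\theta$; both are bounded on sup–norms over polydiscs with $\mathcal O(1)$ constants. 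Since $\varphi_\theta$ is linear, symplectic, and preserves the $\Omega(t)$, the metaplectic invariance \eqref{eq:new4} of the Weyl quantization shows that the operator analogues $\widehat b\mapsto\frac1{2\pi}\int_0^{2\pi}U_\theta\,\widehat b\,U_\theta^{-1}\,d\theta$ (with $U_\theta$ the metaplectic quantization of $\varphi_\theta$) are bounded on $\|\cdot\|_\rho$; I will denote them $\Pi$, $\mathcal E$ as well. The operators fixed by $\Pi$ are exactly the Weyl quantizations $\widehat{c(z\zeta;h)}$; they commute with $\widehat{q_0}$ (as $q_0=g(z\zeta)$ is $\varphi_\theta$–invariant), $\Pi$ commutes with $\ad_{\widehat{q_0}}$, and hence $\ad_{\widehat{q_0}}$ maps $\ker\Pi$ into itself. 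Since $g'(0)\neq0$, $g'(z\zeta)^{-1}$ is an analytic symbol by Boutet de Monvel--Kr\'ee \eqref{eq:BdMK}, so $\{q_0,\cdot\}=g'(z\zeta)\{z\zeta,\cdot\}$ and the symbol equation $\{q_0,a\}=f$ with $\Pi f=0$ is solved by $a=E\big(g'(z\zeta)^{-1}f\big)$. At the operator level $[\widehat a,\widehat{q_0}]$ has principal symbol $\tfrac hi\{a_0,q_0\}$, so $\ad_{\widehat{q_0}}=h\mathcal L$, where $\mathcal L$ acts on analytic symbols, its leading part is an invertible multiple of $\{z\zeta,\cdot\}$ on $\ker\Pi$, and the remainder carries a factor $h$ and is therefore $\mathcal O(\rho)$–small in $\|\cdot\|_\rho$; hence $\mathcal L$ is invertible on $\ker\Pi$ with $\|\mathcal L^{-1}\|_\rho=\mathcal O(1)$.

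\emph{The averaging.} Writing $\widehat q=\widehat{q_0}+\widehat r$ with $\widehat r=\mathcal O(h)$ an analytic symbol and noting that $(I-\Pi)\widehat r=\mathcal O(h^2)$ (because $q_1=g_1(z\zeta)$ is diagonal), I would look for $\widehat a\in\ker\Pi$ with $(I-\Pi)\big(e^{\widehat a}\widehat q\,e^{-\widehat a}\big)=0$. Expanding $e^{\widehat a}\widehat q\,e^{-\widehat a}=\sum_p\tfrac1{p!}\ad_{\widehat a}^{\,p}\widehat q$ and using $\Pi\widehat{q_0}=\widehat{q_0}$, $\Pi\ad_{\widehat{q_0}}\widehat a=\ad_{\widehat{q_0}}\Pi\widehat a=0$, this is equivalent to the fixed–point equation
\begin{equation*}
\widehat a=\mathcal L^{-1}\Big(\tfrac1h(I-\Pi)\big[\,\textstyle\sum_{p\ge2}\tfrac1{p!}\ad_{\widehat a}^{\,p}\widehat{q_0}+e^{\widehat a}\widehat r\,e^{-\widehat a}\,\big]\Big)=:\mathcal T(\widehat a).
\end{equation*}
The bracket is $\mathcal O(h)$ — indeed $\ad_{\widehat a}^{\,p}\widehat{q_0}=\mathcal O(h^2)$ for $p\ge2$, $e^{\widehat a}\widehat r\,e^{-\widehat a}=\widehat r+\mathcal O(h^2)$, and $(I-\Pi)\widehat r=\mathcal O(h^2)$ — so the factor $\tfrac1h$ is harmless. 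Using the submultiplicativity \eqref{eq:key}, the iterated–commutator bound \eqref{eq:hitsj}, which gives $\|\sum_p\tfrac1{p!}\ad_{\widehat a}^{\,p}\widehat b\|_\rho\le\rho\,C(b)\,e^{2\|\widehat a\|_\rho}$, and the boundedness of $\Pi$ and $\mathcal L^{-1}$ from the previous paragraph, one checks that $\mathcal T$ contracts a small ball $\{\|\widehat a\|_\rho\le\varepsilon\}$ into itself for $\rho,\varepsilon$ small, and the Banach fixed point theorem produces $\widehat a$ as an analytic symbol. (One may instead construct $\widehat a=\sum_{\ell\ge1}h^\ell\widehat a_\ell$ term by term by the classical method of averaging, solving the cohomological equation at each order via $E$; the same estimates of \S\ref{s:anas} then furnish the analytic bounds on the $a_\ell$.) One then sets $\widehat{G(z\zeta;h)}:=\Pi\big(e^{\widehat a}\widehat q\,e^{-\widehat a}\big)$, which, being fixed by $\Pi$ and of finite $\|\cdot\|_\rho$ norm, is the Weyl quantization of an analytic symbol $G=\sum_\ell h^\ell G_\ell(z\zeta)$; and $G_0=g$ because $e^{\widehat a}\widehat q\,e^{-\widehat a}=\widehat{q_0}+\mathcal O(h)$.

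\emph{Main obstacle.} The formal skeleton of this argument (the order–by–order averaging) is classical; the real content, and the step I expect to be the main obstacle, is the analytic bookkeeping — making sure that inverting $\ad_{\widehat{q_0}}$ and iterating the commutators preserve the Gevrey–$1$ bounds $|G_\ell|\le A^{\ell+1}\ell!$, so that everything remains in the class \eqref{eq:classs}. This is precisely where the norms of \S\ref{s:anas}, and in particular the estimate \eqref{eq:hitsj} derived from \cite[Proposition 3.2]{HitSj3a}, are indispensable, and it is what makes the conclusion stronger than the $\mathcal O(h^\infty)$ normal forms of \cite{Hi04}.
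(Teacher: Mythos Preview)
Your plan is essentially correct and hits the right ingredients (averaging projector, solving operator via the periodic flow of $z\zeta$, metaplectic invariance \eqref{eq:new4}, and the commutator estimate \eqref{eq:hitsj}). It differs from the paper's proof in two respects, one of which matters.

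The substantive difference is that the paper does \emph{not} try to invert $\ad_{\widehat{q_0}}$ for $q_0=g(z\zeta)$. Instead it first composes with $f=g^{-1}$ at the operator level (as in \eqref{eq:new8}--\eqref{eq:new9}) to reduce to $q_0(z,\zeta)=z\zeta$. This is not cosmetic: once $q_0$ is quadratic, the Weyl commutator is \emph{exact}, $\ad_{\mathcal L(\hat r)}\hat q_0=\tfrac{h}{i}\widehat{\{a,q_0\}}=-h\hat r+h\langle\hat r\rangle$ (see \eqref{eq:defL}), so the ``cohomological inverse'' is literally the integral operator \eqref{eq:new7.1}, and \eqref{eq:new4} gives the $\rho$--bound \eqref{eq:Lbound} in one line. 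Your route keeps $q_0=g(z\zeta)$ and writes $\tfrac{1}{h}\ad_{\widehat{q_0}}=\mathcal L_0+h\mathcal L_1$ with $\mathcal L_0=\tfrac{1}{i}g'(z\zeta)\{z\zeta,\cdot\}$; inverting $\mathcal L_0$ then requires controlling pointwise multiplication by $g'(z\zeta)^{-1}$ in the $\|\cdot\|_\rho$ norm, which is not a $\#$--operation, so one has to peel off further $\mathcal O(h)$ corrections and feed them into the Neumann series. This can be made to work, but it is exactly the step you flagged as the ``main obstacle'', and the paper's preliminary reduction dissolves it.

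The second difference is organisational: rather than a contraction mapping for $\hat a$, the paper builds $\hat a=\sum_{\ell\ge0}\hat a_\ell$ iteratively. At each step one sets $\hat a_{j+1}=\mathcal L(\hat r_{j+1})$, conjugates, and controls the new remainder via the explicit bound \eqref{eq:abc} on $\Ad_{\hat b_0+\hat b_1}\hat q-\Ad_{\hat b_0}\hat q-\ad_{\hat b_1}\hat q$, obtaining geometric decay $\|\hat r_{\ell+1}\|_\rho\le(B\rho)^{\ell+1}h$. Your fixed--point version and the paper's iteration are equivalent in spirit; the iterative form has the advantage that the smallness comes visibly from the factor $\rho$ in \eqref{eq:abc} and Lemma~\ref{l:hitsj}, with $h$ kept as an independent parameter, so one never has to argue that ``$\mathcal O(h)$ is $\mathcal O(\rho)$--small''.
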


\noindent
{\bf Remark.} The assumption on $ q_1 $ in \eqref{eq:defq} is not a serious restriction. Using Lemma \ref{l:ave} below, as well as (\ref{eq:Ad}),   we can can find an analytic $h$--independent symbol $ a_0 $ such that
\begin{equation}
\label{eq:remarque}
e^{\hat a_0 } \circ \hat q \circ e^{-\hat a_0 } = \hat b_0 +
h^2 \hat b_1, \ \  b_0 ( z, \zeta; h ) = g ( z\zeta ) + h g_1 ( z \zeta  ) 
,
\end{equation}
where $ b_1 $ is an analytic symbol. The point of Proposition \ref{p:ave} is that this can be iterated
to infinite order with analytic symbols. 

The proof uses the following lemma. We note that if $  g ( w ) = w $, then \eqref{eq:defa0} below is of the same form as
\begin{equation}
\label{eq:pois1}
\{ g ( z \zeta ), b ( z, \zeta ) \} = a ( z, \zeta ) + a_0 ( z\zeta ) ,
\end{equation}
where $ \{ f, g \} := \partial_\zeta f \partial_z g - \partial_z f \partial_\zeta g $ is the Poisson
bracket with respect to the complex symplectic form on $\mathbb C^2$,
\[  \sigma :=  d \zeta \wedge d z . \]

\begin{lemm}
\label{l:ave}
Let $ \Omega ( \delta ) = D ( 0 , \delta ) \times D ( 0 , \delta ) \subset \mathbb C^{2} $ and
suppose that $ r \in \mathscr O (\Omega ) $, $ \Omega ( \delta ) \Subset \Omega \subseteq \mathbb C^2$. Then
there exist $ a \in \mathscr O  ( \Omega( \delta ) ) $
and $ r_0 \in \mathscr O ( D ( 0 , \delta^2 ) )$ such that for an absolute constant $ K $, we have
\begin{equation}
\label{eq:defa0}  \begin{gathered}  i   ( z \partial_z - \zeta \partial _\zeta ) a ( z, \zeta ) = - r ( z, \zeta ) + \langle r \rangle ( z,  \zeta ) , \ \ \ \langle r \rangle ( z, \zeta ) = r_0 ( z \zeta ) \\
\sup_{ D ( 0, \delta^2 ) } |r_0| \leq K \sup_{ \Omega ( \delta ) } |r| , \ \
\  \sup_{ \Omega( \delta)  } | a | \leq  K  \sup_{ \Omega ( \delta ) } |r|.
\end{gathered} \end{equation}
\end{lemm}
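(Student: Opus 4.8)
The plan is to exploit that $V:=i(z\partial_z-\zeta\partial_\zeta)$ is $i$ times the infinitesimal generator of the circle action $\rho_\theta:(z,\zeta)\mapsto(e^{i\theta}z,e^{-i\theta}\zeta)$, which preserves the product $z\zeta$ and hence maps $\Omega(\delta)$ into itself. Thus $V$ is diagonal in the monomial basis, $V(z^j\zeta^k)=i(j-k)z^j\zeta^k$, and its kernel consists precisely of the functions of $z\zeta$ alone. Writing $r=\sum_{j,k\ge 0}r_{jk}z^j\zeta^k$ on $\Omega(\delta)$ and $M:=\sup_{\Omega(\delta)}|r|$, I would define the averaging as the diagonal projection
\[
\langle r\rangle(z,\zeta):=\frac{1}{2\pi}\int_0^{2\pi}r(e^{i\theta}z,e^{-i\theta}\zeta)\,d\theta=\sum_{j\ge0}r_{jj}(z\zeta)^j=:r_0(z\zeta).
\]
Since $|r_{jj}|\le M\delta^{-2j}$ by the Cauchy estimates on $\Omega(\delta)$, the series $r_0(w)=\sum_j r_{jj}w^j$ converges on $D(0,\delta^2)$, so $r_0\in\mathscr O(D(0,\delta^2))$; and because $(z,\zeta)\mapsto z\zeta$ maps $\Omega(\delta)$ onto $D(0,\delta^2)$ (given $|w|<\delta^2$, choose $z,\zeta$ with $z\zeta=w$ and $|z|=|\zeta|=\sqrt{|w|}<\delta$), one gets $|r_0(w)|=|\langle r\rangle(z,\zeta)|\le\sup\{|r(z',\zeta')|:|z'|=|\zeta'|=\sqrt{|w|}\}\le M$, which is the first bound in \eqref{eq:defa0} with constant $1$.

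Next I would invert $V$ on the off-diagonal part. Matching Taylor coefficients forces $a:=\sum_{j\ne k}\frac{i\,r_{jk}}{j-k}z^j\zeta^k$, and then $i(z\partial_z-\zeta\partial_\zeta)a=i\sum_{j\ne k}(j-k)\frac{i\,r_{jk}}{j-k}z^j\zeta^k=-\sum_{j\ne k}r_{jk}z^j\zeta^k=-r+\langle r\rangle$, as required. To obtain simultaneously convergence on $\Omega(\delta)$, holomorphy, and a bound with an \emph{absolute} constant, I would rewrite this series as a single integral against the $2\pi$-periodic sawtooth $\theta\mapsto\pi-\theta$: using the elementary identity that $\frac{1}{2\pi}\int_0^{2\pi}(\pi-\theta)e^{i(j-k)\theta}\,d\theta$ equals $i/(j-k)$ when $j\ne k$ and $0$ when $j=k$, one gets
\[
a(z,\zeta)=\frac{1}{2\pi}\int_0^{2\pi}(\pi-\theta)\,r(e^{i\theta}z,e^{-i\theta}\zeta)\,d\theta,\qquad(z,\zeta)\in\Omega(\delta).
\]
Since $\rho_\theta$ preserves $\Omega(\delta)$, the integrand is a uniformly bounded holomorphic family in $(z,\zeta)$ over the compact $\theta$-interval, so $a\in\mathscr O(\Omega(\delta))$; and $|a(z,\zeta)|\le\frac{M}{2\pi}\int_0^{2\pi}|\pi-\theta|\,d\theta=\frac{\pi}{2}M$. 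Taking $K=\pi/2$ then gives both inequalities in \eqref{eq:defa0}.

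So the argument is essentially a circle-averaging computation; the only points needing care — and the closest thing to an obstacle — are dimensional. One must check that $r_0$ extends holomorphically all the way out to radius $\delta^2$ (this is exactly the surjectivity of $(z,\zeta)\mapsto z\zeta$ onto $D(0,\delta^2)$ used above), and that the Fourier multiplier $\{r_{jk}\}\mapsto\{r_{jk}/(j-k)\}_{j\ne k}$ costs only an absolute factor rather than one blowing up like $1/\delta$; the explicit kernel $\pi-\theta$, whose $L^1([0,2\pi])$ norm $\pi^2$ depends on nothing, is what makes the constant $K$ genuinely absolute. A final remark worth including is that both $\langle r\rangle$ and $a$, being given by these explicit integrals of $r$ over the group orbit, depend holomorphically on any parameters on which $r$ does, which is convenient when the lemma is iterated in the proof of Proposition~\ref{p:ave}.
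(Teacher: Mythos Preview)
Your proof is correct and is essentially the same as the paper's: both define $r_0$ as the circle average and $a$ via the sawtooth kernel $\pi-\theta$ against $r\circ\rho_\theta$, which is exactly the paper's integral representation (written there in polar coordinates as $-\frac{1}{2\pi}\int_0^{2\pi}(t-\pi)R(r,\rho,\theta+t,\varphi-t)\,dt$). The only cosmetic difference is that the paper first writes out the power-series solution to establish existence (with bounds that blow up at the boundary) before passing to the integral formula for the uniform estimate, whereas you go directly to the integral and extract the explicit constant $K=\pi/2$.
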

\begin{proof}
We note that by scaling, we can assume that $ \delta = 1 $ and that we can write
$$  i r ( z, \zeta ) = \sum_{ m,n \in \mathbb N } r_{mn} z^m \zeta^n ,
\ \ \ \  |r_{mn} | \leq \sup_{\Omega ( 1 ) } |{r}|.
$$
It follows that we can take
\[ \begin{gathered}    a ( z, \zeta ) := \sum_{ m \neq n } r_{mn} ( m - n )^{-1} z^m \zeta^n , \ \ \
i r_0 ( w ) := \sum_{ m \in \mathbb N} r_{mm} w^m ,
\\  | a ( z , \zeta ) | \leq \sup_{\Omega ( 1 ) } |r| ( 1 - |z|)^{-1} ( 1 - |\zeta |)^{-1} , \ \ \
 | r_0 ( w ) | \leq \sup_{\Omega ( 1 ) } |r| ( 1 - | w| )^{-1} .
\end{gathered}
\]
This shows that solutions $a$ and $r_0$ exist in  $ \Omega ( 1 ) $ and $ D (0,1 ) $, respectively, and it only remains for us to verify that the bounds given in (\ref{eq:defa0}) hold. To this end, introducing polar coordinates and writing $ z = r e^{i \theta } $, $ \zeta = \rho e^{ i \varphi } $, $r , \rho < 1 $, we see that (\ref{eq:defa0}) takes the form
\[
\left(\frac{i(r\partial_r - \rho \partial_{\rho}) + (\partial_\theta - \partial_\varphi)}{2}\right)A = -R + R_0,
\]
with the Fourier series expansions
\[ \begin{gathered}
iR(r,\rho, \theta, \varphi ) = \sum_{m,n\in \mathbb N} r_{mn} r^m \rho^n e^{ i (m \theta + n \varphi ) } , \ \
A(r,\rho, \theta, \varphi ) = \sum_{m\neq n} \frac{r_{mn}}{m-n} r^m \rho^n e^{ i (m \theta + n \varphi) } , \\
iR_0(r,\rho, \theta, \varphi )  = \sum_{m\in \mathbb N} r_{mm} r^m \rho^m e^{ im ( \theta + \varphi )} .
\end{gathered} \]
It is now easy to check that $ A $ and $ R_0 $ have the following integral representations
\begin{equation}
\label{eq:new5}
A (r,\rho, \theta, \varphi ) = -\frac{1}{2\pi} \int_0^{2 \pi} (t-\pi) R (r,\rho, \theta + t , \varphi - t ) dt ,
\end{equation}
\begin{equation}
\label{eq:new6}
R_0 (r,\rho, \theta, \varphi ) = \frac{1}{2 \pi} \int_0^{2\pi}  R (r,\rho, \theta + t , \varphi - t ) dt.
\end{equation}
This gives the uniform bounds
\[  \sup_{T(r, \rho )} | a (z, \zeta ) | \leq K \sup_{ \Omega ( 1 ) } |r|
 \ \  \sup_{T(r, \rho ) } | r_0(z \zeta ) | \leq K \sup_{ \Omega ( 1 ) } |r|, \ \ \
 0 < r < 1, \ 0 < \rho < 1 . \]
By continuity, the bounds remain valid for $ 0 \leq r, \rho < 1 $, and \eqref{eq:defa0} follows.
\end{proof}

\noindent
{\bf Remark.} For future reference, it will be convenient to rewrite (\ref{eq:new5}), (\ref{eq:new6}) as follows,
\begin{equation}
\label{eq:new7}
r_0 = \langle{r\rangle} = \frac{1}{2\pi} \int_0^{2\pi} r\circ \exp(itH_{z\zeta})\, dt,
\end{equation}
\begin{equation}
\label{eq:new7.1}
a = -\frac{1}{2\pi} \int_0^{2\pi} (t-\pi)\, r\circ \exp(itH_{z\zeta})\, dt = -\frac{1}{2\pi} \int_0^{2\pi} (t-\pi)\, (r - r_0) \circ \exp(itH_{z\zeta})\, dt.
\end{equation}
Here $H_{z\zeta} = z\partial_z - \zeta \partial_{\zeta}$ is the Hamilton vector field of $z\zeta$.

\begin{proof}[Proof of Proposition {\rm \ref{p:ave}}]
We modify (with a simpler goal) the proof in  \cite[\S 3]{HitSj3a}, \cite[Appendix A]{HitSj3a}. We first recall standard notation:
\begin{equation}
\label{eq:Ad}
\Ad_{\hat a } \hat q := e^{ \hat a }\circ \hat q \circ e^{- \hat a }, \ \ \
\ad_{\hat a } \hat q := [ \hat a , \hat q ], \ \ \
\Ad_{\hat a } \hat q = \sum_{ k=0}^{\infty} \frac{1}{k!} \ad_{\hat a }^k \hat q  ,
\end{equation}
where all the expressions are understood as compositions of formal pseudodifferential
operators quantizing formal analytic symbols, using the Weyl quantization.

We start by reducing the problem to the case of $g(\tau) = \tau$ in (\ref{eq:defq}). For that we find a holomorphic function $ f = g^{-1}$ near $ 0 $. We then define an analytic symbol
\begin{equation}
\label{eq:new8}
q_{\rm{new}}  = f ( 0 ) + f' ( 0 ) q + \tfrac12 f''(0) q \# q + \cdots,
\end{equation}
where, as formal analytic pseudodifferential operators, $ f ( \hat q ) = \widehat{ q_{\rm{new}}} $, and
\begin{equation}
\label{eq:new9}
q_{\rm{new} } ( z, \zeta ) = z \zeta + h q_{{\rm{new}}, 1} ( z, \zeta ) + h^2 q_{{\rm{new}}, 2} ( z, \zeta ) +
\cdots.
\end{equation}
In what follows, when proving \eqref{eq:conj}, we shall assume therefore that
\begin{equation}
\label{eq:q02zz}
q_0 ( z, \zeta ) =  z \zeta .
\end{equation}
It will be easy to return to the original $\hat q$, since we have, in the sense of asymptotic expansions,
\[ f ( Ad_{\hat a} \hat q ) = \Ad_{\hat a} f ( \hat q ).
\]
Performing a preliminary conjugation, as explained in the remark after Proposition \ref{p:ave}, we may also assume that $q_{{\rm{new}}, 1} ( z, \zeta ) = g_{{\rm{new}},1} ( z \zeta )$ in (\ref{eq:new9}).

We now have the following consequence of \eqref{eq:hitsj} in which we consider $ q $ as fixed and have all the constants depending on $ q $,
\begin{equation}
\label{eq:abc}
\begin{gathered}  \Ad_{ \hat b_0 + \hat b_1 } \hat q =
\Ad_{ \hat b_0 } \hat q + \ad_{ \hat b_1} \hat q  + R ( \hat b_0, \hat b_1 ) , \\
\| R ( \hat b_0, \hat b_1 ) \|_\rho \leq C \rho\,\| \hat b_1 \|_\rho  \max(\| \hat b_0 \|_\rho, \| \hat b_1 \|_\rho ) e^{ 4
\max(\| \hat b_0\|_\rho , \|\hat b_1\|_\rho)}.
\end{gathered}
\end{equation}
To see this we put $ J ( k) = \{ 0 , 1 \}^k \setminus (0, \ldots, 0 ) $ and write, using (\ref{eq:Ad}),
\begin{equation}
\label{eq:R}
R ( \hat b_0 , \hat b_1 ) = \sum_{ k=2}^{\infty} \frac1 {k!} \sum_{ j=(j_1,\ldots ,j_k) \in J(k)   }
\ad_{ \hat b_{j_1} } \cdots \ad_{ \hat b_{ j_k }}  \hat q .
\end{equation}
After $k-1$ applications of the inequality
\[
\norm{{\rm ad}_{\hat a}\, \hat b}_{\rho} = \norm{\hat a \circ \hat b - \hat b \circ \hat a}_{\rho} \leq 2\norm{\hat a}_{\rho} \norm{\hat b}_{\rho}
\]
and one application of Lemma \ref{l:hitsj}, we see that
\begin{equation}
\label{eq:ad}
\|  \ad_{ \hat b_{j_1} } \cdots \ad_{ \hat b_{ j_k } } \hat q \|_\rho \leq 2^{k-1} \|\hat b_{j_1}\|_{\rho} \cdots \|\hat b_{j_{k-1}}\|_{\rho}\left(C(q)\rho \|\hat b_{j_k}\|_{\rho}\right),
\end{equation}
and hence, using (\ref{eq:R}) and (\ref{eq:ad}) we get
\[ \begin{split} \| R ( \hat b_0 , \hat b_1 ) \|_\rho & \leq C(q) \rho \sum_{ k = 2}^\infty
\frac{2^{k-1}}{k!}  \sum_{ j=(j_1,\ldots ,j_k) \in J(k)} \|\hat b_{j_1}\|_{\rho} \cdots \|\hat b_{j_{k}}\|_{\rho} \\
&  \leq C(q) \rho \sum_{ k = 2}^\infty \frac{2^{k-1}}{k!} \left(2^k -1\right) \| \hat b_1 \|_\rho (\max( \| \hat b_0 \|_\rho , \| \hat b_1 \|_\rho))^{k-1}.
\end{split}
\]
This implies \eqref{eq:abc}, since
\[
\sum_{k=2}^{\infty} \frac{2^{k-1}}{k!} \left(2^k -1\right) \lambda^{k-1} \leq \sum_{k=2}^{\infty} \frac{4^k}{(k-2)!} \lambda^{k-1} = 16\lambda e^{4\lambda},\quad \lambda \geq 0
\]

Given an analytic symbol $r$, it follows from \eqref{eq:q02zz} and Lemma \ref{l:ave} that there exists an analytic symbol $a$, given in (\ref{eq:new7.1}), such that
\begin{equation}
\label{eq:homol}
\frac{1}{i} \{ a , q_0 \} = - r + \langle r \rangle  , \ \ \langle{r\rangle} ( z, \zeta ) =r_0  ( z \zeta).
\end{equation}
On the level of operators, we shall use the notation $\hat a = \mathcal L ( \hat r ) = \mathcal L(\hat r - \hat r_0)$, and it follows from (\ref{eq:new4}), (\ref{eq:new7}), (\ref{eq:new7.1}) and the triangle inequality for the $\rho$--norm that we have
\begin{equation}
\label{eq:Lbound}
\| \mathcal L ( \hat r ) \|_\rho \leq \left(\frac{1}{2\pi}\int_0^{2\pi} \abs{t-\pi}\, dt\right) \| \hat r - \hat r_0 \|_\rho = \frac{\pi}{2} \| \hat r - \hat r_0 \|_\rho, \quad \| \hat r_0 \|_\rho \leq \| \hat r \|_\rho.
\end{equation}
Since $ q_0 $ is quadratic, we have an exact formula
\begin{equation}
\label{eq:defL}
\ad_{ \mathcal L (\hat r ) } \hat q_0 = \frac{h}{i} \widehat{\{a,q_0\}} = - h \hat r + h \langle \hat r \rangle .
\end{equation}
Here and below, we simplify the notation by writing $ \langle \hat r \rangle $ for $ \widehat {\langle r \rangle }$.

In the notation of \eqref{eq:defq} we write
\[
q = q_0 + hr_0,
\]
where $r_0$ is an analytic symbol such that $r_0 - \langle{r_0\rangle} = \mathcal O(h)$ in the sense of usual symbols, and let us define
\[
\hat a_0 := \mathcal L ( \hat r_0 ) , \ \ \    \| \hat a_0  \|_\rho \leq  \frac{\pi}{2} \| \hat r_0 - \langle{\hat r_0 \rangle}\|_\rho.
\]
Here we have used (\ref{eq:Lbound}). We have
\[
\left\|(\hat r_0 - \langle{\hat r_0\rangle})/h\right\|_{\rho_0} < \infty,
\]
for some $\rho_0 > 0$. Hence, for $0 < \rho \leq \rho_0$,
\begin{equation}
\label{eq:a_0}
\| \hat a_0\|_{\rho} \leq \tfrac{1}{2}\pi h\,\left \|(\hat r_0 - \langle{\hat r_0\rangle})/{h}\right \|_\rho \leq K h,\quad ,\quad K := \tfrac{1}{2}\pi \, \left \|(\hat r_0 - \langle{\hat r_0\rangle})/{h}\right \|_{\rho_0}.
\end{equation}
Here following~\cite[\S 3]{HitSj3a}, we view $h$ as an independent parameter. We then use \eqref{eq:abc} with $ b_0  = 0 $, $ b_1 = a_0 $, as well as (\ref{eq:defL}), to obtain
\[  \begin{split}
\Ad_{ \hat a_0 }  \hat q &  = \hat q + \ad_{\hat a_0  } \hat q + R ( 0, \hat a_0 )
= \hat q_0 + h \hat r_0 + \ad_{\hat a_0} \hat q_0 + \ad_{\hat a_0}(\hat q - \hat q_0) + R ( 0 , \hat a_0 ) \\& =
\hat q_0 + h \langle \hat r_0\rangle +h  \hat r_1.
\end{split} \]
Here $h\hat r_1 = \ad_{\hat a_0}(\hat q - \hat q_0) + R ( 0 , \hat a_0 )$, so that
\begin{equation}
\label{eq:r1}
\hat r_1 = \ad_{\hat a_0} \hat r_0 + \frac{1}{h} R ( 0 , \hat a_0 ).
\end{equation}
We get, using Lemma~\ref{l:hitsj}, (\ref{eq:abc}), and (\ref{eq:a_0}),
\[
\|\hat r_1 \|_\rho \leq C\rho \| \hat a_0 \|_\rho + \frac{1}{h} C \rho \|  \hat a_0 \|_\rho^2 e^{ 4 \|   \hat a_0  \|_\rho } \leq B\rho h.
\]
Here $ B $ is a large constant depending on $q$, which will be chosen later to close the argument.

Arguing inductively, assume that we found $ a_\ell $, $ 0 \leq \ell \leq j $, such that
\begin{equation}
\label{eq:aell}
\begin{gathered}
\Ad_{ \hat a_0 + \cdots + \hat a_j } \hat q  = \hat q_0 + h \sum_{\ell = 0}^{j} \langle \hat r_\ell  \rangle  + h  \hat r_{j + 1}, \\
\| \hat a_\ell \|_{\rho} \leq K B^{\ell}  \rho^{\ell} h, \ \ \ \| \hat r_{\ell + 1 }  \|_\rho \leq B^{\ell+1} \rho^{\ell+1} h , \ \ \
0 \leq \ell \leq j.
 \end{gathered}
 \end{equation}
We now define, assuming, as we may, that the constant $K$ in (\ref{eq:a_0}) satisfies $\pi \leq K$,
\begin{equation}
\label{eq:aj+1}
\hat a_{j + 1} = \mathcal L ( \hat r_{j + 1} ) , \ \ \ \| \hat a_{j + 1} \|_\rho \leq \pi \| \hat r_{j + 1} \|_\rho \leq K B^{j + 1} \rho^{j+1}  h.
\end{equation}
Here we have used (\ref{eq:Lbound}). We get therefore, by an application of (\ref{eq:abc}), and using (\ref{eq:aell}),
\[ \begin{split}
\Ad_{ \hat a_0 + \cdots + \hat a_{j+1}  } \hat q & = \Ad_{ \hat a_0 + \cdots + \hat a_j } \hat q
 + \ad_{\hat a_{j+1}} \hat q + R (\hat a_0 + \cdots + \hat a_{j} , \hat a_{j+1} )\\
  &= \hat q_0 + h \sum_{\ell = 0}^{j+1} \langle \hat r_\ell  \rangle + h \hat r_{j+2}, \ \
\end{split} \]
where, similarly to (\ref{eq:r1}),
\[
\hat r_{j+2} = \ad_{\hat a_{j+1}} \hat r_0 + \frac{1}{h} R (\hat a_0 + \cdots + \hat a_{j} , \hat  a_{j+1} ).
\]
Using Lemma \ref{l:hitsj} \eqref{eq:abc}, (\ref{eq:aell}) and (\ref{eq:aj+1}), we obtain that
\[ \begin{split}
\| \hat r_{j+2} \|_\rho & \leq C\rho \| \hat a_{j+1}\|_{\rho} + \frac{1}{h} C \rho \| \hat a_{j+1} \|_\rho
\left( \sum_{\ell=0}^{j+1} \| \hat a_\ell \|_\rho \right) e^{ 4 \sum_{\ell=0}^{j+1} \|  \hat a_\ell \|_\rho   } \\ &
\leq CKB^{j+1} \rho^{j+2} h + CKB^{j+1} \rho^{j+2} 2Kh e^{8Kh} \leq  B^{j+2} \rho^{j+2} h ,
 \end{split} \]
provided that we choose $ B $ large enough depending on $ K $ and $C$, and let $0 < \rho$ be small enough so that $B\rho \leq 1/2$.

Hence, by induction, we have constructed $a_{\ell}$, $\ell \geq 0$, satisfying $\| \hat a_{\ell}\|_{\rho} \leq K2^{-\ell}h$, such that $a = \sum_{\ell = 0}^{\infty} a_{\ell}$ is an analytic symbol for which \eqref{eq:conj} holds.
\end{proof}

We now need the following fact which we first establish on a formal level. For more on functional calculus for quadratic symbols see Derezi\'nski \cite{Der} and H\"ormander \cite{hor}.
\begin{lemm}
\label{l:F2f}
Suppose $ F ( z, \zeta; h ) $ is an analytic symbol of the form
\begin{equation}
\label{eq:F2f} F ( z, \zeta;  h ) = f ( z \zeta; h ) , \ \ \ f ( w; h ) = \sum_{k=0}^\infty h^k f_k ( w ) , \ \  \sup_{ D ( 0, \delta ) } |f_k| \leq A^{k+1} k^k.  \end{equation}
Then there exists an analytic symbol $ g ( w; h ) = \sum_{ k=0}^\infty h^k g_k ( w ) $ such that, as formal pseudodifferential operators,
\begin{equation}
\label{eq:F2f1}
 F^{\rm{w}} ( z, hD_z ;  h ) = g ( z hD_z + h/2i; h) .
 \end{equation}
\end{lemm}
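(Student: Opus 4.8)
The plan is to route everything through the single operator
$ L := z h D_z + h/(2i) $, which is exactly the Weyl quantization of the quadratic symbol $ z\zeta $ (the identity $(z\zeta)^{\rm w}=zhD_z+h/(2i)$ is immediate from $ hD_z\circ z = z\circ hD_z + h/i $). The first, purely algebraic, step is to iterate the Weyl product $\#$ with $ z\zeta $: since $ z\zeta $ is quadratic the expansion in \eqref{eq:new1}–\eqref{eq:new3} truncates after finitely many terms, so each $((z\zeta)^{\ell})^{\rm w}$ is a \emph{polynomial} in $ L $,
\[
 ((z\zeta)^{\ell})^{\rm w} = R_\ell(L;h), \qquad R_\ell(w;h) = w^\ell + \sum_{1\le i\le \ell/2} r_{\ell,i}\, w^{\ell-2i} h^{2i},
\]
where only even powers of $ h $ occur --- by the metaplectic parity $ (z,\zeta)\mapsto(-\zeta,z) $, under which $ z\zeta\mapsto -z\zeta $ and $ L\mapsto -L $, see \eqref{eq:new4} --- and where the weights $ w^{\ell-2i}h^{2i} $ are forced by the symplectic scaling $ (z,\zeta,h)\mapsto(\lambda z,\lambda\zeta,\lambda^2 h) $, which fixes $ z\zeta $ up to $\lambda^2$ and $ L $ up to $\lambda^2$. (Equivalently $ L z^n = \tfrac{h}{i}(n+\tfrac12)z^n $, so $((z\zeta)^\ell)^{\rm w}z^n = R_\ell(\tfrac{h}{i}(n+\tfrac12);h)z^n$.) Consequently, for any analytic symbol $ g(w;h)=\sum_m h^m g_m(w) $ with $ g_m(w)=\sum_\ell g_{m,\ell}w^\ell $, the Weyl symbol of $ g(L;h)=\sum_m h^m\sum_\ell g_{m,\ell}R_\ell(L;h) $ is a function of $ z\zeta $ alone, equal to $ g(z\zeta;h)+\mathcal O(h^2) $; more precisely the order--$h^k$ coefficient of this symbol is $ g_k(z\zeta) $ plus a correction built only from $ g_0,\dots,g_{k-2} $.

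This triangular (in the $ h $--grading, with identity on the diagonal and a shift by $2$) structure lets one \emph{solve} for $ g $. Given the target $ f(w;h)=\sum_k h^k f_k(w) $, set $ g_0=f_0 $, $ g_1=f_1 $, and recursively define $ g_k := f_k - (\text{order-}h^k\text{ correction built from }g_0,\dots,g_{k-2}) $; each $ g_k $ is then holomorphic near $ 0 $. By construction the Weyl symbol of $ g(L;h) $ coincides term by term with $ F(z,\zeta;h)=f(z\zeta;h) $, i.e. $ F^{\rm w}(z,hD_z;h)=g(L;h)=g(zhD_z+h/2i;h) $ as formal pseudodifferential operators, which is \eqref{eq:F2f1}; running the same recursion in reverse shows $ g $ is unique.

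The main obstacle is the quantitative step: verifying that the $ g_k $ just produced satisfy analytic--symbol bounds $ \sup_{D(0,\delta')}|g_k|\le A_1^{k+1}k^k $ on a (possibly smaller) disc, i.e. that the functional--calculus change of variables $ f\leftrightarrow g $ preserves the Gevrey--$1$ bounds. I would carry this out with the operator norms $ \|\cdot\|_\rho $ of \S\ref{s:anas}: one checks $ \|L\|_\rho<\infty $ (the symbol $ z\zeta $ contributes $ A_0 = z\zeta $, $ A_1 = zD_z $ and nothing higher), whence by the submultiplicativity \eqref{eq:key} one gets $ \|R_\ell(L;h)\|_\rho\le\|L\|_\rho^{\ell} $ and thereby geometric bounds $ |r_{\ell,i}|\le \Lambda^{\ell} $, uniformly in $ i $, with $ \Lambda $ as small as we like after shrinking $ \delta $. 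Inductively estimating $ \sup_{D(0,\delta')}|g_k| $ then reduces to summing the Cauchy--estimate losses $ (\delta''-\delta')^{-(\ell+2i)} $ in the correction terms against these geometric bounds and the $ h^{2i} $ gain; the factorial growth $ k^k $ is robust enough to absorb the loss provided $ \Lambda $ (equivalently $ \delta $) is chosen small. Alternatively --- and perhaps cleaner --- one bounds $ \|F^{\rm w}\|_\rho $ in terms of $ f $ directly, and uses the Boutet de Monvel--Kr\'ee type inversion \eqref{eq:BdMK} applied to the one--variable symbol $ w\mapsto R(w;h) $ implicitly defining $ R_\ell $, obtaining $ g $ with the required bounds in one stroke.
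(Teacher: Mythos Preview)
Your formal argument is sound and is close in spirit to the alternative approach sketched in the Remark following the paper's proof: since $F^{\rm w}$ commutes with $L=(z\zeta)^{\rm w}=zhD_z+h/(2i)$, it must be a function of $L$, and your triangular recursion in the $h$--grading makes this explicit. The parity and scaling observations are correct (note that both are needed together to conclude that only even $h$--powers appear in $R_\ell$). One minor slip: for the Weyl calculus the $A_k$'s associated to the symbol $z\zeta$ do not stop at $k=1$; there is a nonzero $A_2=\tfrac14\partial_z\partial_\zeta$, though of course $\|L\|_\rho<\infty$ still holds.

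The gap is in the quantitative step. You write $\|R_\ell(L;h)\|_\rho\le\|L\|_\rho^\ell$ ``by submultiplicativity \eqref{eq:key}'', but $R_\ell(L;h)=((z\zeta)^\ell)^{\rm w}$ is \emph{not} the $\ell$--fold composition $L^\ell=((z\zeta)^{\#\ell})^{\rm w}$; submultiplicativity bounds the latter, not the former. The ordinary product $(z\zeta)^\ell$ and the Moyal power $(z\zeta)^{\#\ell}$ differ precisely by the $r_{\ell,i}h^{2i}(z\zeta)^{\ell-2i}$ terms you are trying to control, so the argument as written is circular. Even granting a bound on $\|\widehat{(z\zeta)^\ell}\|_\rho$, extracting uniform bounds on the individual coefficients $r_{\ell,i}$ from a single operator norm requires further work you have not indicated; and the Boutet de Monvel--Kr\'ee alternative you mention would need an ellipticity hypothesis on the one--variable symbol that is not evidently available here.

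The paper avoids this inversion problem by first passing to the classical quantization (the change $e^{ihD_zD_\zeta/2}$ preserves both the form ``function of $z\zeta$'' and analytic--symbol bounds, by analytic stationary phase), and then using the elementary identity $z^n\partial_z^n=p_n(z\partial_z)$ with $p_n(t)=\prod_{j=0}^{n-1}(t-j)$. The Stirling--type coefficients of $p_n$ satisfy the crude explicit bound $|a_{k,n}|\le 2^n n^k$, and a short direct estimate then gives $|g_k(w)|\le (2B)^{k+1}k^k$ on a shrunk disc. This is more hands--on than your $\rho$--norm strategy but writes $g$ as an \emph{explicit} double sum in the Taylor coefficients of $f$ and the $a_{k,n}$, so the analytic bounds follow by termwise summation with no inversion needed.
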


\noindent
{\bf Remark.} Functional calculus of formal analytic pseudodifferential operators is reviewed in a self-contained way in \cite[\S a.2]{HelSj}. In our case, we have
\[
g(zhDz + h/2i; h) := \frac{1}{2\pi i } \int_{\partial D( 0, \varepsilon ) } \left( \lambda - z hD_z - \frac{h}{2i}\right)^{-1} g ( \lambda; h ) d\lambda,
\]
where $\varepsilon > 0$ is small enough independent of $ h $ and $(\lambda - z hD_z - h/2i )^{-1}$, for $\lambda \in \partial D ( 0, \varepsilon)$, is a formal analytic pseudodifferential operator, whose Weyl symbol $R(\lambda,z,\zeta;h)$ is an analytic symbol for $ ( z, \zeta ) \in \neigh_{\mathbb C^2 } ( 0 ) $, satisfying
\[
R(\lambda,z,\zeta;h) \# (\lambda - z\zeta) = 1,
\]
using the Weyl composition given by (\ref{eq:new1}).

\begin{proof}[Proof of Lemma {\rm \ref{l:F2f}}]
We shall prove \eqref{eq:F2f1} for the classical quantization and for a modified $ g $:
\begin{equation}
\label{eq:F2g}
F ( z, hD_z ; h ) = g ( zhD_z; h ) .
\end{equation}
A modification for the Weyl quantization follows from \cite[Theorem 4.13]{z12} and the fact that $ \exp( ih D_z D_\zeta/2 ) f ( z\zeta; h ) =
f_{1/2} ( z \zeta; h ) $, where $ f_{1/2}$ is an analytic symbol, by analytic stationary phase~\cite[Chapter 2]{sam}.

Let us first assume that $f$ in (\ref{eq:F2f}) is independent of $h$ and that $f(0) = 0$, since the constant term contribution is obvious. Writing
\[
f(w) = \sum_{n=1}^{\infty} f_{n} w^n,\quad \abs{f_{n}}\leq B^{n+1},\quad n=1,2,\ldots,
\]
we obtain formally,
\begin{equation}
\label{eq:L.3.4.1}
F(z,hD_z;h) = \sum_{n=1}^{\infty} f_{n}i^{-n} z^n\, (h\partial_z)^n.
\end{equation}
An induction argument based on $ ( z \partial_z - n ) ( z^n \partial^n_z ) = z^{n+1} \partial^{n+1}_z $ shows that
\begin{equation}
\label{eq:L.3.4.2}
z^n \partial_z^n = p_n ( z \partial_z ) , \ \  p_n ( t ) = \prod_{j=0}^{n-1} ( t - j  ) =
\sum_{k =0}^{n-1} a_{k,n} t^{n-k}, \ \ \  |a_{k,n}| \leq  2^n n^{k},
\end{equation}
and we get, combining (\ref{eq:L.3.4.1}) and (\ref{eq:L.3.4.2}),
\begin{equation}
F(z,hD_z;h) = \sum_{n=1}^{\infty} h^n f_{n}i^{-n} p_n(z\partial_z) = \sum_{n=1}^{\infty} f_n i^{-n} \left(\sum_{k=0}^{n-1} h^k a_{k,n} (zh\partial_z)^{n-k}\right).
\end{equation}
We obtain therefore (\ref{eq:F2g}), with
\[ \begin{split}
g (w; h ) & = \sum_{n=1}^\infty f_{n} \left(\sum_{ k=0}^{n-1} h^k a_{k,n} i^{-k} w^{n-k}\right)
=  \sum_{k=0}^\infty h^k \left(\sum_{n=k+1}^\infty f_n a_{k,n} i^{-k} w^{n-k}\right) \\
& = \sum_{ k=0}^\infty
h^k g_k (w ) , \ \ \  g_k (w) = \sum_{ p=1}^\infty  f_{ k+p} a_{ k, k+p} i^{-k} w^p.
\end{split} \]
Here we estimate, using (\ref{eq:L.3.4.2}),
\[  \begin{split}
|g_k ( w ) | &  \leq \sum_{ p=1}^\infty B^{k+p+1} 2^{k+p} (k+p)^{k} |w|^p  \leq
(2B)^{k+1} k^k \sum_{p=1}^{\infty} (2Be |w| )^p \\
&\leq (2B)^{k+1} k^{k} , \ \ \  |w| \leq ( 4 e B )^{-1} .
\end{split} \]
Since \eqref{eq:F2f} implies that $ f_k ( w ) = \sum_{p=0}^\infty f_{kp} w^p $, $|f_{kp}| \leq \delta^{-p} A^{k+1} k^k $, the general case follows from the $ h$--independent case.
\end{proof}

\noindent
{\bf Remark}. An alternative approach to Lemma \ref{l:F2f} proceeds as follows. Let $q_0 = z\zeta$, so that $q_0^{\rm w}(z,hD_z) = zhD_z + {h}/{2i}$, and let us observe that
\[
[F^{\rm w}(z,hD_z;h),q_0^{\rm w}(z,hD_z)] = \frac{h}{i} {\rm Op}^{\rm w}_h(\{F,q_0\}) = 0,
\]
in view of (\ref{eq:F2f}). To conclude that $F^{\rm w}(z,hD_z;h)$ is a function of $q_0^{\rm w}$, we may argue as in~\cite[Appendix A]{HitSj3a}: put $z = e^s$ and work locally near $z = r > 0$ for some fixed small $r > 0$. Then $zhD_z = hD_s$ and using the fact that the class of analytic pseudodifferential operators is preserved under analytic changes of variables, we conclude that in the new coordinates,
\[
F^{\rm w} = r^{\rm w}(s,hD_s;h),
\]
where $r$ is an analytic symbol. Using that $[r^{\rm w}(s,hD_s;h),hD_s] = 0$ we get $r^{\rm w} = r^{\rm w}(hD_s;h)$, and returning to the $z$-coordinates gives
\begin{equation}
\label{eq:monodr}
F^{\rm w}(z,hD_z;h) = g(q_0^w;h),
\end{equation}
where $g$ is an analytic symbol. An argument of analytic continuation around the origin shows that the representation (\ref{eq:monodr}) is valid in a full neighborhood of $0$.

For completeness we also include the following, essentially well known, result, see also~\cite[Lemma 4.3.17]{Dur}.
Its relevance comes from the fact that in the WKB literature (see for instance \cite{seiwi})
one typically computes the quantum action $ S ( E; h )$ while the eigenvalues (or resonances)
are given in terms of its inverse $ G ( z; h ) $. In this paper we construct $ G $ directly
but of course the connection with actions is there -- see Theorem \ref{t:action} for an indication of this.

\begin{prop}
\label{p:inverse}
Suppose that $ S ( x; h ) $, $ x \in \neigh_{\mathbb C} ( 0 ) $, is an analytic symbol
such that $ S_0 ( 0 ) = 0 $, $ S'_0 ( 0 ) \neq 0 $. Then there exists an analytic
symbol $ G ( x; h ) $ such that
\begin{equation}
\label{eq:inv}    S (  G ( x; h ); h ) \equiv x , \ \  x \in \neigh_{\mathbb C} ( 0 ) .
\end{equation}
The equivalence is in the sense of formal analytic symbols.
\end{prop}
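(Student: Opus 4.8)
The plan is to construct $G$ first as a formal power series in $h$ by solving \eqref{eq:inv} order by order, and then to upgrade it to an analytic symbol by a fixed-point argument in the analytic-symbol norms of \S\ref{s:anas}.

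For the formal construction, matching the $h^0$-coefficients in $S(G(x;h);h)=x$ gives $S_0(G_0(x))=x$; since $S_0$ is holomorphic near $0$ with $S_0(0)=0$ and $S_0'(0)\neq 0$, the holomorphic inverse function theorem provides $G_0:=S_0^{-1}\in\mathscr O(D(0,\delta))$ with $G_0(0)=0$ and $G_0'(0)=1/S_0'(0)\neq 0$ for some $\delta>0$. Writing $S=S_0+hT$, where $T:=(S-S_0)/h$ is again an analytic symbol, the identity \eqref{eq:inv} is equivalent to
\[
G(x;h)=G_0\bigl(x-h\,T(G(x;h);h)\bigr),
\]
and matching $h^n$-coefficients for $n\geq 1$ expresses $G_n$ polynomially in $G_0,\dots,G_{n-1}$, finitely many of their derivatives, and $T_0,\dots,T_{n-1}$ (all evaluated at $x$ or $G_0(x)$). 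In particular each $G_n$ is holomorphic on a fixed disc $D(0,\delta_1)$, $0<\delta_1\leq\delta$, so $G=\sum_{n\geq 0}h^nG_n$ solves \eqref{eq:inv} as a formal series in $h$ with holomorphic coefficients.

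It remains to see that this $G$ is an analytic symbol. I would realize it as the limit of the iteration $G^{(0)}:=G_0$, $G^{(k+1)}(x;h):=G_0\bigl(x-h\,T(G^{(k)}(x;h);h)\bigr)$: since the correction in each step carries a factor of $h$, one has $G^{(k+1)}-G^{(k)}=\mathcal O(h^{k+1})$, so the iterates converge to the formal solution constructed above. The content is to bound them uniformly, for which I would use two standard facts, both quantitative in (the $\xi$-independent specialization of) the norms $\|\cdot\|_\rho$ of \eqref{eq:norhm}: that the analytic-symbol class is stable under composition with a fixed holomorphic function in the spatial variable (an elementary computation combining Fa\`a di Bruno's formula with the Cauchy estimates), and the submultiplicativity \eqref{eq:key}. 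Because $hT$ carries a factor of $h$, one has $\|\,\widehat{hT}\,\|_\rho\leq C\rho$ on a small polydisc, which --- exactly as in the proof of Proposition \ref{p:ave}, via \eqref{eq:a_0} --- furnishes the smallness needed to make $G\mapsto G_0(x-h\,T(G;h))$ a contraction on a ball in the corresponding space of analytic symbols once $\rho$ is small enough; its fixed point is then an analytic symbol and coincides with the formal solution. (Equivalently, replacing $S$ by $S_0^{-1}\circ S$ reduces everything to the case $S_0=\mathrm{id}$, where the iteration is simply $G^{(k+1)}=x-h\,T(G^{(k)};h)$, or one may run the argument through the Borel-transform realization \eqref{eq:Borel}.)

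The main obstacle is precisely this last step --- the Gevrey bounds $|G_n|\leq A^{n+1}n!$. The order-by-order recursion for $G_n$ is genuinely nonlinear, with a right-hand side that is a sum of combinatorially many terms built from the $S_j^{(m)}(G_0)$ and products of the $G_k$, so a naive term-by-term estimate loses too much; the point of phrasing the argument through the analytic-symbol norm calculus of \S\ref{s:anas} (or, alternatively, a Cauchy-majorant argument) is exactly that those norms are engineered so that such nonlinear manipulations stay bounded. Everything else --- the inverse function theorem at order $h^0$ and the stability of analytic symbols under holomorphic substitution --- is routine.
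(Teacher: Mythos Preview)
Your approach is essentially the same as the paper's: both reduce to $S_0=\mathrm{id}$ (so that the equation reads $G=x-h\,T(G;h)$, or in the paper's notation $G=x+hH$ with $H=F(x+hH;h)$), both work with the simpler $\xi$-independent $\rho$-norms, and both exploit the explicit factor of $h$ to gain the smallness $\mathcal O(\rho)$ needed to close the argument. The difference is only in packaging. You propose a contraction/fixed-point argument on $G\mapsto x-h\,T(G;h)$; the paper instead Taylor-expands the composition,
\[
H(x;h)=\sum_{\ell\geq 0}\frac{h^\ell}{\ell!}\,F^{(\ell)}(x;h)\,H(x;h)^\ell,
\]
thereby reducing the nonlinear substitution to pointwise products, for which the submultiplicativity $\|H^\ell\|_\rho\leq\|H\|_\rho^\ell$ applies directly. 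This yields an explicit iterative inequality $a_n\leq\exp(\rho AB\,a_{n-1})$ for the partial norms, bounded for $\rho AB\leq 1/e$ (Euler's observation).

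The one soft spot in your outline is exactly the step you flag: your contraction argument needs a Lipschitz bound for $G\mapsto T(G;h)$ in the $\rho$-norm, and this is a composition of \emph{two} analytic symbols in the spatial variable, not merely ``composition with a fixed holomorphic function'' plus product-submultiplicativity. The paper's Taylor-expansion trick is precisely the device that converts this composition into products and thus sidesteps the need for a separate composition lemma; if you carry out your fixed-point argument in detail you will find yourself doing the same expansion.
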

\begin{proof}
We define a simpler version of the $\rho$-norms of analytic symbols,
\begin{equation}
\label{eq:normr} \| A \|_{ \rho} := \sum_{ k = 0 }^\infty \frac{ \rho^k} {k!} \| A_k \|, \ \
A (x;h) = \sum_{ k=0}^\infty A_k ( x ) h^k, \ \ \   \| f \| := \sup_{x \in\Omega } |f ( x ) |,
\end{equation}
where $ \Omega $ is a fixed sufficiently small neighbourhood of $ 0$. This norm is finite
for {\em some} $ \rho > 0$ if and only if $ A $ is an analytic symbol. We recall from
\eqref{eq:key} (or rather from a much simpler special case) that
\begin{equation}
\label{eq:Hell}  \| H^\ell  \|_\rho \leq \| H \|_{\rho}^\ell, \ \ \
 H ( x, h )^\ell = \sum_{ p=0}^\infty H_{\ell p} ( x ) h^p. \end{equation}
 If $ S ( x; h ) = \sum_{ j=0}^\infty S_j ( x ) h^j $, then we can find $ G_0 ( x ) $ such that
$ S_0 ( G_0 (x ) ) = x $ near $ 0 $. Replacing $ S ( x; h ) $ by $ S ( G_0 ( x ); h ) $ we
can assume that $ S_0 ( x ) = x $. We then write $ S ( x; h ) = x - h F ( x; h ) $, and postulate the form of $ G ( x; h ) $ to
be
\[ x + \sum_{ \ell = 1}^\infty h^\ell G_\ell ( x ) =: x + h H ( x; h ) = x + h \left( \sum_{ \ell = 0 }^\infty
h^\ell H_\ell ( x )  \right). \]
Hence, \eqref{eq:inv} becomes
\begin{equation}
\label{eq:G2H} H ( x; h ) =  F ( x + h H ( x; h ); h) =  \sum_{ \ell=0}^\infty \frac{h^\ell }{ \ell!} F^{(\ell)} ( x; h )
H( x; h )^\ell. \end{equation}
Here
\begin{equation}
\label{eq:Fell}   \frac{1 }{ \ell!} F^{(\ell)} ( x, h ) = \sum_{ m = 0 }^\infty F_{ m \ell} ( x )  h^m ,  \ \ \
F_{m\ell} ( x ) := \frac{1 }{ \ell!} F_m ^{(\ell)} ( x), \ \ \
\| F_{ m \ell} \| \leq A^{m + \ell +1 } m!.
\end{equation}
It follows from (\ref{eq:G2H}) and (\ref{eq:Fell}) that the terms in the expansion of $ H ( x; h ) $ are obtained from an iteration procedure based on
\eqref{eq:G2H} (note that $ H_{0p} ( x) = \delta_{0p} $):
\begin{equation}
\label{eq:iterH}
H_k ( x ) = \sum_{ p + m + \ell = k } F_{ m \ell } ( x ) H_{ \ell p } ( x ) =
F_{ k0 } ( x ) + \sum_{\substack{p + m + \ell = k\\ \ell \geq 1}} F_{ m \ell } ( x ) H_{ \ell p } ( x ).
\end{equation}

We now define
\[  M_n  := \sum_{k=0}^n \frac{ \rho^k}{k!} \| H_k \| , \ \ \ \    \| H \|_\rho = \sup_{ n } M_n .
\]
Since $ H_{\ell k} $, $ k < n $, depend only on $ H_p $ with $ p < n $, we see that \eqref{eq:Hell}
implies
\begin{equation}
\label{eq:lk2l} \sum_{ k=0}^{n-1} \frac{\rho^k}{k!} \| H_{ \ell k } \| \leq M_ { n-1}^\ell ,
\end{equation}
We now want to estimate $ M_n  $ in terms of $ M_{n-1} $: for that
we use  \eqref{eq:Fell} and \eqref{eq:lk2l} in \eqref{eq:iterH} to obtain, with $ B = A ( 1 - \rho_0 A )^{-1} $,
$ 0 < \rho < \rho_0 < 1/A $,
\begin{multline}
M_n  \leq \sum_{k=0}^n \frac{\rho^k}{k!}\| F_{k0}\| + \sum_{k= 1}^n \frac{\rho^k}{k!} \sum_{\substack{p + m + \ell = k\\ \ell \geq 1}}
\| F_{m\ell } \| \| H_{ \ell p} \| \\
\leq B + \sum_{\substack{1\leq p + m + \ell \leq n\\ \ell \geq 1}} \frac{ \rho^{p + m +\ell} \| F_{m\ell } \| \| H_{ \ell p} \| }
{ ( p + m +\ell ) ! } \leq B +  \sum_{m = 0}^{n-1}\sum_{\ell = 1}^{n} \sum_{p = 0 }^{n-1} \frac{ \rho^{p + m +\ell} A^{m+\ell +1}m! \| H_{ \ell p} \| }{ ( p + m +\ell ) ! } \\
\leq B + \sum_{m = 0}^{n-1}\sum_{\ell = 1}^{n} \sum_{p = 0 }^{n-1} \frac{A (A\rho)^{m+\ell}}{\ell!} \frac{\| H_{ \ell p} \| \rho^p}{p!}
\leq B + \sum_{m = 0}^{n-1}\sum_{\ell = 1}^{n} \frac{A (A\rho)^{m+\ell}}{\ell!} M_{n-1}^{\ell} \\
\leq B + B \sum_{ \ell = 1}^n \frac{ (A\rho)^{\ell} M_{ n-1}^\ell} {\ell !} \leq B \exp \left ( A \rho M_{ n-1 }  \right).
\end{multline}
Putting $ a_n := M_{ n } / B $ we get the following iterative inequality
$  a_n \leq \exp \left( \rho AB a_{n-1}   \right) $, $a_0 \leq 1$.
It has been known since Euler that the sequence $(a_n)$ is bounded if $ \rho AB \leq 1/e $. For $ \rho > 0$ small enough we obtain that
$ \| H \|_\rho < \infty$, proving that $ H$, and hence $ G $, are   analytic symbols.
\end{proof}

\section{Review of pseudodifferential and Fourier integral operators}
\label{s:rev}

Here we recall, with proofs or precise references, the needed results from microlocal analytic theory.

\subsection{Global pseudodifferential operators}
\label{s:global}

We describe the action of pseudodifferential
operators defined by globally analytic symbols on weighted spaces of holomorphic functions following
\cite[Chapter 1]{HiSj15} and \cite[\S 12]{Sj02}  (see also \cite[Chapter 13]{z12} for a gentle introduction).

We start with $\Phi_0$,  a strictly plurisubharmonic quadratic form on $\mathbb C^n $, and the Hilbert space of weighted holomorphic functions:
\begin{equation}
\label{eq:2.4}  H_{\Phi_0} = H_{\Phi_0}  (\mathbb C^n ) := \{ u \in \mathscr O ( \mathbb C^n ) :
\| u \|_{\Phi_0}^2 := \int_{\mathbb C^n } | u ( x ) |^2 e^{ - 2 \Phi_0 ( x )/h } dL ( x) < \infty \} ,\end{equation}
where $ dL ( x) $ is the Lebesgue measure on $ \mathbb C^n = \mathbb R^{2n} $.
To this space we associate a geometric object, a real linear subspace of $ \mathbb C^{2n} $ given by
\begin{equation}
\label{eq2.1}
\Lambda_{\Phi_0} = \left\{\left(x,\tfrac{2}{i}{\partial_x \Phi_0}  (x)\right); \, x\in \mathbb C^n \right\} .
\end{equation}
A function  $1\leq m \in C^{\infty}(\Lambda_{\Phi_0})$ is called an order function on $\Lambda_{\Phi_0}$
if for some $C_0 > 0$, $N_0 > 0$, we have
\begin{equation}
\label{eq2.1.1}
m(X) \leq C_0 \langle{X-Y\rangle}^{N_0} m(Y),\quad X,Y \in \Lambda_{\Phi_0}.
\end{equation}
Since  $\pi:  \Lambda_{\Phi_0} \rightarrow \mathbb C^n $, $ \pi ( x, \xi ) = x $, identifies $ \Lambda_{\Phi_0 }
$ with $ \mathbb C^n $ we can consider $ m $ as a function on $ \mathbb C^n $. To the order function $ m$ we associate
a more general class of weighted spaces generalizing Sobolev spaces:
\begin{equation}
\label{eq2.6.1}
\begin{gathered}
H_{\Phi_0, m } = H_{\Phi_0,m}(\mathbb C^n) := \mathscr O ( \mathbb C^n ) \cap L^2_{\Phi_0, m } , \ \ \ \
L^2_{\Phi_0, m }  := \{ u  :
\| u \|_{\Phi_0, m} < \infty \} ,\\
\| u \|_{\Phi_0, m}^2 := \int_{\mathbb C^n } m ( x )^2  | u ( x ) |^2 e^{ - 2 \Phi_0 ( x )/h } dL ( x).
\end{gathered}  \end{equation}

Let $P = P(x,\xi;h)$ be a holomorphic function in  $\Lambda_{\Phi_0} + B_{\mathbb C^{2n}} ( 0, \delta ) $,
$ \delta > 0 $, and assume that
\begin{equation}
\label{eq2.1.2}
\abs{P(x,\xi;h)} \leq C m(x),\quad (x,\xi) \in \Lambda_{\Phi_0} + B_{\mathbb C^{2n}} ( 0, \delta ) ,
\end{equation}
and that there is a complete asymptotic expansion
\begin{equation}
\label{eq2.2}
P(x,\xi;h) \sim \sum_{k=0}^{\infty} h^k p_k(x,\xi)  , \ \ \ p := p_0,
\end{equation}
in the space of holomorphic functions satisfying (\ref{eq2.1.2}). Eventually, we strengthen this to
analytic symbol expansion in the sense of \S \ref{s:anas}. We also demand that $ P $ is elliptic near infinity:
\begin{equation}
\label{eq2.3}
\abs{p(x,\xi)} \geq {m(x)}/{C},\quad (x,\xi) \in \Lambda_{\Phi_0} + B_{\mathbb C^{2n}} ( 0, \delta) , \ \ \
\abs{(x,\xi)} \geq R.
\end{equation}
The $h$-Weyl quantization of $ P $, $P^{\rm{w}}(x,hD_x;h)$, is an unbounded operator on $ H_{\Phi_0 }$
defined by
\begin{equation}
\label{eq2.5}
P^{\rm{w}}(x,hD_x;h)u(x) = \frac{1}{(2\pi h)^n} \int\!\!\!\int_{\Gamma_{\Phi_0}(x)} e^{\frac{i}{h}(x-y)\cdot \theta} P\left(\frac{x+y}{2},\theta;h\right) u(y)\, dy\,d\theta,
\end{equation}
where the contour of integration $\Gamma_{\Phi_0}(x) \subset \mathbb C^{2n}_{y,\theta}$ is given by
\begin{equation}
\label{eq2.6}
y \mapsto \theta = \frac{2}{i}\frac{\partial \Phi_0}{\partial x}\left(\frac{x+y}{2}\right) + \frac{i \delta }{2} \frac{\overline{(x-y)}}{\langle{x-y\rangle}} .
\end{equation}
(We note that $ ( (x+y)/2 , \theta ) \in \Lambda_{\Phi_0} + B_{\mathbb C^{2n}} ( 0, \delta) $. For a discussion of contours of
integration in this context see \cite[\S 13.2.1]{z12}.) If $ \widetilde m $ is another order function then
an application of Schur's lemma \cite[(A.5.3)]{res} to (\ref{eq2.5}), (\ref{eq2.6}), gives
\begin{equation}
\label{eq2.7}
P^{\rm{w}}(x,hD_x;h) = \mathcal O(1): H_{\Phi_0,\widetilde{m}}\rightarrow H_{\Phi_0,{\widetilde{m}}/{m}},
\end{equation}
see~\cite{HiSj15},
~\cite[Section 12.2]{Sj02}. In particular, for the domain of $ P^{\rm{w}} $
as an unbounded operator on $ H_{\Phi_0}$ we can take $ H_{\Phi_0, m } $.

The holomorphy of $P$ in a tubular neighbourhood of $\Lambda_{\Phi_0}$ allows us to consider other weights as well:

\begin{prop}
\label{p:neww}
Suppose that $\Phi = \Phi_0 + f \in C^{\infty}(\mathbb C^n)$ satisfies
\begin{equation}
\label{eq2.8}
f \in L^{\infty}(\mathbb C^n),\quad \|{\nabla^k f}\|_{L^{\infty}({\mathbb C^n})} \leq \varepsilon  ,\quad k = 1,2,
\end{equation}
where $ \varepsilon > 0$ is sufficiently small depending on $ \delta $ in \eqref{eq2.1.2}. Then
\begin{equation}
\label{eq2.9}
P^{\rm{w}}(x,hD_x;h) = \mathcal O(1): H_{\Phi,\widetilde{m}} \rightarrow H_{\Phi,{\widetilde{m}}/{m}},
\end{equation}
where the exponentially weighted spaces are defined as in {\rm (\ref{eq2.6.1})}, replacing $\Phi_0$ by $\Phi$.
\end{prop}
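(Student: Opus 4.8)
The plan is to re-run the proof of the mapping property \eqref{eq2.7} with $\Phi_0$ replaced by $\Phi$, using the freedom to deform the contour of integration in \eqref{eq2.5} and exploiting that $P$ is holomorphic, and satisfies \eqref{eq2.1.2}, in the full tube $\Lambda_{\Phi_0}+B_{\mathbb C^{2n}}(0,\delta)$. Concretely, in place of \eqref{eq2.6} I would integrate over the contour $\Gamma_\Phi(x)$ given by
\[
  y\ \longmapsto\ \theta \,=\, \frac{2}{i}\,\frac{\partial \Phi}{\partial x}\Big(\frac{x+y}{2}\Big)
   \,+\, \frac{i\delta'}{2}\,\frac{\overline{(x-y)}}{\langle x-y\rangle},
\]
for a fixed $0<\delta'<\delta$. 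Since $\Phi=\Phi_0+f$ with $\|\nabla f\|_{L^\infty}\le\varepsilon$, the point $\big(\tfrac{x+y}{2},\theta\big)$ lies within distance $2\varepsilon+\delta'/2$ of $\Lambda_{\Phi_0}$, and the straight-line homotopy between the contour \eqref{eq2.6} and $\Gamma_\Phi(x)$ stays within that same tube; hence, as soon as $\varepsilon$ is small enough that $2\varepsilon+\delta'/2<\delta$, Stokes' theorem together with the decay of the integrand along the homotopy shows that $P^{\rm w}(x,hD_x;h)u(x)$ is unchanged when computed over $\Gamma_\Phi(x)$.

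With this contour, \eqref{eq2.9} will follow from Schur's lemma exactly as \eqref{eq2.7} did, once the weight is controlled. Parametrising $\Gamma_\Phi(x)$ by $y\in\mathbb C^n$ one checks, as usual, that $dy\,d\theta=J(x,y)\,dL(y)$ with $|J|\le C$, while $|P(\tfrac{x+y}{2},\theta;h)|\le Cm(\tfrac{x+y}{2})$ by \eqref{eq2.1.2}; the order-function inequality \eqref{eq2.1.1} then converts $\tfrac{\widetilde m(x)}{m(x)}\,m(\tfrac{x+y}{2})$ into $\widetilde m(y)$ at the cost of a factor $\langle x-y\rangle^{C}$. The decisive step is a lower bound for the exponent
\[
  \Phi(x)-\Phi(y)+\Im\big((x-y)\cdot\theta\big),\qquad \theta=\theta\big|_{\Gamma_\Phi(x)},
\]
which governs $\big|e^{\frac{i}{h}(x-y)\cdot\theta}\big|\,e^{-\Phi(x)/h}e^{\Phi(y)/h}$. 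Writing $\Phi(x)-\Phi(y)=2\Re\big((x-y)\cdot\int_0^1\partial_x\Phi(y+t(x-y))\,dt\big)$ by the fundamental theorem of calculus and using that the quadratic part $\Phi_0$ integrates exactly to its midpoint value, this exponent equals $2\Re\big((x-y)\cdot R(x,y)\big)+\tfrac{\delta'}{2}\,\tfrac{|x-y|^2}{\langle x-y\rangle}$ with $R(x,y)=\int_0^1\partial_x f(y+t(x-y))\,dt-\partial_x f(\tfrac{x+y}{2})$. I would then estimate $|R(x,y)|\le\tfrac14\|\nabla^2 f\|_{L^\infty}|x-y|$ for $|x-y|\le1$ and $|R(x,y)|\le 2\|\nabla f\|_{L^\infty}$ for $|x-y|\ge1$; both give, for $\varepsilon$ small depending only on $\delta'$ (hence on $\delta$), a lower bound $c\,\tfrac{|x-y|^2}{\langle x-y\rangle}$ with $c>0$. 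This is exactly where the hypothesis $\|\nabla^2 f\|_{L^\infty}\le\varepsilon$, and not merely $f\in L^\infty$, enters.

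Assembling these pieces, the kernel of $\tfrac{\widetilde m}{m}\,P^{\rm w}(x,hD_x;h)\,\widetilde m^{-1}$ acting on $L^2_\Phi$ is dominated by $\tfrac{C}{h^n}\langle x-y\rangle^{C}e^{-\frac{c}{h}\,|x-y|^2/\langle x-y\rangle}$, whose $L^1$ norms in $x$ and in $y$ are bounded uniformly for $0<h\le h_0$ (rescale $x-y=\sqrt h\,w$ near the diagonal, and note the extra exponential smallness for $|x-y|\gtrsim 1$); Schur's lemma then yields \eqref{eq2.9}. I expect the main obstacle to be essentially bookkeeping: verifying that the homotopy of contours never leaves the holomorphy tube of $P$ — which is what fixes the admissible size of $\varepsilon$ in terms of $\delta$ — and carrying out the two-regime estimate on $R(x,y)$ uniformly in $h$. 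Everything else is a routine repetition of the argument already used to prove \eqref{eq2.7}, now with $\Phi_0$ replaced by $\Phi$.
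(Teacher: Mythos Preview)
Your proposal is correct and follows essentially the same approach as the paper: deform to the contour $\Gamma_\Phi(x)$, show the phase exponent is bounded below by $c\,|x-y|^2/\langle x-y\rangle$ using the midpoint exactness for $\Phi_0$ together with first- and second-derivative bounds on $f$, and conclude by Schur's lemma. The paper writes the residual phase as $-f(x)+\langle\nabla f(\tfrac{x+y}{2}),x-y\rangle+f(y)$ and bounds it directly via \eqref{eq2.13}--\eqref{eq2.14}, whereas you package the same quantity as $-2\Re\big((x-y)\cdot R(x,y)\big)$ and estimate $R$; these are the same computation, and your two-regime bound on $R$ is exactly the content of \eqref{eq2.13} and \eqref{eq2.14}.
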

\begin{proof}
Defining a new contour
\begin{equation}
\label{eq2.10}
\Gamma_{\Phi}(x):  y \mapsto \theta = \frac{2}{i}\frac{\partial \Phi}{\partial x}\left(\frac{x+y}{2}\right) + \frac{i
\delta }{2} \frac{\overline{(x-y)}}{\langle{x-y\rangle}},
\end{equation}
and performing a contour deformation in (\ref{eq2.5}) we obtain a different formula for the action of $P^{\rm{w}}(x,hD_x;h)$ adapted to the weight $\Phi$,
\begin{equation}
\label{eq2.11}
P^{\rm{w}}(x,hD_x;h)u(x) = \frac{1}{(2 \pi h)^n } \int\!\!\!\int_{\Gamma_{\Phi}(x)} e^{\frac{i}{h}(x-y)\cdot \theta} P\left(\tfrac{x+y}{2},\theta;h\right) u(y)\, dy\,d\theta.
\end{equation}
Along the contour $\Gamma_{\Phi}(x)$, we have
\begin{equation}
\label{eq2.12}
\begin{split}
& -\Phi(x)+ {\rm Re}\,\left(i(x-y)\cdot \theta\right) + \Phi(y) \\
& \ \ \ \ \ \ = -\Phi(x) + {\rm Re}\,\left(2{\partial_x \Phi}(\tfrac{x+y}{2})\cdot(x-y)\right) + \Phi(y) -(\delta/2){\abs{x-y}^2}/{\langle{x-y\rangle}} \\
&  \ \ \ \ \ \  = -f(x) + \langle{\nabla f\left(\tfrac{x+y}{2}\right),x-y\rangle}_{{\mathbb R}^{2n}} + f(y) - (\delta/2)  {\abs{x-y}^2}/{\langle{x-y\rangle}}.
\end{split} \end{equation}
Here
\begin{equation}
\label{eq2.13}
-f(x) + \langle{\nabla f\left(\tfrac{x+y}{2}\right),x-y\rangle}_{{\mathbb R}^{2n}} + f(y) \leq 2 \|{\nabla f}\|_{L^{\infty}({\mathbb C}^n)}\abs{x-y},
\end{equation}
and an application of Taylor's formula gives that
\begin{equation}
\label{eq2.14}
\begin{split}
& -f(x) + \langle{\nabla f\left(\tfrac{x+y}{2}\right),x-y\rangle}_{{\mathbb R}^{2n}} + f(y)
 = \int_0^1 (1-t) f''\left(\tfrac{x+y}{2} - t\left(\tfrac{x-y}{2}\right)\right)\,\tfrac{(x-y)}{2}\cdot \tfrac{(x-y)}{2}\,dt \\
& \ \ \ \ \ \  - \int_0^1 (1-t) f''\left(\tfrac{x+y}{2} + t\left(\tfrac{x-y}{2}\right)\right)\,\tfrac{(x-y)}{2}\cdot \tfrac{(x-y)}{2}\,dt
  \leq \tfrac{1}{4} \|{\nabla^2 f}\|_{L^{\infty}({\mathbb C^n})}\abs{x-y}^2.
\end{split}
\end{equation}
(The Hessian and the scalar product are taken in the sense of $\mathbb R^{2n}$.)  It follows, in view of (\ref{eq2.1.2}), (\ref{eq2.8}), (\ref{eq2.12}), (\ref{eq2.13}), and (\ref{eq2.14}), that the absolute value of the effective kernel of the operator in (\ref{eq2.9}), (\ref{eq2.11}) does not exceed a multiple of
\begin{equation*}
\begin{split}
& h^{-n} m\left(\frac{x+y}{2}\right) \frac{\widetilde{m}(x)} {m(x) \widetilde m ( y )}
\exp\left(\frac{1}{h} \left(2\varepsilon \abs{x-y} \min (1,\abs{x-y}) - \frac \delta 2 \frac{\abs{x-y}^2}{\langle{x-y\rangle}} \right)\right)
\\
& \ \ \ \ \ \ \ \ \ \ \ \ \ \ \ \ \ \ \ \leq  C h^{-n} \exp \left(- \frac \delta {4 h} \frac{\abs{x-y}^2}{\langle{x-y\rangle}}\right)\langle{x-y\rangle}^{N},
\end{split}
\end{equation*}
for some $N\geq 0$, provided that $ \varepsilon > 0$ in (\ref{eq2.8}) is sufficiently small. The conclusion (\ref{eq2.9}) follows therefore from  an application of Schur's lemma \cite[(A.5.3)]{res}.
\end{proof}

The next proposition gives a way of approximating the action of $ P $ on $H_{\Phi}$ by multiplication.
The method goes back to the Cordoba--Fefferman proof of the sharp G{\aa}rding inequality and
the version here comes from \cite[Section 12.4]{Sj02}, see also \cite[Proposition 1.4.4]{HiSj15}.
For the reader's convenience we present the proof.
\begin{prop}
\label{prop:qm}
Suppose that $ \psi \in C^1( \mathbb C^n) $ is such that $ \psi, \nabla \psi \in L^\infty ( \mathbb C^n ) $,
and set  $ \xi(x) = (2/i) {\partial_x \Phi}(x)$. Then for order functions $ m_j $, $ j=1,2 $, satisfying
$ m_1 m_2 \geq m $,
\begin{equation}
\label{eq2.16}
\begin{split}
& \left(\psi P^{\rm{w}}(x,hD_x;h)u_1, u_2  \right)_{L^2_{\Phi}}  =
\int_{\mathbb C^n} \psi(x)\, p\left(x,\xi(x)\right)u_1(x)\overline{u_2 (x)}e^{-\frac{2}{h} \Phi(x)}\, dL(x) \\
& \ \ \ \ \ \ \ \ \ \ \ \ \ \ \ \ \ \ \ \ \ \ \  +
\mathcal{O}(h)\|u_1\|_{H_{\Phi, m_1}} \|u_2\|_{H_{\Phi, m_2 }},
\ \ \ u_j \in H_{\Phi,m_j} .
\end{split}
\end{equation}
Furthermore, for $ u\in H_{\Phi,m}$,
\begin{equation}
\label{eq2.17}
\begin{split}
& \left(\psi\, P^{\rm{w}}(x,hD_x;h)u,P^{\rm{w}}(x,hD_x;h)u\right)_{L^2_{\Phi}}
 = \\
 & \int_{\mathbb C^n} \psi(x)\, \abs{p\left(x,\xi(x)\right)}^2 \abs{u(x)}^2 e^{-2\Phi(x)/h}\,dL(x)
 + \mathcal O(h) \| u \|^2_{H_{\Phi,m}} .
\end{split}
\end{equation}
\end{prop}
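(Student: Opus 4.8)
The plan is to run the C\'ordoba--Fefferman--type argument of \cite[\S 12.4]{Sj02} (see also \cite{HiSj15}): insert the contour representation of $P^{\rm w}$ adapted to the weight $\Phi$, extract the ``diagonal'' contribution by the reproducing property of the quantization, and show that the remaining, off-diagonal part gains a full power of $h$. First I would substitute \eqref{eq2.11} into the left-hand side of \eqref{eq2.16} and apply Fubini. Writing $v_j := u_j e^{-\Phi/h}$, so that $\|v_j\|_{L^2(m_j^2\,dL)} = \|u_j\|_{H_{\Phi,m_j}}$, this gives
\[
\left(\psi\, P^{\rm w} u_1, u_2\right)_{L^2_\Phi} = \frac{1}{(2\pi h)^n}\int_{\mathbb C^n}\int\!\!\!\int_{\Gamma_\Phi(x)} \psi(x)\, e^{\frac1h \Psi(x,y,\theta)}\, P\!\left(\tfrac{x+y}{2},\theta;h\right) v_1(y)\,\overline{v_2(x)}\, dy\, d\theta\, dL(x),
\]
with $\Psi(x,y,\theta) := i(x-y)\cdot\theta - \Phi(x) + \Phi(y)$. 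The computation already carried out in the proof of Proposition \ref{p:neww} --- the identity \eqref{eq2.12} together with \eqref{eq2.13}, \eqref{eq2.14} --- shows that, for $\varepsilon$ in \eqref{eq2.8} small enough, along $\Gamma_\Phi(x)$ one has $\Re\Psi \le -c\,\abs{x-y}^2/\langle x-y\rangle$ for a fixed $c>0$, the negativity being supplied precisely by the damping term $\tfrac{i\delta}{2}\,\overline{(x-y)}/\langle x-y\rangle$ in \eqref{eq2.10}. Hence the integrand concentrates in $\abs{x-y}\lesssim\sqrt h$; in the region $\abs{x-y}\ge\varepsilon_0$ it contributes $\mathcal O(e^{-c_0/h})\|u_1\|_{H_{\Phi,m_1}}\|u_2\|_{H_{\Phi,m_2}}$, using $\abs P\le Cm\le C\langle x-y\rangle^{N_0} m_1(y) m_2(x)$ (the order function property) to absorb the polynomial growth into the Gaussian and then Cauchy--Schwarz.

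On the remaining region I would split $P(\tfrac{x+y}{2},\theta;h) = P(x,\xi(x);h) + r(x,y,\theta;h)$, where $\xi(x) = \tfrac2i\partial_x\Phi(x)$ is the value taken by $\theta$ at $y=x$ on $\Gamma_\Phi(x)$, so $r$ vanishes at the diagonal point $(y,\theta) = (x,\xi(x))$ and, by Cauchy estimates for $P$ on the tube $\Lambda_\Phi + B_{\mathbb C^{2n}}(0,\delta)$, $\abs r\le C m(x)\abs{x-y}$ on $\Gamma_\Phi(x)\cap\{\abs{x-y}\le\varepsilon_0\}$. The term $P(x,\xi(x);h)$ is handled by the reproducing property: taking $P\equiv 1$ in \eqref{eq2.11} shows $\tfrac1{(2\pi h)^n}\int\!\!\int_{\Gamma_\Phi(x)} e^{\Psi/h}\, v_1(y)\, dy\, d\theta = v_1(x) + \mathcal O(e^{-c_0/h})$ (this is merely $\mathrm{Id}^{\rm w} = \mathrm{Id}$ on $H_\Phi$), so, using the expansion $P(x,\xi(x);h) = p(x,\xi(x)) + \mathcal O(h) m(x)$ from \eqref{eq2.2} and the bound $m\le m_1 m_2$, this term reproduces the principal term of \eqref{eq2.16} up to $\mathcal O(h)\|u_1\|_{H_{\Phi,m_1}}\|u_2\|_{H_{\Phi,m_2}}$. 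The $C^1$ hypothesis on $\psi$ enters here, when $\psi(x)$ must be compared with its value at the midpoint.

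The crux is the remainder $r$. A size estimate alone only recovers the boundedness of Proposition \ref{p:neww}, since $\int h^{-n}\abs{x-y}\, e^{-c\abs{x-y}^2/h}\, dL$ is $\mathcal O(1)$, not $\mathcal O(h)$; one must therefore use the cancellation of the term linear in $x-y$ against the (essentially Gaussian) weight $e^{-\delta\abs{x-y}^2/(h\langle x-y\rangle)}$. Following \cite[\S 12.4]{Sj02} I would gain the extra power of $h$ by a single integration by parts in the $y$ (equivalently $\overline y$) variables, using the non-degeneracy of $\Psi$ transverse to the diagonal to rewrite each factor $(x-y)_j$ as $h$ times a derivative of $e^{\Psi/h}$ modulo harmless lower order terms; the derivatives so produced fall on $r$, on the cutoff, on the contour coefficients, or --- through $v_1$ --- on the holomorphic derivative of $u_1$, and each is still $\mathcal O\!\left(m(x)\langle x-y\rangle^N\right)$ by Cauchy estimates, so Cauchy--Schwarz again yields $\mathcal O(h)\|u_1\|_{H_{\Phi,m_1}}\|u_2\|_{H_{\Phi,m_2}}$, completing \eqref{eq2.16}. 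I expect this integration-by-parts bookkeeping --- keeping every estimate uniform in $h$ while juggling the order functions $m$, $m_1$, $m_2$ and the non-holomorphic damping term of the contour --- to be the main technical obstacle; conceptually nothing new happens.

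Finally, \eqref{eq2.17} follows by applying the unfolding procedure twice. Since $P^{\rm w} u\in H_\Phi$ with $\|P^{\rm w} u\|_{H_\Phi}\le C\|u\|_{H_{\Phi, m}}$ by \eqref{eq2.9}, one may first invoke \eqref{eq2.16} with second entry $P^{\rm w} u$ (order function $1$, so $m_1 = m$, $m_2 = 1$) to obtain
\[
\left(\psi\, P^{\rm w} u, P^{\rm w} u\right)_{L^2_\Phi} = \int_{\mathbb C^n}\psi(x)\, p(x,\xi(x))\, u(x)\,\overline{(P^{\rm w} u)(x)}\, e^{-2\Phi(x)/h}\, dL(x) + \mathcal O(h)\,\|u\|_{H_{\Phi, m}}^2,
\]
the integral being absolutely convergent because $\abs{p(x,\xi(x))}\le C m(x)$. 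Then I would unfold the conjugated factor $\overline{(P^{\rm w} u)(x)}$ by the same contour argument (now with the conjugated phase, which still localizes to the diagonal); its diagonal contribution is $\overline{p(x,\xi(x))}\,\overline{u(x)}$ and its remainder is once more $\mathcal O(h)\|u\|_{H_{\Phi, m}}^2$, using $\|P^{\rm w}u\|_{H_\Phi}\le C\|u\|_{H_{\Phi,m}}$ to absorb the errors. This gives \eqref{eq2.17}.
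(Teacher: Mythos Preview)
Your overall strategy is correct and would eventually work, but you take a detour that the paper avoids, and the place you flag as ``the main technical obstacle'' is precisely where the paper's argument is cleaner.

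The key difference is the order of the Taylor expansion of the symbol. You split $P(\tfrac{x+y}{2},\theta;h)=P(x,\xi(x);h)+r$ with $|r|\le Cm(x)\,|x-y|$, and then propose to recover the missing power of $h$ by integrating by parts in $y$ (or $\bar y$). The paper instead expands $p(\tfrac{x+y}{2},\theta)$ to \emph{second} order around $(x,\xi(x))$:
\[
p\bigl(\tfrac{x+y}{2},\theta\bigr)=p(x,\xi(x))+\tfrac12\partial_x p(x,\xi(x))\cdot(y-x)+\partial_\xi p(x,\xi(x))\cdot(\theta-\xi(x))+r,
\]
with $|r|\le Cm(x)\langle x-y\rangle^{N_0}|x-y|^2$. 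The quadratic remainder then gives $\mathcal O(h)$ immediately by Schur, since $\int h^{-n}|x-y|^2 e^{-c|x-y|^2/(h\langle x-y\rangle)}\,dL=\mathcal O(h)$; no integration by parts is needed there. The two linear terms, which now have $(y,\theta)$--independent coefficients, are disposed of by exact algebraic identities rather than by estimates: the $(y-x)$ term vanishes because $(y-x)_j e^{i(x-y)\cdot\theta/h}=ih\,\partial_{\theta_j}e^{i(x-y)\cdot\theta/h}$ and the coefficient is $\theta$--independent; the $(\theta-\xi(x))$ term becomes $\partial_{\xi_j}p(x,\xi(x))\,(hD_{x_j}-\xi_j(x))u_1(x)$ after the $\theta$--integral, and then the identity $(hD_{x_j}+\xi_j(x))e^{-2\Phi/h}=0$ together with holomorphy of $\overline{u_2}$ in $\bar x$ shows that pairing against $\psi\,\overline{u_2}\,e^{-2\Phi/h}$ and integrating by parts leaves only $-h\int u_1\,D_{x_j}\bigl(\psi\,\partial_{\xi_j}p\bigr)\overline{u_2}\,e^{-2\Phi/h}\,dL=\mathcal O(h)\|u_1\|_{H_{\Phi,m_1}}\|u_2\|_{H_{\Phi,m_2}}$. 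This last step is where $\nabla\psi\in L^\infty$ actually enters --- not, as you suggest, in comparing $\psi(x)$ with a midpoint value.

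Your route via a single $\bar y$--integration by parts on the full first-order remainder is not wrong, but it forces you to differentiate the contour map \eqref{eq2.10} and to further decompose $r$ into pieces proportional to $(y-x)_j$ and $(\theta-\xi(x))_j$ anyway; at that point you have essentially reproduced the paper's second-order expansion with extra bookkeeping. The paper's organization buys you that the only integration by parts is a clean one in the outer $x$--integral, with all contour complications already absorbed into the $\mathcal O(h)$ Schur estimate for the quadratic remainder. Your treatment of \eqref{eq2.17} by iterating the argument is fine; the paper simply says ``the same argument gives \eqref{eq2.17}''.
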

\begin{proof}
Taylor's formula gives (recall that $\displaystyle \xi(x) = (2/i) {\partial_x \Phi}(x)$):
\begin{equation}
\label{eq2.15.1}
\begin{split}
p\left(\tfrac{x+y}{2},\theta\right) & = p(x,\xi(x)) + \tfrac{1}{2}\partial_x p(x,\xi(x))\cdot (y-x) \\
& \ \ \ \ \ + \partial_{\xi} p(x,\xi(x))\cdot (\theta - \xi(x)) + r(x,y,\theta),
\end{split}
\end{equation}
where
\begin{equation*}
r(x,y,\theta) = \int_0^1 (1-t)\partial_t^2 \left[ p\left(x + t\tfrac{y-x}{2},\xi(x) + t(\theta - \xi(x))\right) \right] dt.
\end{equation*}
Using (\ref{eq2.1.2}) together with the Cauchy estimates, as well as the fact that $\theta - \xi(x) = \mathcal O(\abs{x-y})$ along the contour (\ref{eq2.10}), we get
\begin{equation}
\label{eq2.15.3}
\abs{r(x,y,\theta)} \leq C m(x)\langle{y-x\rangle}^{N_0} \abs{y-x}^2,\quad (y,\theta)\in \Gamma_{\Phi}(x).
\end{equation}
Let
\begin{equation}
\label{eq2.15.4}
\begin{split}
Ru(x) & = \frac{1}{(2 \pi h)^n} \int\!\!\!\int_{\Gamma_{\Phi}(x)} e^{\frac{i}{h}(x-y)\cdot \theta} r\left(x,y,\theta\right) u(y)\, dy\,d\theta \\
& =: \int k(x,y;h)u(y) dL(y).
\end{split}
\end{equation}
The bound on $ R $ as an operator $ H_{\Phi,\widetilde{m}} \rightarrow L^2_{\Phi, \widetilde m/m} $
is given by the bound on the operator on $L^2(\mathbb C^n)$ with the kernel (see \cite[Discussion after (13.4.3)]{z12}):
$$
e^{-\Phi(x)/h} \frac{\widetilde{m}(x)}{m(x)\widetilde{m}(y)} k(x,y;h)e^{\Phi(y)/h}.
$$
Estimates (\ref{eq2.12}), (\ref{eq2.13}), (\ref{eq2.14}), and (\ref{eq2.15.3}) show that it is bounded by
a multiple of
\begin{equation}
\label{eq2.15.5}
   h^{-n} \exp\left(-\frac{\delta}{4h}\frac{\abs{x-y}^2}{\langle{x-y\rangle}}\right) \langle{y-x\rangle}^N \abs{y-x}^2.
\end{equation}
An application of Schur's inequality  \cite[(A.5.3)]{res} shows that the expression in (\ref{eq2.15.5}) is the integral kernel of a convolution operator on $L^2(\mathbb C^n)$ of norm $\mathcal O(h)$. It follows that
\begin{equation}
\label{eq2.15.6}
R = \mathcal O(h): H_{\Phi,\widetilde{m}} \rightarrow L^2_{\Phi, \widetilde m/m}.
\end{equation}

Inserting \eqref{eq2.15.1} in \eqref{eq2.11} (and using the boundedness of the lower order terms) we obtain that $P^{\rm {w}}u(x)$ is given by
\[ \begin{split} & p ( x, \xi ( x ) ) u ( x ) +
 \frac{1}{(2 \pi h)^n} \int\!\!\!\int_{\Gamma_{\Phi}(x)} e^{\frac{i}{h}(x-y)\cdot \theta}
   \partial_{\xi} p(x,\xi(x))\cdot (\theta - \xi(x) ) u (y) dy\, d \theta \\
    & \ \ \ \ \ \ \ \ \ \ + R_1 u ( x ) + R_2 u ( x ) ,
\end{split} \]
where $ R_1  = \mathcal O(h): H_{\Phi,\widetilde{m}} \rightarrow L^2_{\Phi, \widetilde m/m} $ (coming
from $ R $ in \eqref{eq2.15.4} and lower order terms), and
\[ \begin{split} R_2 u ( x ) & =
\frac{1}{2(2\pi h)^n} \int\!\!\!\int_{\Gamma_{\Phi}(x)} e^{\frac{i}{h}(x-y)\cdot \theta}
\partial_x p (x,\xi(x))\cdot (y-x)  u ( y ) dy\, d \theta \\
&   =
\frac{1}{2(2\pi h)^n} \int\!\!\!\int_{\Gamma_{\Phi}(x)}
\partial_x p (x,\xi(x))\cdot ih \partial_\theta ( e^{\frac{i}{h}(x-y)\cdot \theta} ) u ( y ) dy\, d\theta = 0 .
\end{split} \]
Noting that $ \theta_j e^{\frac{i}{h}(x-y)\cdot \theta} = - h D_{y_j}\left(e^{\frac{i}{h}(x-y)\cdot \theta}\right)$ we obtain
\begin{equation*}
\begin{split}
P^{\rm{w}}(x,hD_x;h) u(x) & = p(x,\xi(x))u(x) + \sum_{j=1}^n \partial_{\xi_j}p (x,\xi(x))(hD_{x_j} - \xi_{j}(x))u(x)
+ R_1 u(x).
\end{split}
\end{equation*}
To obtain \eqref{eq2.16} we note that
\[  ( h D_{x_j } + \xi_j (x ) ) e^{ -2 \Phi ( x )/h } =
( h D_{x_j } + 2 D_{x_j } \Phi (x ) ) e^{ -2 \Phi ( x )/h } = 0 , \]
so that for $ u_j \in H_{\Phi, m_j } $,
\[  \begin{split}
& \int_{ \mathbb C^n } \psi ( x ) \partial_{\xi_j } p ( x, \xi ( x ) ) (hD_{x_j} - \xi_{j}(x)) u_1(x) \overline{ u_2 ( x) }
e^{ - 2 \Phi ( x )/h } dL ( x )  \\
&  \
= - \int_{\mathbb C^n } u_1 ( x )  (hD_{x_j} + \xi_{j}(x)) [ \psi ( x ) \partial_{\xi_j } p ( x, \xi ( x ) )  \overline{ u_2 ( x) }
e^{ - 2 \Phi ( x )/h } ] d L  ( x) \\
& \
= - h  \int_{\mathbb C^n }  u_1(x) D_{x_j} (  \psi ( x ) \partial_{\xi_j } p ( x, \xi ( x ) )) \overline{ u_2 ( x) }
e^{ - 2 \Phi ( x )/h } d L ( x)\\
& \  = \mathcal O ( h ) \| u_1 \|_{ H_{\Phi, m_1} }  \| u_2 \|_{ H_{\Phi, m_2} } .
\end{split} \]
The same argument gives \eqref{eq2.17}.
\end{proof}

We conclude by applying \eqref{eq2.17} to obtain an elliptic estimate near infinity. It allows us to localize
the spectral analysis to a neighbourhood of the critical point:
\begin{prop}
\label{elliptic}
Suppose  $P(x,\xi;h)$ satisfies
\eqref{eq2.1.2}, \eqref{eq2.2}, \eqref{eq2.3}, and let $\Phi = \Phi_0 + f $ with $ f $ satisfying \eqref{eq2.8}. Then there exist $h_0 > 0$, $\eta > 0$, such that for $ 0 < h < h_0 $
 and $u\in H_{\Phi,m} $,
\begin{equation}
\label{eq2.22}
\| u \|_{L^2_{\Phi,m} (\mathbb C^n \setminus B (0, 2 R ) )}  \leq C \|P^{\rm{w}}(x,hD_x;h)u\|_{H_{\Phi}(\mathbb C^n \setminus B (0, R/2 ) )} +
 e^{-\eta/h}  \|{u}\|_{H_{\Phi,m}}.
\end{equation}
\end{prop}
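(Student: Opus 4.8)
The plan is to localize near infinity with a cutoff, combine the ellipticity \eqref{eq2.3} with the quadratic-form identity \eqref{eq2.17} to obtain \eqref{eq2.22} with the weaker error term $C\sqrt h\,\|u\|_{H_{\Phi,m}}$, and then sharpen $C\sqrt h$ to $e^{-\eta/h}$ using the analytic parametrix construction of \S\ref{s:anas}. First I would fix $\psi\in C^\infty(\mathbb C^n;[0,1])$ with $\psi\equiv 1$ on $\mathbb C^n\setminus B(0,2R)$ and $\supp\psi\subset\mathbb C^n\setminus B(0,R)$, so that $\psi,\nabla\psi\in L^\infty(\mathbb C^n)$. Since $f$ in \eqref{eq2.8} is small relative to $\delta$, the graph $\Lambda_\Phi:=\{(x,\xi(x)):x\in\mathbb C^n\}$, $\xi(x):=(2/i)\partial_x\Phi(x)$ (cf.\ Proposition \ref{prop:qm}), lies in the tube $\Lambda_{\Phi_0}+B_{\mathbb C^{2n}}(0,\delta)$ on which $P$ is holomorphic, and along it $|(x,\xi(x))|\ge|x|$; hence \eqref{eq2.3} gives $|p(x,\xi(x))|\ge m(x)/C$ for $|x|\ge R$. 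Applying \eqref{eq2.17} with this $\psi$, bounding its left-hand side by $\int_{\mathbb C^n}\psi(x)|P^{\mathrm{w}}u(x)|^2 e^{-2\Phi(x)/h}\,dL(x)\le\|P^{\mathrm{w}}u\|_{H_\Phi(\mathbb C^n\setminus B(0,R/2))}^2$ from above, and the principal term by $\int_{\mathbb C^n}\psi(x)|p(x,\xi(x))|^2|u(x)|^2 e^{-2\Phi(x)/h}\,dL(x)\ge C^{-2}\|u\|_{L^2_{\Phi,m}(\mathbb C^n\setminus B(0,2R))}^2$ from below, and taking square roots, yields \eqref{eq2.22} with $C\sqrt h\,\|u\|_{H_{\Phi,m}}$ in place of the last term.

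\noindent
To promote $C\sqrt h$ to $e^{-\eta/h}$ I would build an analytic elliptic parametrix. Strengthening \eqref{eq2.2} to an analytic-symbol expansion, the Boutet de Monvel--Kr\'ee theorem \eqref{eq:BdMK} furnishes a formal analytic symbol $E=\sum_k h^k e_k$, holomorphic on a tube over $\{|x|\ge R\}$ with $|e_k|\le C^{k+1}k^k/m$, such that $E\# P=1$ there. Let $\chi$ be an analytic cutoff equal to $1$ microlocally on $\{|x|\ge 3R/2\}$ and to $0$ microlocally on $\{|x|\le R\}$ (such genuine analytic symbols exist, with transition errors of size $\mathcal O(e^{-1/Ch})$), and let $\widetilde E$ be the realization of $\chi E$ at the optimal truncation index $N\sim(eCh)^{-1}$ as in \eqref{eq:reali}. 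By Proposition \ref{p:neww} (cf.\ \eqref{eq2.9}) the operator $\widetilde E$ is bounded $H_\Phi\to H_{\Phi,m}$ uniformly in $h$, while $(\chi E)\# P=\chi$, together with the fact that a composition of realizations agrees with the realization of the $\#$-product up to $\mathcal O(e^{-1/Ch})$, gives $\widetilde E\,P^{\mathrm{w}}=1+\mathcal O(e^{-\eta/h})$ microlocally on $\{|x|\ge 3R/2\}$. Composing on the left with a cutoff $\psi_2$ equal to $1$ on $\mathbb C^n\setminus B(0,2R)$ and supported in $\mathbb C^n\setminus B(0,3R/2)$, and using that $\widetilde E$ only feels $P^{\mathrm{w}}u$ on $\mathbb C^n\setminus B(0,R/2)$ up to $\mathcal O(e^{-\eta/h})$, together with $\|P^{\mathrm{w}}u\|_{H_\Phi}\le C\|u\|_{H_{\Phi,m}}$ from Proposition \ref{p:neww}, gives \eqref{eq2.22}. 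Equivalently, the same bound can be reached by iterating \eqref{eq2.17} through $\sim 1/h$ nested smooth cutoffs inside the fixed annulus $R\le|x|\le 2R$, correcting $1/p$ by successive $\mathcal O(h^k)$ terms whose sup-norms obey the Cauchy bounds of \S\ref{s:anas}.

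\noindent
I expect the main obstacle to be bookkeeping rather than anything conceptual. One has to (i) track the order functions through the composition $\widetilde E\,P^{\mathrm{w}}$, so that $\widetilde E$ genuinely lands in $H_{\Phi,m}$ and every remainder is measured against $\|u\|_{H_{\Phi,m}}$, which rests on \eqref{eq2.9} and \eqref{eq2.7}; (ii) handle the fact that ordinary smooth cutoffs are not analytic, so the ``exponentially small off a region'' statements must be phrased microlocally and justified via the Schur-type kernel estimates already used in the proofs of Propositions \ref{p:neww} and \ref{prop:qm}; and (iii) check that every cutoff and every cutoff-commutator appearing in the argument is supported in the region $\{|x|\ge R\}$ where $p$ is elliptic along $\Lambda_\Phi$. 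This last point is what forces the geometry $R/2<R<2R$: the conclusion concerns $\mathbb C^n\setminus B(0,2R)$ while $P^{\mathrm{w}}u$ is allowed on $\mathbb C^n\setminus B(0,R/2)$, leaving an annulus of fixed width $R$ inside the elliptic region to accommodate all the transitions, in particular the $\sim 1/h$ of them needed for the exponential gain. The verification that $\Lambda_\Phi\subset\Lambda_{\Phi_0}+B_{\mathbb C^{2n}}(0,\delta)$ --- which is exactly why \eqref{eq2.8} requires $\varepsilon$ small relative to $\delta$ --- is then routine.
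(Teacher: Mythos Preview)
Your first step---getting \eqref{eq2.22} with error $C\sqrt{h}\,\|u\|_{H_{\Phi,m}}$ by a single application of \eqref{eq2.17}---is correct. But your upgrade to $e^{-\eta/h}$ takes a much harder route than the paper, and as written it needs an extra hypothesis (the analytic-symbol bound \eqref{eq2.23}) that the proposition does not assume.

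The paper's argument is a one-move trick you missed: rather than building a parametrix or iterating, apply \eqref{eq2.17} \emph{once}, but with $\Phi$ replaced by $\widetilde{\Phi}=\Phi-\eta(1-\chi)$, where $\chi\in C^\infty_0(\mathbb C^n;[0,1])$ equals $1$ on $B(0,R)$ and is supported in $B(0,2R)$. For small $\eta>0$ this $\widetilde{\Phi}$ still satisfies \eqref{eq2.8}, so Proposition~\ref{prop:qm} and the ellipticity \eqref{eq2.3} along $\Lambda_{\widetilde{\Phi}}$ apply unchanged. Now $\widetilde{\Phi}=\Phi$ on $|x|\le R$, so the $\mathcal O(h)\|u\|_{H_{\widetilde{\Phi},m}}^2$ remainder splits into a piece on $\{|x|\le R\}$ equal to $\mathcal O(h)\|u\|_{H_{\Phi,m}}^2$ and a piece on $\{|x|>R\}$ that is absorbed into the left-hand side for $h$ small. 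Since $\widetilde{\Phi}=\Phi-\eta$ on $\{|x|\ge 2R\}$, the left-hand side is $e^{2\eta/h}$ times the $\Phi$-norm there; dividing through converts the surviving $\mathcal O(h)$ error into $\mathcal O(h)e^{-2\eta/h}$. No analyticity of the symbol is used.

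Your parametrix route would prove the result under the additional hypothesis \eqref{eq2.23}, and the sketch is broadly sound, though ``an analytic cutoff equal to $1$ microlocally'' needs care: no holomorphic function is a genuine cutoff, and what one actually constructs is a realized symbol whose deviation from a cutoff is $\mathcal O(e^{-1/Ch})$. Your alternative ``$\sim 1/h$ nested smooth cutoffs in a fixed annulus'' is more problematic as stated: with $N\sim 1/h$ cutoffs packed into width $R$, each has gradient $\sim 1/(hR)$, and the implicit constants in the $\mathcal O(h)$ remainder of \eqref{eq2.17} (which see first derivatives of $\psi$) blow up, cancelling the gain. Making that iteration converge forces Gevrey-type cutoffs and factorial bookkeeping, which is effectively the analytic-parametrix argument again. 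The weight modification sidesteps all of this in two lines.
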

\begin{proof}
Let $\chi \in C^{\infty}_0(\mathbb C^n;[0,1])$ satisfy $\chi(x) = 1$, $\abs{x}\leq R$, and ${\rm supp}\, \chi \subset B(0,2R )$. We shall apply (\ref{eq2.17}) with $\psi \in  C^{\infty}(\mathbb C; [0,1])$ such that
\begin{equation*}
\psi(x) = 1,\,\,\abs{x}\geq R, \quad \psi(x) = 0,\,\, \abs{x} < \frac{R}{2},
\end{equation*}
and with the weight $\Phi$ replaced by
\begin{equation}
\label{eq2.18}
\widetilde{\Phi} = \Phi - \eta(1-\chi), \quad 0 < \eta \ll 1,
\end{equation}
which still satisfies (\ref{eq2.8}), for $\eta > 0$ small enough. For $u\in H_{\Phi,m}(\mathbb C^n)$,
and $ \tilde{\xi}(x) = (2/i){\partial_x \widetilde{\Phi}}(x)$,
\begin{equation*}
\int_{|x| \geq R} | p (x,\tilde{\xi}(x) )| ^2 |u(x)|^2 e^{-2\widetilde{\Phi}(x)/h}\,dL(x)
 \leq \|P^{\rm{w}}(x,hD_x;h)u \|^2_{H_{\widetilde{\Phi}}(\abs{x} > R/2)}
+ \mathcal O(h) \|u \|^2_{H_{\widetilde{\Phi},m}}.
\end{equation*}
Combining this with the assumption \eqref{eq2.3} which gives
$$
|p(x,\tilde{\xi}(x))| \geq {m(x)}/C ,\quad \abs{x} \geq R,
$$
we get
\begin{equation*}
\int_{\abs{x} \geq R} |u(x)|^2\, m^2(x) e^{-2\widetilde{\Phi}(x)/h}\,dL(x) \\
\leq C \|P^{\rm{w}}(x,hD_x;h)u\|^2_{H_{\widetilde{\Phi}}(\abs{x} > R/2)} + \mathcal O(h) \|u \|^2_{H_{\widetilde{\Phi},m}},
\end{equation*}
and therefore, taking $h>0$ small enough,
\begin{equation*}
\int_{|x| \geq R} |u(x)|^2\, m^2(x) e^{-2\widetilde{\Phi}(x)/h}\,dL(x)
\leq e^{2\eta/h} \|P^{\rm{w}}(x,hD_x;h)u\|^2_{H_{\Phi}(\abs{x} > R/2)} + \mathcal O(h) \|{u}\|^2_{H_{\Phi,m}}.
\end{equation*}
Here we have also used that $\widetilde{\Phi}(x) = \Phi(x)$ for $\abs{x}\leq R$ and that $\widetilde{\Phi} \geq \Phi - \eta$ on $\mathbb C^n$. In the region $\abs{x}\geq 2R $, we have $\widetilde{\Phi} = \Phi - \eta$, and that gives \eqref{eq2.22}.
\end{proof}

\noindent
{\bf Remarks}. 1. In what follows, in order to enter the framework of  analytic $h$--pseudodifferential operators
, we shall strengthen condition (\ref{eq2.2}) by assuming that
\begin{equation}
\label{eq2.23}
\abs{p_k(x,\xi)} \leq C^{k+1} k^k\, m(x),\quad (x,\xi) \in \Lambda_{\Phi_0} + B_{\mathbb C^{2n}} (0, \delta), \quad k = 0,1,2,\ldots
\end{equation}

\noindent
2. We observe that rather than assuming (\ref{eq2.8}), it would have been sufficient to consider uniformly strictly plurisubharmonic weights $\Phi \in C^2(\mathbb C^n)$ such that $\Phi - \Phi_0 \in L^{\infty}(\mathbb C^n)$, $\norm{\nabla(\Phi - \Phi_0)}_{L^{\infty}(\mathbb C^n)}$ is small enough, and $\nabla^2 \Phi \in L^{\infty}(\mathbb C^n)$. Indeed, in the proof of Proposition \ref{p:neww}, instead of introducing the contour given in (\ref{eq2.10}), we deform to the contour
\[
\theta = \frac{2}{i} \int_0^1 \frac{\partial \Phi}{\partial x}(tx + (1-t)y)\, dt + \frac{i\delta}{2} \frac{\overline{(x-y)}}{\langle{x-y}\rangle}.
\]
Proposition \ref{prop:qm} and Proposition \ref{elliptic} are still valid under these weaker assumptions on the weight $\Phi$.

\subsection{Local Fourier integral operators}
\label{s:local}
The local theory is more subtle than the global theory reviewed in detail in \S \ref{s:global}. One key point
is that the action of operators is defined only up to exponentially small errors relative to the weights.
We cannot present all the details but precise references are provided.

To prove Theorem \ref{t:2} we will consider microlocal equivalence, with exponentially small errors, of   analytic pseudodifferential operators acting on Hilbert spaces of the form
\begin{equation}
\label{eq3.1}
H_{\Psi}(V) = \mathscr O (V) \cap L^2(V,e^{-2\Psi/h}\, dL(x)),
\end{equation}
where $V \Subset \mathbb C^n $ is a small open neighbourhood of $0\in \mathbb C^n $ and $\Psi = \Phi_0$, $\Phi$, with $\Phi_0$ quadratic strictly plurisubharmonic and $\Phi$ real analytic in $V$ satisfying $\Phi(x) = \Phi_0(x) + \mathcal O(x^3)$. In fact, in our applications the quadratic weight $\Phi_0$ will be strictly convex, and in the following discussion we shall make this stronger assumption,
\begin{equation}
\label{eq:assP}
\Psi  ( x) \ \text{ is strictly convex and } \
 \Psi(x) = \mathcal O(x^2), \ \ \ x\rightarrow 0. \end{equation}

We start by recalling $h$-pseudodifferential operators with classical analytic symbols acting on $H_{\Psi}(V) $
in (\ref{eq3.1}):
\begin{equation}
\label{eq3.13}
Pu(x;h) = \frac{1}{(2\pi h)^n} \underset{\Gamma(x)}{\int\!\!\!\int} e^{\frac{i}{h}\varphi(x,y,\theta)} P(x,\theta;h) u(y)\, dy\, d\theta, \ \ \ u \in H_\Psi (V ),
\end{equation}
where
\begin{equation}
\label{eq:obv}
\varphi ( x, y, \theta ) = ( x - y ) \cdot \theta,
\end{equation}
$P(x,\theta;h) = p(x,\theta) + \mathcal O(h)$ is a classical analytic symbol defined in a neighbourhood of $(0,0)\in \mathbb C^{2n}$, and $\Gamma(x) \subset \mathbb C^{2n}_{y,\theta}$ is a good contour for  $(y,\theta) \mapsto -{\rm Im}\, ((x-y)\cdot \theta) + \Psi(y)$, see \cite[\S 2.4.c]{HiSj15} for more on that terminology. Since we assumed
\eqref{eq:assP}, it can be taken independent of $ x $:
\begin{equation}
\label{eq3.17}
\Gamma : y \mapsto \theta = \tfrac{2}{i}{\partial_y \Psi}(y), \quad y\in {\rm neigh}_{\mathbb C^n}(0).
\end{equation}
In fact,
\[ \begin{split}
- \Im \left((x-y) \cdot \theta |_{ \Gamma }\right) + \Psi ( y ) & =
- \Im ( \tfrac{2}{i}{\partial_y \Psi}(y) \cdot (x - y ) ) + \Psi ( y ) \\
& =
\Psi ( y ) +  \partial_{ \Re y} \Psi ( y ) \cdot \Re ( x -y ) + \partial_{\Im y } \Psi ( y )
\cdot \Im ( x- y ) \\
& \leq \Psi ( x ) - |x -y|^2/C ,
\end{split}  \]
by the strict convexity of $ \Psi $.

We stress that if $ V $ is sufficiently small, then $Pu \in H_\Psi ( V )$ but the definition \eqref{eq3.13} depends on the realization of the
analytic symbol $ P ( x, \theta; h ) $ -- see \eqref{eq:reali}, \cite[Exemple 1.1]{sam} (or \cite[\S 2.2]{HiSj15}),
and the errors from the contour for $ |x - y | > \varepsilon_0 $ produce contributions
in $ H_{\Psi - \varepsilon_1 } (V )$, $ \varepsilon_1 > 0 $. Ultimately, this produces an overall
ambiguity which is however exponentially small, which is consistent with our goal. We also note that $ P : H_\Psi ( V ) \to H_\Psi ( V )$ is bounded uniformly in $ h $, in view of Schur's lemma.

The microlocal equivalence of analytic pseudodifferential operators is obtained using local {\em analytic Fourier integral operators} which quantize locally defined holomorphic symplectomorphisms:
\begin{equation}
\label{eq3.2}
\kappa: {\rm neigh}_{\mathbb C^{2n}} (0,0) \rightarrow {\rm neigh}_{\mathbb C^{2n}} (0,0),\quad \kappa(0,0) = (0,0).
\end{equation}
We assume that, in the notation of \eqref{eq2.1},
\begin{equation}
\label{eq3.6}
\kappa({\rm neigh}_{\Lambda_{\Phi_0}}(0,0)) = {\rm neigh}_{\Lambda_{\Phi}}(0,0).
\end{equation}
These assumptions on $ \kappa $ provide a particularly nice generating function: we
record that in the following lemma from \cite[Section 2]{MeSj2} (the proof there is geometric and is independent of the rest of the paper): 
\begin{lemm}
\label{l:gen}
Suppose that $ \kappa $ in \eqref{eq3.2} satisfies
$ \kappa^* ( d\xi \wedge d x ) = d\xi \wedge dx $ and \eqref{eq3.6}. Then (near $ 0 $)
\begin{equation}
\label{eq:graph} \{ ( y, \eta ), \kappa ( y, \eta ) ) \} = \{ (  y, -\varphi'_y(x,y,\theta), x,\varphi'_x(x,y,\theta) ) \, : \,  \varphi'_{\theta}(x,y,\theta) = 0 \} \end{equation}
where
\begin{equation}
\label{eq3.9.1}
\varphi(x,y,\theta) = \tfrac{2}{i}\left(F(x,\theta) - \Psi_0(y,\theta)\right), \ \ \
\Psi_0 \in \mathscr O ( \mathbb C^{2n} ), \ \ \Psi_0 ( x, \bar x ) = \Phi_0 ( x) ,
\end{equation}
and $F \in \mathscr O ( \neigh_{ \mathbb C^{2n} } (0))$, satisfies
${\rm det}\,F''_{x\theta} \neq 0,
$
and
\begin{equation}
\label{eq3.9.2}
- C \abs{x-\tilde{\kappa}(y)}^2 \leq  2{\rm Re}\, F(x,\overline{y}) - \Phi(x) - \Phi_0(y) \leq - \abs{x-\tilde{\kappa}(y)}^2/C ,
\end{equation}
where
$ \tilde{\kappa}(y) := \pi (\kappa (y, (2/i) \partial_y \Phi_0(y)))$ and
$ (x,y)\in {\rm neigh}_{\mathbb C^{2n}}(0)$.
\end{lemm}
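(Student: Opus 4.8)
The plan is to construct the holomorphic function $F$ explicitly and then verify the two nondegeneracy statements. Since $\Phi_0$ is strictly convex (as assumed in \eqref{eq:assP}), the constant matrix $\partial_\theta\partial_y\Psi_0$ is invertible, so $(y,\theta)\mapsto(y,\tfrac2i\partial_y\Psi_0(y,\theta))$ is a linear biholomorphism of $\mathbb C^{2n}$; I would use it to introduce $\theta$ as a holomorphic coordinate on the source copy of $\mathbb C^{2n}$, writing $\kappa\bigl(y,\tfrac2i\partial_y\Psi_0(y,\theta)\bigr)=:(X(y,\theta),\Xi(y,\theta))$. For $\varphi$ of the form \eqref{eq3.9.1} with \emph{any} holomorphic $F$ one has $\varphi'_y=-\tfrac2i\partial_y\Psi_0(y,\theta)$, which is $-\eta$ in these coordinates, and the block $\varphi''_{\theta y}=-\tfrac2i\partial_\theta\partial_y\Psi_0$ is invertible, so $\varphi$ is automatically a nondegenerate phase function. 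Hence the Lagrangian $\{(y,-\varphi'_y,x,\varphi'_x):\varphi'_\theta=0\}$ equals the graph of $\kappa$ precisely when
\[
F'_\theta(X(y,\theta),\theta)=\partial_\theta\Psi_0(y,\theta),\qquad \tfrac2i F'_x(X(y,\theta),\theta)=\Xi(y,\theta),
\]
and the lemma reduces to producing $F\in\mathscr O(\neigh_{\mathbb C^{2n}}(0))$, vanishing to second order at $0$, with these two properties.

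For the construction of $F$: as $\kappa$ is a holomorphic symplectomorphism fixing $0$, $\xi\,dx-\eta\,dy$ is a closed holomorphic one-form on the graph of $\kappa$ (its exterior derivative there is $\kappa^*(d\xi\wedge dx)-d\eta\wedge dy=0$), hence exact on a simply connected neighbourhood of $0$; substituting $\eta=\tfrac2i\partial_y\Psi_0(y,\theta)$ and using $d\bigl(\tfrac2i\Psi_0(y,\theta)\bigr)=\tfrac2i\partial_y\Psi_0\,dy+\tfrac2i\partial_\theta\Psi_0\,d\theta$ shows that $\tfrac i2\xi\,dx+\partial_\theta\Psi_0(y,\theta)\,d\theta$ is an exact holomorphic one-form on the graph, which is parametrized by $(y,\theta)$. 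The key point is then that $(y,\theta)\mapsto(X(y,\theta),\theta)$ is a local biholomorphism at $0$: here I would use \eqref{eq3.6} together with the strict convexity of the weights, exploiting that $\Lambda_{\Phi_0}$ and $\Lambda_\Phi$ are both graphs over the $x$-base, to conclude that the linear part $\partial_y X(0,0)$ is invertible. Transporting the one-form above to $(x,\theta)$-space through this biholomorphism yields the desired $F$ (unique with $F(0)=0$), whose differential it is; one checks $dF(0)=0$ and that the two displayed identities hold by construction. Differentiating the first identity in $y$ and using the invertibility of $\partial_\theta\partial_y\Psi_0$ and of $\partial_y X(0,0)$ then gives $\det F''_{x\theta}(0,0)\neq0$.

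It remains to prove \eqref{eq3.9.2}. Putting $\theta=\bar y$ makes $\Psi_0(y,\bar y)=\Phi_0(y)$, so the middle quantity equals $-\Im\varphi(x,y,\bar y)+\Phi_0(y)-\Phi(x)$, and I would study, for fixed $x$, the real-analytic function $h_x(y):=2\Re F(x,\bar y)-\Phi_0(y)$. Its $y$-critical points satisfy $(\partial_\theta F)(x,\bar y)=\partial_\theta\Psi_0(y,\bar y)$; comparing with the first identity for $F$ at $\theta=\bar y$, where $X(y,\bar y)=\widetilde\kappa(y)$, the unique critical point is $y=\widetilde\kappa^{-1}(x)$. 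Using the second identity together with \eqref{eq3.6} (which gives $\Xi(y,\bar y)=\tfrac2i\partial_x\Phi(\widetilde\kappa(y))$), the envelope formula yields $\partial_x\bigl(h_x(\widetilde\kappa^{-1}(x))\bigr)=(\partial_x F)(x,\overline{\widetilde\kappa^{-1}(x)})=\partial_x\Phi(x)$, and since both sides vanish at $x=0$ one gets $h_x(\widetilde\kappa^{-1}(x))=\Phi(x)$; thus the middle quantity in \eqref{eq3.9.2} vanishes, to second order, along $\{x=\widetilde\kappa(y)\}$. The hard part --- the real obstacle --- is to show that this critical point is a nondegenerate maximum: a Hessian computation produces the mixed block $-\partial_x\partial_{\bar x}\Phi_0$, which is negative definite, while the ``holomorphic'' block is governed by $\partial_\theta\partial_\theta F$ and is dominated by the mixed one precisely because $\Phi_0$ (hence $\Phi$, which differs by $\mathcal O(x^3)$) is strictly convex. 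A Taylor expansion at the maximum, together with $|x-\widetilde\kappa(y)|\asymp|y-\widetilde\kappa^{-1}(x)|$ since $\widetilde\kappa$ is a diffeomorphism near $0$, then gives \eqref{eq3.9.2}. This nondegeneracy --- equivalently, that $\varphi$ is a ``good'' phase function adapted to the source weight $\Phi_0$ and target weight $\Phi$ --- is the geometric core of the Melin--Sj\"ostrand lemma, and is where the positivity built into the weights is indispensable.
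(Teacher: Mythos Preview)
The paper does not prove this lemma; it records it as a result from \cite[Section 2]{MeSj2} and refers the reader there. Your outline follows the natural construction, and several pieces are carried out correctly: the reduction of \eqref{eq:graph} to the two identities for $F$, the exactness of $\tfrac{i}{2}\xi\,dx+\partial_\theta\Psi_0\,d\theta$ on the graph, and the identification of the critical point and critical value of $y\mapsto 2\Re F(x,\bar y)-\Phi_0(y)$ via the envelope argument are all right.

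There is, however, a genuine gap at the step you call ``the key point'': the claim that $(y,\theta)\mapsto(X(y,\theta),\theta)$ is a local biholomorphism, i.e.\ that $\partial_y X(0,0)$ is invertible as a $\mathbb C$--linear map. Your justification---that $\Lambda_{\Phi_0}$ and $\Lambda_\Phi$ are graphs over the base---only yields that the \emph{real} differential of $y\mapsto X(y,\bar y)=\tilde\kappa(y)$ is invertible, i.e.\ that $v\mapsto\partial_y X(0,0)\,v+\partial_\theta X(0,0)\,\bar v$ is a real-linear isomorphism of $\mathbb C^n$; this does not force the complex matrix $\partial_y X(0,0)$ to be nonsingular. (For a concrete illustration with $\Phi_0(y)=\tfrac12|y|^2$: then $\Psi_0''_{yy}=0$ and $\partial_y X(0,0)=A$, the upper-left block of $d\kappa(0)$. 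The rotation $\kappa_0(y,\eta)=(\eta,-y)$ has $A=0$, and indeed it fails \eqref{eq3.6} since it sends $\Lambda_{\Phi_0}$ to $\Lambda_{-\Phi_0}$; what rules this out in general is the \emph{positivity} encoded in \eqref{eq3.6}, not merely the graph property.) The Melin--Sj\"ostrand argument proves both this invertibility and the negative-definiteness of the Hessian you need for \eqref{eq3.9.2} by showing that the graph of $\kappa$ is a \emph{strictly positive} complex Lagrangian relative to the totally real submanifold $\Lambda_{\Phi_0}\times\Lambda_\Phi$; this positivity is exactly what excludes degenerate directions and makes the critical point a strict maximum. Your sketch gestures at this (``dominated by the mixed one precisely because $\Phi_0$ is strictly convex'') but does not carry the computation through, and in fact it is the strict plurisubharmonicity of the weights, transported through $\kappa$, that drives both conclusions.
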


\noindent
{\bf Remark.} When $ \Phi = \Phi_0 $ and $ \kappa$ and $ \tilde \kappa $
are the identity transformations on $ \mathbb C^{2n} $ and $ \mathbb C^n $ respectively, then
$ F ( x , \theta ) = \Psi_0 ( x , \theta ) $ and \eqref{eq3.9.2} is a standard consequence of the
strict plurisubharmonicity of $ \Phi_0 $. Of course in that case the obvious choice of $ \varphi $
for which \eqref{eq:graph} holds is given \eqref{eq:obv}.

We recall the following fundamental property of a generating function $ \varphi $ associated to $\kappa$ in (\ref{eq3.2}), (\ref{eq3.6}),
\begin{equation}
\label{eq:propgen}
\begin{gathered}
\nabla_{y,\theta} ( - \Im \varphi ( 0, y, \theta ) + \Phi_0 (y) ) |_{(y,\theta ) = (0,0)} = 0 , \\
\sgn \left( \nabla^2_{y,\theta }  ( - \Im \varphi ( 0, y, \theta ) + \Phi_0 (y) ) |_{(y,\theta ) = (0,0)} \right) = 0 .
\end{gathered}
\end{equation}
This implies that for $ x $ near $ 0 $, we have a unique critical point,
\[ \nabla_{y,\theta} ( - \Im \varphi ( x, y, \theta ) + \Phi_0 (y) ) |_{(y,\theta ) = (y_{\rm{c}} (x) , \theta_{\rm{c}} (x )  )} = 0 , \]
and
\[  - \Im \varphi ( x, y_{\rm{c}} (x), \theta_{\rm{c}} (x) ) + \Phi_0 (y_{\rm{c}} (x)) = \Phi ( x ) . \]
It also shows the existence of a good contour   $ \Gamma ( x ) $
passing through the critical point $(y_c(x),\theta_c(x))$, along which we have
\[ - \Im \varphi ( x, y, \theta ) + \Phi_0( y) - \Phi ( x ) \leq - |y-y_c(x)|^2 / C - |\theta - \theta_c(x)|^2/C,   \]
$ x \in \neigh_{\mathbb C^n} ( 0 ) $.
In the case of $ \varphi $ given in Lemma \ref{l:gen}, the unique critical point is given by $x = \tilde{\kappa}(y)$, $\theta = \bar y$, and we can simply take $ \Gamma(x)$ to be $x$--independent given by $\Gamma_V: y \mapsto \theta = \bar y $, $ y \in V $.


{In the notation of \eqref{eq3.9.2}, we have
$\tilde \kappa: V \rightarrow U$ is a diffeomorphism for some $V , U \subset \mathbb C^n $, small open neighborhoods of $0$.
Local Fourier integral operators taking $ H_{\Phi_0 } ( V )$ to $ H_{\Phi } (  U ) $ are defined as follows.
Let  $a = a(x,y,\theta;h) = a_0(x,y,\theta) + \mathcal O(h)$ be a classical analytic symbol in
$ \neigh_{\mathbb C^{3n} } ( 0 ) $. For $ u \in H_{\Phi_0 } ( V ) $ we put
\begin{equation}
\label{eq3.9.4}
\begin{split}
Au (x)  & = \frac{1}{(2\pi h)^{n}} \underset{\Gamma_V}{\int\!\!\!\int} e^{\frac{i}{h} \varphi ( x, y , \theta) }  a(x,y,\theta;h) u(y)\, dy\, d\theta \\
& = \frac{i^n}{(\pi h)^n} \int_V e^{\frac{2}{h} F(x,\overline{y})} a(x,y,\overline{y};h) u(y)\, e^{-2\Phi_0(y)/h}\, dL(y).
\end{split}
\end{equation}
As in \eqref{eq3.13},  $ Au \in H_\Phi ( U ) $ is well defined modulo errors in
$ H_{\Phi - \varepsilon} (U )  $, $ \varepsilon > 0 $.

The first formula for $ A u $ in (\ref{eq3.9.4}) makes sense for more general phase functions but we do not stress this point here. The advantage of the second representation based on Lemma \ref{l:gen} is the simplicity of the contour (integration over $ V \subset \mathbb C^n $) in the definition.

The properties of $ A $ are summarized in the following proposition. For the proofs, see
\cite[Theorem 4.5, Proposition 12.9]{sam},~\cite[Section III.7]{leb}.

\begin{prop}
\label{p:AB}
Suppose that $ V \Subset \mathbb C^n $ is a small open neighbourhood of $ 0$,
which is mapped diffeomorphically onto a small neighborhood $ U \Subset \mathbb C^n $ of $0$ by $ \tilde \kappa $, see \eqref{eq3.9.2}.
Then
\begin{equation}
\label{eq3.10}
{A = \mathcal O(1): H_{\Phi_0}(V) \rightarrow H_{\Phi}(U).}
\end{equation}
Moreover, if $ a $ in \eqref{eq3.9.4} satisfies $ a_0 ( 0 ) \neq 0 $, then there exists
a {classical} analytic symbol $ b $ with $ b_0 ( 0 ) \neq 0$ and $ \Gamma ( y) $, a good contour
for $ ( x, \theta ) \mapsto \Im \varphi(x,y,\theta) + \Phi(x) $, such that
\begin{equation}
\label{eq3.11}
(Bv)(y) := \frac{1}{(2\pi h)^n} \underset{\Gamma (y)}{\int\!\!\!\int} e^{-\frac{i}{h}\varphi(x,y,\theta)} b(x,y,\theta;h) v(x)\, dx\, d\theta,
\end{equation}
is a microlocal inverse of $ A $ in the sense that
\begin{equation}
{B = \mathcal O(1): H_{\Phi}(U) \rightarrow H_{\Phi_0}(V),}
\end{equation}
and for each $W_1 \Subset V$, $  W_2 \Subset U $, there exists $\eta > 0$ such that
\begin{equation}
\label{eq3.12}
\begin{split}
& I - AB = \mathcal O(1) e^{-\eta/h}: H_{\Phi}(U) \rightarrow H_{\Phi}(W_2), \\
& I - BA = \mathcal O(1) e^{-\eta/h}: H_{\Phi_0}(V) \rightarrow H_{\Phi_0}(W_1 ).
\end{split}
\end{equation}
\end{prop}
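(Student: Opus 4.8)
The plan is to establish the two assertions separately, following the theory of analytic Fourier integral operators on weighted spaces of holomorphic functions from \cite[\S 4, \S 12]{sam}, \cite[\S III.7]{leb} and \cite[\S 2.4]{HiSj15}; only the main points need to be indicated. First I would prove the boundedness \eqref{eq3.10}. Using the second representation of $A$ in \eqref{eq3.9.4}, the modulus of the effective kernel of $A$, viewed as an operator $L^2_{\Phi_0}(V)\to L^2_{\Phi}(U)$, is a constant multiple of
\[
h^{-n}\,|a(x,y,\overline y;h)|\,\exp\!\Big(\tfrac1h\big(2\Re F(x,\overline y)-\Phi(x)-\Phi_0(y)\big)\Big)\leq C h^{-n}\exp\!\Big(-\tfrac1{Ch}\,|x-\tilde\kappa(y)|^2\Big),
\]
by \eqref{eq3.9.2} together with the boundedness of a realization of the analytic symbol $a$ on a fixed neighbourhood of $0$. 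Since $\tilde\kappa$ is a diffeomorphism of $V$ onto $U$, the quantity $|x-\tilde\kappa(y)|$ is comparable to $|\tilde\kappa^{-1}(x)-y|$; after this change of variables the kernel is dominated by a Gaussian convolution kernel of $L^2$-norm $\mathcal O(1)$, and Schur's lemma \cite[(A.5.3)]{res} gives \eqref{eq3.10}. The part of the contour $\Gamma_V$ with $|x-y|>\varepsilon_0$, where a realization of the analytic symbol is defined only up to an exponentially small ambiguity, contributes an operator into $H_{\Phi-\varepsilon_1}(U)$ for some $\varepsilon_1>0$ and is negligible for our purposes; the same remark will be used repeatedly.

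Next I would construct $B$ and verify the microlocal inverse property. Since $\kappa$ is a holomorphic symplectomorphism fixing $(0,0)$, the operator \eqref{eq3.11} quantizes $\kappa^{-1}$, so I would look for $B$ of that form with the same $\varphi$ and an undetermined classical analytic symbol $b=b_0+\mathcal O(h)$. Composing $B$ with $A$ and applying the analytic stationary phase lemma \cite[Ch.\ 2]{sam} in the phase variables — the relevant critical point being nondegenerate since $\det F''_{x\theta}\neq0$ — one finds that, microlocally near $0\in\Lambda_{\Phi_0}$ and modulo an error of size $\mathcal O(e^{-\eta/h})\colon H_{\Phi_0}(V)\to H_{\Phi_0}(W_1)$, $BA$ is an analytic pseudodifferential operator of the form \eqref{eq3.13} with some symbol $c=c_0+\mathcal O(h)$, where $c_0$ is the product of $b_0$, $a_0$ and the nonvanishing Hessian/Jacobian factor produced by the stationary phase, evaluated at the critical point. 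Choosing $b_0$ so that $c_0\equiv1$ (possible because $a_0(0)\neq0$) and then solving the resulting transport equations order by order in $h$ determines $b$; the point of the analytic stationary phase lemma is that it yields simultaneously the analytic-symbol bounds $|b_k|\leq C^{k+1}k^k$ and an exponentially small remainder, so that $BA=I+\mathcal O(e^{-\eta/h})$ microlocally. Running the same argument with the roles of $A$ and $B$ exchanged — now using $\kappa\circ\kappa^{-1}=\mathrm{id}$ on the $\Phi$-side, with a good contour for $(x,\theta)\mapsto\Im\varphi(x,y,\theta)+\Phi(x)$ — shows that the same $b$ gives $AB=I+\mathcal O(e^{-\eta/h})\colon H_{\Phi}(U)\to H_{\Phi}(W_2)$, which is \eqref{eq3.12}; boundedness of $B$ follows exactly as for $A$.

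The hard part will be the bookkeeping of the good contours and of the realizations of the analytic symbols: one must verify that each contour deformation (to one through the critical point of the composed phase, and in discarding the region $|x-y|>\varepsilon_0$) costs only $\mathcal O(e^{-\eta/h})$ relative to the ambient weights — which is exactly why the inverse property holds only microlocally, on the strictly smaller neighbourhoods $W_1\Subset V$ and $W_2\Subset U$ — and that the inductive solution of the transport equations stays within the class of classical analytic symbols. Both are handled by Sj\"ostrand's analytic stationary phase method; we refer to \cite[Theorem 4.5, Proposition 12.9]{sam} and \cite[\S III.7]{leb} for the complete arguments.
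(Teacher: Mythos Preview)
Your sketch is correct and aligns with what the paper relies on: the paper does not give its own proof of this proposition but simply refers to \cite[Theorem 4.5, Proposition 12.9]{sam} and \cite[Section III.7]{leb}, which is precisely the route you outline (Schur estimate via \eqref{eq3.9.2} for boundedness, analytic stationary phase for the composition, transport equations solved within the class of analytic symbols, with contour deformations contributing exponentially small errors). Nothing further is needed.
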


The composition of Fourier integral operators in the complex domain is presented in
~\cite[Chapter 4]{sam},~\cite[Section 2.5]{HiSj15}, In particular, in the notation above,
\begin{equation}
\label{eq3.18}
{P \circ A: H_{\Phi_0}(V) \rightarrow H_{\Phi}(U)}
\end{equation}
is a Fourier integral operator of the form \eqref{eq3.9.4}, with an  analytic symbol given by
$$
\frac{1}{(2\pi h)^n} e^{-i\varphi(x,y,\theta)/h} P \left(e^{i\varphi(\cdot, y, \theta)/h} a(\cdot, y, \theta;h)\right)(x).
$$
Assuming that $A$ is elliptic and letting $B$ be a microlocal inverse of $A$ from Proposition
\ref{p:AB}, we also obtain that
\begin{equation}
\label{eq3.19}
Q := B \circ P \circ A: H_{\Phi_0}(V) \rightarrow H_{\Phi_0}(V)
\end{equation}
is an  analytic $h$–pseudodifferential operator, with the principal symbol given by
\begin{equation}
\label{eq3.20}
q = p \circ \kappa.
\end{equation}
This is the form of Egorov's theorem we will use.

\section{A general semiclassical result}
\label{s:gen}

We present a general semiclassical result in one dimension generalizing the result in \cite{Hi04} by allowing more general behaviour at infinity and, perhaps more importantly,  describing eigenvalues with exponential accuracy in $ h$, uniformly in a neighbourhood of a non-degerate minimum of the classical observable.
Although not surprising, we do not know of a detailed reference for this even in the self-adjoint case.

Hence, we assume that $ n = 1$, and that $ H^{\rm{w}} ( x, h D; h ) $ is a semiclassical
pseudodifferential operator satisfying the following assumptions for some order function $ m \geq 1 $
(see \cite[\S 4.4]{z12} or \eqref{eq2.1.1}):
\begin{equation}
\label{eq:assH}
\begin{gathered}
H \in S ( \mathbb R^{2}, m ) , \ \ H(\rho;h)  \sim H_0 ( \rho ) + h H_1 ( \rho)  + \cdots,  \ \
\rho = ( x, \xi ), \ \ H_0 (0) = 0,   \\
\text{ $ d H_0 ( 0) = 0 $, $ d^2 H_0 ( 0 ) $ is elliptic, and
$ \{ \langle d^2 H_0(0) v, v \rangle; v \in \mathbb R^2 \} \neq \mathbb C$, }\\
\text{ $ H  ( \rho; h ) $ extends to $ \{ |\Im \rho | < c_0 \} \subset \mathbb C^2 $, $ c_0 > 0 $,
as an analytic symbol, so that }\\
\abs{H_j(\rho)} \leq C^{j+1} j^j\, m({\rm Re}\,\rho ),\quad |\Im \rho | < c_0,\,\, \text{and} \\
\text{ $ \forall \, \varepsilon > 0 \, \exists \,  \delta > 0 \
| \rho | > \varepsilon \ \Longrightarrow \ | H_0 ( \rho) | \geq \delta m ( \rho) $, $ \rho \in
\mathbb R^2.$ }
\end{gathered}
\end{equation}
Here the ellipticity of $ d^2 H_0 ( 0 ) $ means that $ \langle d^2 H_0 (0 ) v , v\rangle \neq 0 $
for all $ v \in \mathbb R^2 \setminus \{0\} $. This and the last condition on $ d^2 H_0 ( 0) $ above imply that there exists
$ \lambda \in \mathbb C $ such that $ \Re \langle \lambda  d^2 H_0(0) v, v \rangle > 0$, $ v \in \mathbb R^2 \setminus \{0\} $, see  Lemma \ref{l:quand}.

The conditions on the quadratic form $ \mathbb R^2 \ni v \mapsto \langle d^2 H_0 ( 0 ) v, v \rangle \in \mathbb C$ imply that there exists $ \theta_0 \in [0, 2 \pi ) $ such that
$|\langle d^2 H_0 ( 0 ) v, v \rangle + e^{ i \theta_0 }| \geq 1 + |v|^2/C$, $ v \in \mathbb R^2 $. That
in turn implies that if we take $ \chi \in C_0^\infty ( \mathbb R^2 , [0, 1 ])$,
$ \chi|_{ B_{\mathbb R^2 } ( 0 ,  \varepsilon )} = 1$,
then $| H_0(\rho) + (\delta/2) e^{ i \theta_0} \chi ( \rho)|
\geq m ( \rho )/{\mathcal O(1)}  $, $\rho \in \mathbb R^2$, if $ \varepsilon > 0$ is small enough and $ \delta $ is as in \eqref{eq:assH}. Similarly we get $| H_0(\rho) + (\delta/2) e^{ i \theta_0}| \geq m ( \rho )/{\mathcal O(1)}  $, $\rho \in \mathbb R^2$. Here we also use that $ m \geq 1 $. Hence, $ H^{\rm{w}} + \chi^{\rm{w}} - z $ and $ H^{\rm{w}} + (\delta/2) e^{ i \theta_0}-z  $ are invertible for small $ h $ and $ |z |$
(see \cite[Theorem 4.29]{z12}). Then
\[  \neigh_{\mathbb C} ( 0) \ni z \mapsto ( H^{\rm{w}}  - z)^{-1} = ( H^{\rm{w}}   +
\chi^{\rm{w}}  - z  )^{-1} ( I - \chi^{\rm{w}}  ( H^{\rm{w}}   +
\chi^{\rm{w}} - z  )^{-1}  )^{-1}  \]
is meromorphic in $ z$: $ T(z)  :=  \chi^{\rm{w}}  ( H^{\rm{w}}   +
\chi^{\rm{w}} - z  )^{-1} $ is compact (see \cite[Theorem 4.28]{z12}) and
the invertibility of $ H^{\rm{w}}  + (\delta/2) e^{ i \theta_0}  $ shows that $ I - T (-\frac{\delta e^{ i \theta_0}}{2})
$ is invertible. That means that analytic Fredholm theory applies \cite[\S D.4]{z12} and indeed
$ ( H^{\rm{w}} - z )^{-1} $ is meromorphic near $ 0 $. Considering $ H^{\rm{w}} $ as a closed unbounded operator on
$ L^2(\mathbb R)$, with the domain given by $ H_h ( m) $ (see \cite[\S 8.3]{z12}), we conclude that the spectrum of $ H^{\rm{w}} $ is the set of poles with the usual formula for multiplicity.

A typical example with $m(x,\xi) = 1 + |\xi|^2$ is given by $ H ( x, \xi ) = H_0 ( x, \xi ) + h^2 H_2 ( x, \xi )$,
where $ H_0 ( x, \xi ) = \xi^2 + W_0 ( x ) $, $ H_2 = W_1 ( x ) $, where $ W_j $
are holomorphic bounded in a strip around $ \mathbb R \subset \mathbb C$,
{such that $W_0^{-1}(0) = \{0\}$, $W'_0(0) = 0$, $W_0''(0) \notin (-\infty,0]$} (recall that $ W_j $ may be complex valued) -- see \S \ref{s:complexi}. The ellipticity assumption here means that
$ | H_0 ( x, \xi ) | \geq \delta ( 1 + |\xi|^2 ) $ for $ |(x,\xi)| > \varepsilon $.

Our general result is given in
\begin{theo}
\label{t:gen}
Suppose that $ H $ satisfies \eqref{eq:assH}. Then there exist constants $ r_0, h_0, c_0  > 0 $ such that
for $ 0 < h < h_0 $,
\begin{equation}
\label{eq:tgen}\begin{gathered}  \Spec ( H^{\rm{w}} ( x, h D; h ) ) \cap D ( 0 , r_0 ) = \{  \lambda_n ( h );
n = 0 , 1, \cdots \} \cap D( 0, r_0 ) , \\
\lambda_n ( h ) = G ( 2 \pi ( n + \tfrac12 ) h; h ) + \mathcal O ( e^{ - c_0/ h } ) ,
\end{gathered}
\end{equation}
where $ G ( z; h ) $ has an asymptotic expansion, $  G_0 ( z ) + h G_1 ( z ) + \cdots $,
$ G_j \in \mathscr O ( \neigh_{\mathbb C} ( 0 ) ) $, $ |G_j | \leq A^{j+1} j! $,
in the sense that {there exists $B>0$ such that for all $N \in \mathbb N$, we have}
\begin{equation}
\label{eq:Gzh} | G ( z; h ) - \sum_{ j=0}^{N-1} h^j G_j ( z ) | \leq B^{N+1} N! h^N  . \end{equation}
The leading term $ G_0 $ is the inverse of a complex action defined in Theorem {\rm \ref{t:action}}.
When $ H_1 \equiv 0$ (see \eqref{eq:assH})  then $ G_1 \equiv 0 $.
\end{theo}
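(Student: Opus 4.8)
The strategy is the one sketched in the introduction: realize $H^{\rm w}$ on the FBI transform side, reduce it near the critical point to a normal form $G(zhD_z+h/2i;h)$ for the model operator by an analytic canonical transformation followed by a Birkhoff-type conjugation, and then extract the true spectrum via a Grushin problem built from the model eigenstates $z^n$. All the genuinely ``analytic symbol'' input — in particular the $j!$ growth of the $G_j$ — is already packaged into Propositions~\ref{p:ave} and \ref{p:inverse} and Lemma~\ref{l:F2f}, which we invoke as black boxes.

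\emph{Step 1 (FBI reduction and localization).} Conjugating by the FBI transform $T_h$ of the introduction, we obtain an analytic $h$-pseudodifferential operator $P=P(z,hD_z;h)$ on $H_{\Phi_0}(\mathbb C)$ with $\Spec(P)=\Spec(H^{\rm w})$, principal symbol $p=H_0\circ\kappa_T$ (where $\kappa_T\colon T^*\mathbb R\to\Lambda_{\Phi_0}$ is the complex canonical transformation attached to $T_h$), full symbol obeying the analytic bounds \eqref{eq2.23} because of \eqref{eq:assH}, and ellipticity near infinity \eqref{eq2.3}. As explained after \eqref{eq:assH} (using Lemma~\ref{l:quand}), $(H^{\rm w}-z)^{-1}$ is meromorphic near $0$, so the statement is meaningful. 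Proposition~\ref{elliptic} then shows that any eigenfunction with eigenvalue in $D(0,r_0)$ is, modulo $\mathcal O(e^{-\eta/h})$, microlocally concentrated at $\rho_0:=\kappa_T(0,0)$; translating so that $\rho_0=0$, it suffices to study $P$ microlocally near $0\in\Lambda_{\Phi_0}$.

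\emph{Step 2 (analytic normal form).} By \eqref{eq:assH} the nondegenerate complex quadratic form obtained by transporting $d^2H_0(0)$ to $\Lambda_{\Phi_0}\cong\mathbb C^2$ can be brought by a complex linear symplectomorphism to the model $q_0(z,\zeta)=z\zeta$; Vey's holomorphic symplectic Morse lemma \cite{vey},\cite{CV} (proved in the appendix in this special case) then gives a holomorphic canonical transformation $\kappa$, $\kappa(0)=0$, with $\kappa^*p=g\circ q_0$ for some $g$ holomorphic near $0$, $g(0)=0$, $g'(0)\neq0$, and with $\kappa\big(\neigh_{\Lambda_{\Phi_0}}(0)\big)=\neigh_{\Lambda_\Phi}(0)$ for a strictly convex quadratic weight $\Phi$ close to $\Phi_0$. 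Quantizing $\kappa$ by an elliptic local analytic Fourier integral operator $A\colon H_{\Phi_0}(V)\to H_{\Phi}(U)$ with microlocal inverse $B$ (Proposition~\ref{p:AB}) and using the Egorov statement \eqref{eq3.19}--\eqref{eq3.20}, we get that $Q:=BPA$ is an analytic $h$-pseudodifferential operator on $H_{\Phi_0}(V)$ with principal symbol $g(z\zeta)$. The identification of $g=G_0$ with the inverse of the complex action (Theorem~\ref{t:action}) is then immediate from $\kappa^*p=g\circ q_0$, since the action of $q_0$ at energy $E$ is the explicit $2\pi E$.

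\emph{Step 3 (Birkhoff normal form with analytic symbols).} Write the Weyl symbol of $Q$ as an analytic symbol $q=\sum_\ell h^\ell q_\ell$ with $q_0=g(z\zeta)$. By the remark following Proposition~\ref{p:ave} an $h$-independent conjugation reduces to the case $q_1=g_1(z\zeta)$; Proposition~\ref{p:ave} then produces an analytic symbol $a$ and an analytic symbol $\widetilde G(w;h)=g(w)+h\widetilde G_1(w)+\cdots$ with $e^{\hat a}\hat q\,e^{-\hat a}$ of Weyl symbol $\widetilde G(z\zeta;h)$. Lemma~\ref{l:F2f} rewrites the Weyl quantization of $\widetilde G(z\zeta;h)$ as $G(zhD_z+h/2i;h)$ with $G=\widetilde G+\mathcal O(h)$ an analytic symbol, and the Gevrey bounds $|G_j|\le A^{j+1}j!$ and the strong asymptotic \eqref{eq:Gzh} follow by carrying the constants through Proposition~\ref{p:ave}, Lemma~\ref{l:F2f}, and (in the action formulation) Proposition~\ref{p:inverse}. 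Transporting back through $A$, $B$ and $e^{\hat a}$, we conclude that microlocally near $0$ and modulo $\mathcal O(e^{-c_0/h})$ the operator $P$ — hence $H^{\rm w}$ — is conjugate to $G(zhD_z+h/2i;h)$ on $H_{\Phi_0}$: its eigenfunctions are the monomials $z^n$, and in the normalization of Theorem~\ref{t:action} the corresponding action values are $2\pi(n+\tfrac12)h$. Finally, if $H_1\equiv0$ then $p_1\equiv0$, the $h^1$-coefficient of the symbol is unchanged by the FBI conjugation, by Egorov, and by the averaging (the average of $0$ is $0$), so $q_1\equiv0$ and $\widetilde G_1\equiv0$; since the Weyl quantization of an $h$-independent function of $z\zeta$ has \emph{no} $h^1$-correction when written as a function of $q_0^{\rm w}=zhD_z+h/2i$ (the $h^1$-term of the Moyal product is a Poisson bracket, which vanishes for functions of $z\zeta$), Lemma~\ref{l:F2f} yields $G_1\equiv0$.

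\emph{Step 4 (Grushin problem) and the main difficulty.} Since $H^{\rm w}$ is non-self-adjoint, quasimodes alone do not produce eigenvalues, so one sets up a Grushin problem. With $N\asymp r_0/h$ and $e_n$, $0\le n<N$, the normalized monomials spanning the relevant spectral subspace of $G(q_0^{\rm w};h)$ in $D(0,r_0)$, put $R_+u=(\langle u,e_n\rangle_{\Phi_0})_{n<N}$ and $R_-u_-=\sum_n u_-(n)e_n$, and show, using the ellipticity at infinity (Proposition~\ref{elliptic}) to glue the local model on $V$ to the global operator on $H_{\Phi_0}(\mathbb C)$, that $\left(\begin{smallmatrix}Q-z&R_-\\ R_+&0\end{smallmatrix}\right)$ is bijective for $z\in D(0,r_0)$ with an inverse whose lower-right block $E_{-+}(z)$ is an $\mathcal O(e^{-c_0/h})$-perturbation of $\operatorname{diag}\big(G(2\pi(n+\tfrac12)h;h)-z\big)$; then $z\in\Spec(H^{\rm w})\cap D(0,r_0)$ iff $\det E_{-+}(z)=0$, and a Rouch\'e argument places exactly one eigenvalue $\lambda_n(h)=G(2\pi(n+\tfrac12)h;h)+\mathcal O(e^{-c_0/h})$ near each model value, which is \eqref{eq:tgen}. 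The two places where I expect real work are: (a) the bookkeeping of exponentially small errors — local analytic Fourier integral operators and realizations of analytic symbols are only defined modulo $e^{-c/h}$, and one must check these are compatible with the claimed $e^{-c_0/h}$ accuracy and with the local-to-global passage through Proposition~\ref{elliptic}; and (b) the bijectivity and the near-diagonal structure of $E_{-+}(z)$, i.e.\ the resolvent estimates for $Q-z$ away from the model eigenspace, which is exactly where non-normality could bite. By contrast the analytic content (the $j!$ growth of $G_j$) is already isolated in \S\ref{s:anas}.
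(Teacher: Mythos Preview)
Your plan matches the paper's proof almost step for step; the analytic normal form (Vey $+$ Proposition~\ref{p:ave} $+$ Lemma~\ref{l:F2f}) followed by a Grushin problem on the monomials is exactly the route taken. Two differences and one small gap are worth noting.

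\textbf{Packaging of the Grushin problem.} You propose a single $N$-dimensional Grushin problem with $E_{-+}(z)$ an $\mathcal O(e^{-c_0/h})$-perturbation of $\mathrm{diag}\big(G(2\pi(n+\tfrac12)h;h)-z\big)$. The paper splits this into two stages: Proposition~\ref{p:warm_up} uses the orthogonal projection $\Pi_{N(h)}$ onto the first $N(h)$ monomials (together with Lemma~\ref{a_priori}, which controls $1-\Pi_{N(h)}$ via a Taylor-remainder estimate, Lemma~\ref{Taylor}) to confine the spectrum to the union of small discs, and then Proposition~\ref{p:new} runs a \emph{one}-dimensional Grushin problem for each disc separately, with $R_\pm$ built from the single monomial $\varphi_j$. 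Your $N\times N$ version is perfectly viable (indeed the matrix $\mathcal D_{N(h)}$ in \eqref{eq4.10} is your model $E_{-+}$), but the paper's two-step version isolates the exponential-localization estimates (Lemmas~\ref{Taylor}, \ref{a_priori}, \ref{decay_exp}) from the Grushin algebra.

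\textbf{Weight modifications.} You work directly with the standard $\Phi_0$ and let $\kappa(\Lambda_{\Phi_0})=\Lambda_\Phi$. The paper cannot: the FBI transform adapted to a general $d^2H_0(0)$ lands on some strictly convex quadratic $\Phi_1$, and orthogonality of monomials is needed precisely with respect to the radial weight $\Phi_0=\tfrac12|x|^2$. Proposition~\ref{prop:radial} performs a Kaidi--Kerdelhu\'e-type interpolation to make the global weight equal to $\Phi_0$ near $0$ while preserving ellipticity away from $0$; a second interpolation \eqref{eq4.1.1.11} then builds the weight $\widehat\Phi$ in which the actual estimates are carried out. This is exactly the ``bookkeeping'' you flag as (a), but it is more than bookkeeping: without it the monomials are not orthogonal and Lemma~\ref{a_priori} fails.

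\textbf{A gap in the $G_1\equiv 0$ argument.} Your claim that ``the $h^1$-coefficient of the symbol is unchanged \ldots\ by Egorov'' is not correct as stated: conjugation by a \emph{generic} elliptic analytic FIO $A_0$ changes the subprincipal symbol by a term depending on the amplitude of $A_0$, so $p_1\equiv 0$ does not automatically force $q_1\equiv 0$. The paper does not argue this way; instead it invokes \cite[(a.3.11)]{HelSj} (see also \cite[\S 2]{HitSj1}) to the effect that $A_0$ can be \emph{chosen} so that the conjugation already gives the normal form modulo $\mathcal O(h^2)$, after which the pseudodifferential conjugation by $e^{\hat a}$ introduces no $\mathcal O(h)$ term. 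Your Moyal-product observation for the passage through Lemma~\ref{l:F2f} is fine, but the Egorov step needs this extra input.
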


In view of \eqref{eq:tgen} and \eqref{eq:Gzh} the theorem
gives an exponentially accurate description of eigenvalues of $ H^{\rm{w}} $ near $ 0$.
To obtain $ G ( z; h ) $ satisfying \eqref{eq:Gzh} from $ G_j $'s one can proceed in different ways.
One was used in \eqref{eq:reali}, see \cite[\S 2.2]{HiSj15}, and here we present a variant
in the spirit of exact WKB:
\begin{equation}
\label{eq:Borel} G( z, h ) = h^{-1} \int_0^\varepsilon  g ( z, t ) e^{ -t / h } dt , \ \ \ g ( z, h ) :=
\sum_{ j=0}^\infty \frac{ G_j ( z ) } {j!} h^j . \end{equation}
where $0 < \varepsilon < 1/ A $ is smaller that the radius of covergence of $ h \mapsto g ( z, h ) $.
The choice of $ \varepsilon $ produces exponentially small ambiguity, just as we saw in \eqref{eq:reali}.



The first step in the proof of Theorem \ref{t:gen} consists of passing to the FBI transform side by means of a suitable metaplectic FBI-Bargmann transform. When doing so, we write, in the notation of \eqref{eq:assH},
\begin{equation}
\label{eq4.1.0.1}
H_0(y,\eta) = q (y,\eta) + \mathcal O((y,\eta)^3), \quad (y,\eta) \rightarrow (0,0),
\end{equation}
where $ q $ is the quadratic form corresponding to $ d^2 H_0(0) $. Lemma \ref{l:quand} shows that
${\rm Re}\, (\lambda q )$ is positive definite on $\mathbb R^2$, for some $ \lambda \in \mathbb C$.  It follows from \cite[Proposition 2.4]{Hi04},~\cite[Proposition 2.1]{HiSjVi} that there exists a metaplectic unitary FBI transform, i.e. an FBI transform with a quadratic phase and a constant amplitude,
\begin{equation}
\label{eq4.1.0.2}
T: L^2(\mathbb R) \rightarrow H_{\Phi_1}(\mathbb C),
\end{equation}
such that the conjugated operator $T\circ H^{\rm w}(y,hD_y;h) \circ T^{-1}$ acting on $H_{\Phi_1}(\mathbb C)$, is of the form
\begin{equation}
\label{eq4.1.0.3}
T\circ H^{\rm w}(y,hD_y;h) \circ T^{-1} = P^{\rm w}(x,hD_x;h), \,\, P(x,\xi;h) \sim p_0(x,\xi) + hp_1(x,\xi) + \ldots,
\end{equation}
where
\begin{equation}
\label{eq4.1.0.4}
p_0(x,\xi) = H_0(\kappa_T^{-1}(x,\xi)) = \mu x\xi + \mathcal O((x,\xi)^3),   \ \ \ {\rm Im}\, (\lambda \mu) >0.
\end{equation}
Here $\kappa_T$ is the complex linear canonical transformation associated to $T$ and the quadratic weight $\Phi_1$ in (\ref{eq4.1.0.2}) is strictly convex. See also \cite[Theorem 1.4.2]{HiSj15}, \cite[Theorem 13.9]{z12}.  The structure and properties of the conjugated operator $ P^{\rm w} ( x, h D; h ) $ are reviewed in \S \ref{s:global}, and in particular, $P(x,\xi;h)$ is a holomorphic function in a tubular neighbourhood of $\Lambda_{\Phi_1}$ in $\mathbb C^2$, satisfying (\ref{eq2.1.2}), (\ref{eq2.2}), and (\ref{eq2.23}).

Following \cite{N}, \cite{Hi04}, we shall now modify the weight function $\Phi_1$ in a bounded region of $\mathbb C$, so that the modified weight agrees with the standard radial weight
\begin{equation}
\label{eq4.1}
\Phi_0(x) = \tfrac12 {\abs{x}^2}
\end{equation}
in a small but fixed neighbourhood of the origin, while keeping the ellipticity of $ p_0 $ away from $0$ (see the last condition in
\eqref{eq:assH}). This reduction is very convenient in what follows, as monomials (the eigenfunctions of $ x h D_x $) are then orthogonal in
$ L^2_{\Phi_0}(\mathbb C)$.
\begin{prop}
\label{prop:radial}
For every sufficiently small open neighborhood $U \subset \mathbb C$ of $0$ there exists
a strictly convex weight  $\widetilde{\Phi} \in C^{\infty}(\mathbb C)$ satisfying
\[ \widetilde{\Phi}|_{\mathbb C \setminus U }  = \Phi_1|_{\mathbb C \setminus U }, \ \
\widetilde{\Phi}|_{U_0 }  = \Phi_0|_{U_0 },  \ \
U_0 = \neigh_{\mathbb C} ( 0 ) \Subset U, \ \ \ \nabla^2 \widetilde{\Phi} \in L^{\infty}(\mathbb C),
\]
and such that
\begin{equation}
\label{eq4.1.1}
\forall \, \theta > 0 \ \exists \, \delta >0 \ \  (x,\xi) \in \Lambda_{\widetilde{\Phi}}, \,\,\, \abs{(x,\xi)} \geq \theta \Longrightarrow \abs{p_0(x,\xi)} \geq \delta m(x).
\end{equation}
\end{prop}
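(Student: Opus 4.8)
The plan is to obtain $\widetilde{\Phi}$ by a radial interpolation, keeping the quadratic weight $\Phi_1$ away from a small ball centred at $0$ and forcing $\widetilde{\Phi}=\Phi_0$, $\Phi_0(x)=\tfrac12|x|^2$, on a much smaller ball, and then to verify \eqref{eq4.1.1} by treating separately the region where $\widetilde{\Phi}=\Phi_1$ (where the required ellipticity is inherited from \eqref{eq:assH}) and a fixed small neighbourhood of $0$ (where $p_0$ is controlled by its quadratic part $\mu x\xi$ from \eqref{eq4.1.0.4}).

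For the construction, since $\Phi_0$ and $\Phi_1$ are strictly convex quadratic forms on $\mathbb R^2\simeq\mathbb C$ vanishing to second order at $0$, the scaling $x\mapsto x/r$ reduces matters to producing one model weight $\widetilde{\Phi}^{\sharp}\in C^{\infty}(\mathbb C)$, strictly convex with $\varepsilon I\le\nabla^2\widetilde{\Phi}^{\sharp}\le\varepsilon^{-1}I$ on $\mathbb C$, equal to $\Phi_0$ on $\{|x|\le\rho_0\}$ and to $\Phi_1$ on $\{|x|\ge1\}$, where $\rho_0>0$ may be taken as small as we please; then $\widetilde{\Phi}(x):=r^{2}\widetilde{\Phi}^{\sharp}(x/r)$ has the stated properties with $U\supset\{|x|<r\}$, $U_0=\{|x|<\rho_0 r\}$ and $r>0$ small (scaling does not change the bounds on $\nabla^2$, and makes $\widetilde{\Phi}-\Phi_1$ supported in $\{|x|<r\}$ with gradient of size $\mathcal O(r)$; since $n=1$, strict convexity gives strict plurisubharmonicity, so $\widetilde{\Phi}$ is admissible for the calculus of \S\ref{s:global}, cf. Proposition \ref{p:neww} and the second Remark after Proposition \ref{elliptic}). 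The delicate point is that one cannot build $\widetilde{\Phi}^{\sharp}$ by a naive partition of unity $\Phi_1+(1-\chi)(\Phi_0-\Phi_1)$: the term $-(\Phi_0-\Phi_1)\nabla^2\chi$ is only $\mathcal O(1)$ and in general destroys positivity of the Hessian. Instead one prescribes the strictly convex profile directly on the annulus $\{\rho_0\le|x|\le1\}$, which reduces to a one-dimensional matching problem: the resulting function and its gradient must agree with those of $\Phi_0$ at radius $\rho_0$ and with those of $\Phi_1$ at radius $1$, and the corresponding integral constraints are compatible with positivity of the (radial) Hessian precisely when $1/\rho_0$ is large enough in terms of the spectrum of $\nabla^2\Phi_1$ — which is exactly the freedom we retained. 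Weights of this kind (radial near a critical point) are constructed in \cite{N} and \cite{Hi04}.

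It remains to check \eqref{eq4.1.1}. Fix $\theta>0$ and let $(x,\xi)\in\Lambda_{\widetilde{\Phi}}$ with $|(x,\xi)|\ge\theta$. If $|x|\ge r$, then $\widetilde{\Phi}=\Phi_1$ near $x$, so $(x,\xi)\in\Lambda_{\Phi_1}$; the last condition in \eqref{eq:assH}, transported to the FBI side, says that $p_0$ is elliptic on $\Lambda_{\Phi_1}$ off the origin, and since there $|(x,\xi)|\asymp|x|\ge r$ we obtain $|p_0(x,\xi)|\ge\delta_1 m(x)$ with $\delta_1$ depending only on $r$. If $|x|<r$, then on $\Lambda_{\widetilde{\Phi}}$ we have $\xi=(2/i)\,\partial_x\widetilde{\Phi}(x)$, and $\varepsilon I\le\nabla^2\widetilde{\Phi}\le\varepsilon^{-1}I$ with $\nabla\widetilde{\Phi}(0)=0$ gives $|\xi|\asymp|x|$; hence $(x,\xi)$ lies in the tubular neighbourhood of $\Lambda_{\Phi_1}$ on which $p_0$ is holomorphic, $|x|\gtrsim\theta$, and by \eqref{eq4.1.0.4}
\[
|p_0(x,\xi)|\ \ge\ |\mu|\,|x\xi|-C\,|(x,\xi)|^{3}\ \ge\ c\,|x|^{2}\ \gtrsim\ \theta^{2}
\]
once $r$ is chosen small enough; since $m$ is bounded on $\{|x|<r\}$ this gives $|p_0(x,\xi)|\ge\delta(\theta)\,m(x)$. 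Taking $\delta=\min(\delta_1,\delta(\theta))$ proves \eqref{eq4.1.1}.

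The main obstacle is thus the convexity-preserving interpolation producing $\widetilde{\Phi}^{\sharp}$: a cutoff of the two quadratic weights does not work, and one must exploit the freedom to make the ball on which $\widetilde{\Phi}=\Phi_0$ much smaller than the complement of the ball on which $\widetilde{\Phi}=\Phi_1$, so that the one-dimensional matching conditions become solvable while keeping $\nabla^2\widetilde{\Phi}$ uniformly positive.
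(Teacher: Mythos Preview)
Your ellipticity argument for \eqref{eq4.1.1} is essentially correct and close in spirit to the paper's (the paper computes $\Re(ix\xi(x))=\langle x,\nabla\widetilde\Phi(x)\rangle_{\mathbb R^2}\asymp|x|^2$ directly from the convex-combination form of $\xi(x)$, while you deduce $|\xi|\asymp|x|$ from the uniform Hessian bounds; both give $|x\xi|\asymp|x|^2$ and the rest is the same).

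The gap is in the construction of $\widetilde\Phi^{\sharp}$. You correctly observe that a standard cutoff fails because the term $(\Phi_0-\Phi_1)\nabla^2\chi$ is only $\mathcal O(1)$, but your proposed fix --- ``prescribe the strictly convex profile directly on the annulus, which reduces to a one-dimensional matching problem'' --- is not justified. The weight $\Phi_1$ is a general strictly convex quadratic form, not radial, so the boundary data at $|x|=1$ depend on the angle; matching along each ray gives control of the radial second derivative only, not of the full Hessian in $\mathbb R^2$, and smooth angular dependence together with $2$--dimensional strict convexity does not follow from any one-dimensional argument you have written down. The references \cite{N},~\cite{Hi04} do contain a construction, but it is \emph{not} the one you describe.

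The paper's resolution is to stay with a cutoff, but choose it on a logarithmic scale: with $\psi\in C^\infty(\mathbb R;[0,1])$ a standard cutoff, set $\chi_\varepsilon(x)=\psi(\varepsilon\log|x|)$, so that $\nabla\chi_\varepsilon=\mathcal O(\varepsilon/|x|)$ and $\nabla^2\chi_\varepsilon=\mathcal O(\varepsilon/|x|^2)$. Since $\Phi_0-\Phi_1=\mathcal O(|x|^2)$ and $\nabla(\Phi_0-\Phi_1)=\mathcal O(|x|)$, the ``bad'' terms in $\nabla^2\bigl(\chi_\varepsilon\Phi_0+(1-\chi_\varepsilon)\Phi_1\bigr)$ are now $\mathcal O(\varepsilon)$ rather than $\mathcal O(1)$, and strict convexity survives for $\varepsilon$ small. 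Rescaling the argument of $\chi_\varepsilon$ by $e^{2/\varepsilon}/\eta$ then places the transition in $\{\eta e^{-1/\varepsilon}\le|x|\le\eta\}$, giving exactly the separation of scales you were aiming for. This is precisely the point where your diagnosis (``cutoff doesn't work'') is too pessimistic: a \emph{slowly varying} cutoff does work, and that is the missing idea.
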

\begin{proof}
Following an argument of~\cite{N}, we let $\psi \in C^{\infty}(\mathbb R; [0,1])$ be such that $\psi(t) = 1$, $t\leq 1$, $\psi(t) = 0$, $t\geq 2$, and notice that the $C^{\infty}_0(\mathbb C)$--function $\chi_{\varepsilon}(x) = \psi(\varepsilon \log \abs{x})$ satisfies for $\varepsilon \in (0,1]$,
\begin{equation}
\label{eq4.1.1.1}
\nabla \chi_{\varepsilon}(x) = \mathcal O\left({\varepsilon} |x|^{-1} \right), \quad \nabla^2 \chi_{\varepsilon}(x) = \mathcal O \left({\varepsilon}{\abs{x}^{-2}}\right).
\end{equation}
It follows from (\ref{eq4.1.1.1}) that the $C^{\infty}(\mathbb C)$--function
\begin{equation}
\label{eq4.1.1.2}
\Phi_{\varepsilon,\eta}(x) = \chi_{\varepsilon}\left({e^{2/\varepsilon}x}/{\eta}\right)\Phi_0(x) + \left(1-\chi_{\varepsilon}\left({e^{2/\varepsilon}x}/{\eta}\right)\right)\Phi_1(x)
\end{equation}
is strictly convex for $\varepsilon > 0$ sufficiently small, uniformly in $\eta > 0$, and we have
\begin{equation}
\label{eq4.1.1.3}
\Phi_{\varepsilon,\eta}(x) = \Phi_0(x), \,\, \abs{x} < \eta\, e^{-1/\varepsilon}, \quad \Phi_{\varepsilon,\eta}(x) = \Phi_1(x), \,\, \abs{x} > \eta,
\end{equation}
Furthermore,
\[
\norm{\nabla(\Phi_{\varepsilon,\eta} - \Phi_1)}_{L^{\infty}(\mathbb C)} \leq \mathcal O(\eta),\quad \nabla^2 \Phi_{\varepsilon, \eta} \in L^{\infty}(\mathbb C),
\]
uniformly in $\varepsilon \in (0,1]$, $\eta > 0$. Using \eqref{eq2.1} we define $ \Lambda_{\Phi_{\varepsilon,\eta}} $ which coincides with $\Lambda_{\Phi_0}$ in a neighbourhood of the origin and agrees with $\Lambda_{\Phi_1}$ away from another fixed neighbourhood of $0$.
In view of (\ref{eq4.1.1.1}), (\ref{eq4.1.1.2}) we have
\begin{equation}
\label{eq4.1.1.4}
\begin{split}
\xi(x) & := \tfrac{2}{i}\partial_x \Phi_{\varepsilon,\eta}(x) \\
& = \chi_{\varepsilon}\left({e^{2/\varepsilon}x}/{\eta}\right)
\tfrac{2}{i}{\partial_x \Phi_0} (x) + \left(1-\chi_{\varepsilon}\left({e^{2/\varepsilon}x}/{\eta}\right)\right)\tfrac{2}{i}{\partial_x \Phi_1}(x) + \mathcal O(\varepsilon \abs{x}),
\end{split}
\end{equation}
uniformly in $\eta>0$, and therefore, for $\varepsilon > 0$ small enough,
\begin{equation}
\label{eq4.1.1.5}
\begin{split}
{\rm Re}\,(ix\xi(x)) & = \chi_{\varepsilon}\, {\rm Re}\, (2x\, \partial_x \Phi_0(x)) + (1-\chi_{\varepsilon})\, {\rm Re}\, (2x\, \partial_x \Phi_1(x)) + \mathcal O(\varepsilon \abs{x}^2) \\
& = \chi_{\varepsilon} \langle{x,\nabla \Phi_0(x)\rangle}_{\mathbb R^2} + (1-\chi_{\varepsilon})\langle{x,\nabla \Phi_1(x)\rangle}_ {\mathbb R^2} + \mathcal O(\varepsilon \abs{x}^2) \asymp \abs{x}^2,
\end{split}
\end{equation}
uniformly in $\eta>0$. Taking $\eta > 0$ sufficiently small but fixed, we obtain, using (\ref{eq4.1.0.4}), (\ref{eq4.1.1.5}),
\begin{equation}
\label{eq4.1.1.6}
\abs{p_0(x,\xi(x))} \asymp \abs{x}^2,\,\,x\in {\rm neigh}_{\mathbb C}(0).
\end{equation}
Recalling also (\ref{eq:assH}), (\ref{eq4.1.1.3}), we obtain the following ellipticity property along $\Lambda_{\Phi_{\varepsilon,\eta}}$: $\forall \theta > 0$ $\exists \delta >0$ such that
\begin{equation}
\label{eq4.1.1.7}
x\in \mathbb C,\,\, \abs{x} \geq \theta  \Longrightarrow \abs{p_0(x,\xi(x))} \geq \delta m(x).
\end{equation}
It follows that we can take $\widetilde{\Phi} = \Phi_{\varepsilon, \eta}$, for $\varepsilon > 0$, $\eta > 0$ small enough fixed, and this completes the proof.
\end{proof}

Proposition \ref{prop:radial} and the results of \S \ref{s:global}  show that
\begin{equation}
\label{eq4.1.1.8}
P^{\rm w}(x,hD_x;h) = \mathcal O(1): H_{\widetilde{\Phi},m}(\mathbb C) \rightarrow H_{\widetilde{\Phi}}(\mathbb C).
\end{equation}
The spectral analysis of $P^{\rm w}$ in the proof of Theorem \ref{t:gen} is carried out in a weighted space obtained from $H_{\widetilde{\Phi}}(\mathbb C)$ by an additional modification of the weight in a small but fixed neighbourhood of the origin, implementing the holomorphic canonical transformation $\kappa: {\rm neigh}_{\mathbb C^2}(0) \rightarrow {\rm neigh}_{\mathbb C^2}(0)$ given in Theorem \ref{t:vey}, with $p(x,\xi) = p_0(x,\xi)$ in (\ref{eq4.1.0.4}). We have $\kappa(\rho) = \rho + \mathcal O(\rho^2)$, and it follows that the I-Lagrangian R-symplectic manifold $\kappa({\rm neigh}_{\Lambda_{\Phi_0}}(0)) \subset \mathbb C^2$, where $\Phi_0$ is the standard weight function given in (\ref{eq4.1}), is of the form
\begin{equation}
\label{eq4.1.1.9}
\kappa({\rm neigh}_{\Lambda_{\Phi_0}}(0)) = {\rm neigh}_{\Lambda_{\Phi}}(0).
\end{equation}
Here $\Phi$ is real analytic near $0\in \mathbb C$ and we may assume that $\Phi(0) = 0$. It follows therefore that $\Phi(x) = \mathcal O(\abs{x}^2)$ as $x\rightarrow 0$, and using also that
\[
T_{0}\Lambda_{\Phi} = d\kappa(0)(\Lambda_{\Phi_0}) = \Lambda_{\Phi_0},
\]
we obtain a more precise description of the deformed weight,
\begin{equation}
\label{eq4.1.1.10}
\Phi(x) = \Phi_0(x) + \mathcal O(x^3).
\end{equation}
In particular, $\Phi$ is strictly convex near $0\in \mathbb C$. We then define, using the notation in the proof of Proposition \ref{prop:radial},
\begin{equation}
\label{eq4.1.1.11}
\widehat{\Phi}(x) = \chi_{\varepsilon}\left({e^{2/\varepsilon}x}/{\eta}\right)\Phi(x) + \left(1-\chi_{\varepsilon}\left({e^{2/\varepsilon}x}/{\eta}\right)\right)\Phi_1(x),
\end{equation}
for $\varepsilon >0$, $\eta > 0$ small enough fixed. It follows from (\ref{eq4.1.1.1}), (\ref{eq4.1.1.10}) that the function $\widehat{\Phi} \in C^{\infty}(\mathbb C)$ is strictly convex, satisfying $\widehat{\Phi}(x) = \Phi_1(x)$ for $\abs{x} \geq \eta$, $\widehat{\Phi}(x) = \Phi(x)$ for $\abs{x} < \eta e^{-1/\varepsilon}$, with $\norm{\nabla(\widehat{\Phi} - \Phi_1)}_{L^{\infty}(\mathbb C)}$ small enough, $\nabla^2 \widehat{\Phi} \in L^{\infty}(\mathbb C)$. We also get, using (\ref{eq4.1.0.4}), (\ref{eq4.1.1.5}), and (\ref{eq4.1.1.10}), that $\forall \theta > 0$ $\exists \delta >0$ such that
\begin{equation}
\label{eq4.1.1.1.12}
(x,\xi) \in \Lambda_{\widehat{\Phi}}, \,\,\, \abs{(x,\xi)} \geq \theta \Longrightarrow \abs{p_0(x,\xi)} \geq \delta m(x).
\end{equation}
This discussion shows that
\begin{equation}
\label{eq4.1.1.1.13}
P^{\rm w}(x,hD_x;h) = \mathcal O(1): H_{\widehat{\Phi},m}(\mathbb C) \rightarrow H_{\widehat{\Phi}}(\mathbb C),
\end{equation}
and it follows from the ellipticity property (\ref{eq4.1.1.1.12}) and the proof of Proposition \ref{elliptic} that for each $\varepsilon > 0$ there exist $h_0, \eta, \gamma , C  > 0$ such that for all $0 < h < h_0 $, $\abs{z} < \gamma$,  we have
\begin{equation}
\label{eq4.1.2}
\| u \|_{L^2_{\widehat{\Phi},m}(\abs{x} > \varepsilon)} \leq C \| (P^{\rm{w}}(x,hD_x;h)-z)u\| _{H_{\widehat{\Phi}}(\{\abs{x} > \varepsilon/4\})} +  C e^{-\eta/h} \| u \|_{H_{\widehat{\Phi},m}(\mathbb C)} ,
\end{equation}
for all $u\in H_{\Phi,m}(\mathbb C)$.

Let $U \Subset \mathbb C$ be a small open neighbourhood of the origin such that $\widehat{\Phi} = \Phi$ in a neighbourhood of $\overline{U}$. There exists a classical analytic $h$--pseudodifferential operator
\begin{equation}
\label{eq4.1.3}
P = P_U = \mathcal O(1): H_{\Phi}(U) \rightarrow H_{\Phi}(U),
\end{equation}
with the leading symbol $p_0$ in (\ref{eq4.1.0.4}), realized using a good contour of the form (\ref{eq3.17}), with $\Psi = \Phi$, such that for some $\delta > 0$, we have for all $u\in H_{\widehat{\Phi},m}(\mathbb C)$,
\begin{equation}
\label{eq4.1.4}
\| P^{\rm{w}}(x,hD_x;h)u - P_U(u|_U)\|_{H_{\Phi}(U)} \leq \mathcal O(e^{-\delta/h}) \| u \|_{H_{\widehat{\Phi},m}(\mathbb C)}.
\end{equation}
We may view $P_U$ as a local realization of the globally defined operator $P^{{\rm w}}(x,hD_x;h)$ in (\ref{eq4.1.1.1.13}), and the key point in the proof of Theorem \ref{t:gen} is the following precise normal for $ P = P_ U$.
\begin{prop}
\label{p:norm}
Suppose that $ P = P_U $ is defined above, so that in particular {\rm \eqref{eq4.1.3}} holds. Then there exist a small open disc $V = D(0,r) \Subset \mathbb C$ centered at the origin and an elliptic analytic Fourier integral operator $ A = \mathcal O(1): H_{\Phi_0}(V) \rightarrow H_{\Phi}(U)$, of the form {\rm (\ref{eq3.9.4})}, with a microlocal inverse $B = \mathcal O(1): H_{\Phi}(U) \rightarrow H_{\Phi_0}(V)$, such that the conjugated operator $Q = BPA: H_{\Phi_0}(V) \rightarrow H_{\Phi_0}(V)$ is of the form
\begin{equation}
\label{eq4.1.6}
Q = G( 2 i x h D_x + h; h ) + R,
\end{equation}
where $G = G(w;h)$ is a classical analytic symbol in a neighbourhood of $0\in \mathbb C$ such that $G_0(0) = 0$, $G'_0(0) \neq 0$,
$G(2 i xhD_x + h ;h)$ is an analytic $h$--pseudodifferential operator defined as in the remark after the statement of Lemma {\rm \ref{l:F2f}}, and
for each $W \Subset V$ there exists $\delta = \delta_W > 0$ such that
\begin{equation}
\label{eq4.1.6.5}
R = \mathcal O(e^{-\delta/h}) : H_{\Phi_0}(V) \rightarrow H_{\Phi_0}(W).
\end{equation}
\end{prop}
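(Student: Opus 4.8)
The plan is to conjugate $P=P_U$ in two stages — first by a local analytic Fourier integral operator that realizes the Vey normal form of $p_0$, then by an analytic pseudodifferential operator coming from Proposition \ref{p:ave} — and finally to recognize the outcome through Lemma \ref{l:F2f}. All the normal-form identities will be exact at the level of formal analytic symbols, and will become genuine operator identities with $\mathcal O(e^{-\delta/h})$ errors after passing to realizations on successively smaller discs. First I would recall the geometric input set up before the statement: by Theorem \ref{t:vey} applied to $p_0$ in \eqref{eq4.1.0.4}, there is a holomorphic canonical transformation $\kappa:\neigh_{\mathbb C^2}(0)\to\neigh_{\mathbb C^2}(0)$ with $\kappa(\rho)=\rho+\mathcal O(\rho^2)$, such that $p_0\circ\kappa(x,\xi)=g(2ix\xi)$ for a holomorphic $g$ with $g(0)=0$ and $2i\,g'(0)=\mu\neq0$, and — by \eqref{eq4.1.1.9} — $\kappa(\neigh_{\Lambda_{\Phi_0}}(0))=\neigh_{\Lambda_\Phi}(0)$ with $\Phi$ strictly convex and $\Phi(x)=\Phi_0(x)+\mathcal O(x^3)$.

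Next I would quantize $\kappa$. By Lemma \ref{l:gen} and Proposition \ref{p:AB}, $\kappa$ is quantized by an elliptic analytic Fourier integral operator $A_0=\mathcal O(1):H_{\Phi_0}(V_0)\to H_\Phi(U)$ of the form \eqref{eq3.9.4}, with a microlocal inverse $B_0=\mathcal O(1):H_\Phi(U)\to H_{\Phi_0}(V_0)$ on a small disc $V_0$ centered at $0$. Composing as in \eqref{eq3.18}--\eqref{eq3.20} and using the microlocal inversion \eqref{eq3.12} together with the local realization estimate \eqref{eq4.1.4}, the operator $Q_0:=B_0PA_0$ is, modulo $\mathcal O(e^{-\delta/h})$ on nested discs, an analytic $h$-pseudodifferential operator on $H_{\Phi_0}(V_0)$ whose Weyl symbol $q\sim q_0+hq_1+\cdots$ has $q_0(x,\xi)=g(2ix\xi)=\tilde g(x\xi)$, with $\tilde g(w):=g(2iw)$, $\tilde g(0)=0$, $\tilde g'(0)\neq0$.

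Then I would apply the averaging result. Since $q_1$ need not a priori be a function of $x\xi$, I would first use the Remark after Proposition \ref{p:ave} (an $h$-independent conjugation, cf. \eqref{eq:remarque}) to arrange that the subprincipal symbol is also a function of $x\xi$, and then Proposition \ref{p:ave} itself, with $\tilde g$ in place of $g$ there, to obtain an analytic symbol $a$ and an analytic symbol $\tilde G(w;h)$ with $\tilde G_0=\tilde g$ such that, as formal analytic pseudodifferential operators via \eqref{eq:conj}, $e^{\hat a}\circ\hat q\circ e^{-\hat a}$ has Weyl symbol $\tilde G(x\xi;h)$. Let $E$ be the composition of the honest realizations of these two exponential conjugations and $E^{-1}$ the composition of the realizations of their inverses; by the consequence of \eqref{eq:key} each factor is an elliptic analytic pseudodifferential operator, so $E$ and $E^{-1}$ are mutually microlocally inverse elliptic analytic pseudodifferential operators on nested small discs. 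Using the realization theory of \S\ref{s:anas} together with the composition results of \S\ref{s:local}, one then gets $EQ_0E^{-1}$ equal to the Weyl quantization of $\tilde G(x\xi;h)$ up to $\mathcal O(e^{-\delta/h})$ on nested discs.

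Finally I would identify the monodromy operator and assemble. Applying Lemma \ref{l:F2f} to the symbol $\tilde G(z\zeta;h)$ produces an analytic symbol $\mathcal G(w;h)$ with $\mathcal G_0=\tilde g$ and, in the sense of \eqref{eq:F2f1}, the Weyl quantization of $\tilde G(x\xi;h)$ equals $\mathcal G(xhD_x+h/2i;h)$; setting $G(w;h):=\mathcal G(w/2i;h)$ this reads $G(2ixhD_x+h;h)$, the functional-calculus operator defined in the Remark after Lemma \ref{l:F2f}, realized with $\mathcal O(e^{-\delta/h})$ error, and $G_0(w)=g(w)$, so $G_0(0)=0$ and $G_0'(0)=g'(0)=\mu/2i\neq0$. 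I then set $V=D(0,r)$ to be the innermost disc reached above, $A:=A_0\circ E^{-1}$ and $B:=E\circ B_0$; post-composing the FIO $A_0$ with the elliptic analytic pseudodifferential operator $E^{-1}$ keeps $A$ an elliptic analytic FIO of the form \eqref{eq3.9.4} quantizing $\kappa$, by the composition theory of \S\ref{s:local}, and $B$ is a microlocal inverse of $A$. Collecting the estimates, $Q=BPA=EQ_0E^{-1}+\mathcal O(e^{-\delta/h})=G(2ixhD_x+h;h)+R$ with $R=\mathcal O(e^{-\delta/h}):H_{\Phi_0}(V)\to H_{\Phi_0}(W)$ for each $W\Subset V$. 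The hard part will be the last sentence made precise, namely the bookkeeping of realization errors: one must check that each passage from a formal analytic symbol to an honest operator, and each composition of such realizations across the several nested neighbourhoods, costs only an $\mathcal O(e^{-\delta/h})$ error, which is where the quantitative content of \S\ref{s:anas} (the bound on differences of realizations) and \S\ref{s:local} (composition of local analytic FIOs and pseudodifferential operators, with errors in $H_{\Psi-\varepsilon}$) is used.
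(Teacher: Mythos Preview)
Your proposal is correct and follows the same approach as the paper: quantize the Vey canonical transformation $\kappa$ by an elliptic analytic FIO $A_0$ (Proposition~\ref{p:AB}), then conjugate the resulting operator $Q_0=B_0PA_0$ by the exponential of an analytic pseudodifferential operator coming from Proposition~\ref{p:ave}, and set $A=A_0$ composed with this pseudodifferential factor. Your write-up is in fact more explicit than the paper's at two points --- the use of the remark after Proposition~\ref{p:ave} to adjust the subprincipal symbol, and the invocation of Lemma~\ref{l:F2f} to pass from the Weyl quantization of $\tilde G(x\xi;h)$ to $G(2ixhD_x+h;h)$ --- both of which the paper compresses into the phrase ``is of the form \eqref{eq4.1.6}, \eqref{eq4.1.6.5}.''
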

\begin{proof}
Let $A_0 = \mathcal O(1): H_{\Phi_0}(V) \rightarrow H_{\Phi}(U)$ be an elliptic analytic Fourier integral operator of the form (\ref{eq3.9.4}), associated to the holomorphic canonical transformation $\kappa: {\rm neigh}_{\mathbb C^2}(0) \rightarrow {\rm neigh}_{\mathbb C^2}(0)$ introduced in Theorem \ref{t:vey} with $p = p_0$ in (\ref{eq4.1.0.4}), such that (\ref{eq4.1.1.9}) holds. Let $B_0$ be a microlocal inverse of $A_0$ -- see Proposition \ref{p:AB}.

It follows from \eqref{eq3.19} that $Q_0 := B_0 P A_0: H_{\Phi_0}(V) \rightarrow H_{\Phi_0}(V)$ is a classical analytic $h$--pseudodifferential operator with the principal symbol given by $q_0(x,\xi) = p_0(\kappa(x,\xi)) = g(x\xi)$, for some $g \in \mathscr O ( \neigh_{\mathbb C }  (0 ) ) $, $g(0) = 0$, $g'(0) \neq 0$. Furthermore, Proposition \ref{p:ave} shows that there exists an analytic $h$--pseudodifferential operator of the form $e^{A_1}$, where $A_1$ is an analytic $h$--pseudodifferential operator of order $0$, such that the conjugated operator $e^{-A_1}\circ Q_0\circ e^{A_1}: H_{\Phi_0}(V) \rightarrow H_{\Phi_0}(V)$ is of the form (\ref{eq4.1.6}), (\ref{eq4.1.6.5}). We obtain the desired Fourier integral operator by putting  $A := A_0 \circ e^{A_1}$.
\end{proof}

 We now define
\begin{equation}
\label{eq4.1.8}
\varphi_j(x) = (\pi j!)^{-1/2} h^{-\frac12(j+1)} {x}^j, \quad j=0,1,2,\ldots
\end{equation}
noting that they form an orthonormal basis in the Bargmann space $H_{\Phi_0}(\mathbb C)$. To $ \varphi_j $'s we associate quasi-eigenvalues of (\ref{eq4.1.6}):
\begin{equation}
\label{eq4.1.9}
\left\{ G\left((2j +1)h;h\right);\,j\in \mathbb N\right\} \cap {\rm neigh}_{\mathbb C}(0).
\end{equation}
The first step is a localization result which shows that the spectrum of $P^{\rm{w}} $ in (\ref{eq4.1.1.1.13}) lies near these quasi-eigenvalues:

\begin{prop}
\label{p:warm_up}
There exist $ r_0, c_0 , h_0 > 0 $ such that
for
\begin{equation}
\label{eq:condz}  z \in D ( 0 , r_0 ),  \ \ \ | z - G ( ( 2j + 1 ) h; h) | \geq c_0 h, \ \ \ 0 < h < h_0 ,
\end{equation}
the operator
\begin{equation}
\label{eq:4.20}
P^{\rm{w}}(x,hD_x;h) - z: H_{\widehat{\Phi},m}(\mathbb C) \rightarrow H_{\widehat{\Phi}}(\mathbb C)
\end{equation}
is invertible. In particular, considering $ P^{\rm{w}}(x,hD_x;h)$ as a closed densely defined operator on $ H_{\widehat{\Phi}}(\mathbb C)$ with
the domain given by $ H_{\widehat{\Phi},m}(\mathbb C) $, we have
\begin{equation}
\label{eq:warm_up} \Spec (P^{\rm{w}}(x,hD_x;h) ) \cap D ( 0 , r_0 ) \subset \bigcup_j D (
 G ( ( 2j + 1 ) h, h ), c_0 h ) .
 \end{equation}
 \end{prop}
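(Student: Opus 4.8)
The plan is to establish the a priori bound
\[
\|u\|_{H_{\widehat{\Phi},m}(\mathbb C)}\;\le\;\frac{C}{h}\,\|(P^{\rm w}(x,hD_x;h)-z)u\|_{H_{\widehat{\Phi}}(\mathbb C)},\qquad u\in H_{\widehat{\Phi},m}(\mathbb C),
\]
uniformly for $z$ as in \eqref{eq:condz} and $0<h<h_0$. Since $P^{\rm w}-z$ is Fredholm of index $0$ as an operator $H_{\widehat{\Phi},m}(\mathbb C)\to H_{\widehat{\Phi}}(\mathbb C)$ for $z$ near $0$ — this follows from \eqref{eq4.1.1.1.13}, the meromorphy of the resolvent recalled before Theorem \ref{t:gen}, and the fact that $H_{\widehat{\Phi}}$ is the same space as $H_{\Phi_1}$ with an equivalent norm — such a bound forces injectivity, hence invertibility of \eqref{eq:4.20}. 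Statement \eqref{eq:warm_up} is then immediate, as any $z\in D(0,r_0)$ outside $\bigcup_j D(G((2j+1)h;h),c_0h)$ satisfies \eqref{eq:condz}.

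First I would localize. Fix $\varepsilon>0$ small enough that $\{|x|\le 2\varepsilon\}\Subset U$, so that $\widehat{\Phi}=\Phi$ and $m\asymp 1$ there. On $\{|x|>\varepsilon\}$ the ellipticity estimate \eqref{eq4.1.2} (applied with this $\varepsilon$) controls $\|u\|_{L^2_{\widehat{\Phi},m}(|x|>\varepsilon)}$ by $C\|(P^{\rm w}-z)u\|_{H_{\widehat{\Phi}}}+Ce^{-\eta/h}\|u\|_{H_{\widehat{\Phi},m}}$, so it remains to estimate $u$ on $\{|x|\le\varepsilon\}$. For that I pass to the normal form of Proposition \ref{p:norm}: with the Fourier integral operators $A,B$ there, set $v:=B(u|_U)\in H_{\Phi_0}(V)$ — the restriction $u|_U$ is holomorphic, so this is legitimate. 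Using $AB=I+\mathcal O(e^{-\eta/h})$ microlocally, the local realization \eqref{eq4.1.4}, and $Q=BP_UA=G(2ixhD_x+h;h)+R$ with $R=\mathcal O(e^{-\delta/h})$, one finds $(Q-z)v=B\big((P^{\rm w}-z)u|_U\big)+\mathcal O(e^{-c/h})\|u\|_{H_{\widehat{\Phi},m}}$ in $H_{\Phi_0}(V')$, $V'\Subset V$, whence
\[
\|(Q-z)v\|_{H_{\Phi_0}(V')}\le C\|(P^{\rm w}-z)u\|_{H_{\widehat{\Phi}}}+Ce^{-c/h}\|u\|_{H_{\widehat{\Phi},m}},\qquad \|v\|_{H_{\Phi_0}(V)}\le C\|u\|_{H_{\widehat{\Phi},m}}.
\]
Conversely, $A$ is an oscillatory integral whose (nondegenerate) phase has its critical point at $y=\tilde{\kappa}^{-1}(x)$, close to $0$ for $x$ in a fixed small neighbourhood $W_2$ of $0$, so contributions to $Av$ from $v$ outside a smaller disc $V_1\Subset V$ are exponentially suppressed relative to $\Phi$; combined with $u|_{W_2}=Av+\mathcal O(e^{-\eta/h})$ and $\{|x|\le\varepsilon\}\Subset W_2$ this gives $\|u\|_{H_{\Phi}(\{|x|\le\varepsilon\})}\le C\|v\|_{H_{\Phi_0}(V_1)}+Ce^{-c/h}\|u\|_{H_{\widehat{\Phi},m}}$. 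Everything therefore reduces to the model estimate $h\|v\|_{H_{\Phi_0}(V_1)}\le C\|(Q-z)v\|_{H_{\Phi_0}(V')}+Ce^{-c/h}\|v\|_{H_{\Phi_0}(V)}$.

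To prove the model estimate I would first absorb $R$ (using $R=\mathcal O(e^{-\delta/h})$) and then factor by functional calculus. By Proposition \ref{p:inverse} there is an analytic symbol $S$ with $S_0(0)=0$, $S_0'(0)\ne 0$ and $G(S(w;h);h)\equiv w$; putting $\mu:=S(z;h)$ one has $G(w;h)-z=(w-\mu)e(w;h)$ with $e$ elliptic near $0$, so $G(2ixhD_x+h;h)-z=e(2ixhD_x+h;h)\circ(2ixhD_x+h-\mu)$ with $e(2ixhD_x+h;h)$ microlocally invertible on $H_{\Phi_0}(V)$. Since $w\mapsto G(w;h)$ is bi-Lipschitz near $0$ uniformly in $h$, the spacing condition \eqref{eq:condz} translates into $\operatorname{dist}(\mu,(2\mathbb N+1)h)\ge c_0'h$ for all $j$ (the large $j$ being trivially far). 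It remains to bound $h\|v\|_{H_{\Phi_0}(V_1)}$ by $C\|(2ixhD_x+h-\mu)v\|_{H_{\Phi_0}(V)}+Ce^{-c/h}\|v\|_{H_{\Phi_0}(V)}$. Here I would expand $v=\sum_j v_jx^j$ and split at degree $N=\lfloor c_1/h\rfloor$, $v=v^{<}+v^{>}$: Cauchy estimates in $H_{\Phi_0}(V)$ show that, for a suitable choice of the nested discs and of $c_1>0$, both $v^{>}$ and $(2ixhD_x+h-\mu)v^{>}$ are $\mathcal O(e^{-c/h})\|v\|_{H_{\Phi_0}(V)}$ on the smaller discs; the polynomial $v^{<}=\sum_{j\le N}c_j\varphi_j$ satisfies $\|v^{<}\|_{H_{\Phi_0}(\mathbb C)}\asymp\|v^{<}\|_{H_{\Phi_0}(V_1)}$, the monomials being orthogonal on every disc with $Nh$ below the relevant radius; and on the Bargmann space $(2ixhD_x+h)\varphi_j=(2j+1)h\varphi_j$, so $\|(2ixhD_x+h-\mu)v^{<}\|_{H_{\Phi_0}(\mathbb C)}\ge c_0'h\,\|v^{<}\|_{H_{\Phi_0}(\mathbb C)}$. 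Assembling these pieces gives the model estimate, and hence the a priori bound after absorbing the $\tfrac{C}{h}e^{-c/h}\|u\|$ error for $h<h_0$.

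The main obstacle is exactly this model estimate. The difficulty is that on the disc $V$ the operator $2ixhD_x+h-\mu$ is \emph{not} elliptic at the value $\mu$ — its symbol on $\Lambda_{\Phi_0}$ is $2|x|^2+h$, which takes the value $\mu$ for $\mu$ in the relevant range — so the usual elliptic parametrix is unavailable and one must genuinely use the global spectral structure of $2ixhD_x+h$ on $H_{\Phi_0}(\mathbb C)$ together with the Taylor truncation at order $1/h$. A secondary but real point is the bookkeeping of the nested neighbourhoods $V_1\Subset V'\Subset V$, $\{|x|\le\varepsilon\}\Subset W_2\Subset U$ and of the constant $c_1$, which must be arranged so that every exponentially small error is indeed negligible against $h^{-1}$ and can be absorbed.
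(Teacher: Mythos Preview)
Your plan is correct and follows the same overall architecture as the paper's proof: exterior elliptic estimate \eqref{eq4.1.2}, passage to the normal form $Q$ of Proposition \ref{p:norm} via $B$, Taylor truncation at degree $N\asymp 1/h$, use of the eigenfunction property $(2ixhD_x+h)\varphi_j=(2j+1)h\varphi_j$, and then the Fredholm-index-zero conclusion. The technical lemmas you would need (decay of the Taylor remainder on a smaller disc, concentration of low monomials) are exactly the paper's Lemmas \ref{Taylor}, \ref{a_priori}, \ref{decay_exp}.

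There is one genuine difference in the model step. You first factor $G(w;h)-z=(w-\mu)\,e(w;h)$ via Proposition \ref{p:inverse}, invert the elliptic factor $e(2ixhD_x+h;h)$, and reduce to the scalar model $2ixhD_x+h-\mu$, which you then analyze by splitting $v=v^<+v^>$ and using global orthogonality of monomials on $H_{\Phi_0}(\mathbb C)$. The paper skips the factorization: it applies the orthogonal projection $\Pi_{N(h)}$ in $H_{\Phi_0}(V)$ directly to $(Q-z)\widetilde u=\widetilde v$ and observes that, in the basis $f_j$, the matrix of $\Pi_{N(h)}(Q-z)\Pi_{N(h)}$ is $\bigl((G((2j+1)h;h)-z)\delta_{jk}\bigr)+\mathcal O(e^{-\delta/h})$, hence invertible with norm $\mathcal O(h^{-1})$ under \eqref{eq:condz}. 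This is a little more direct since the normal form already diagonalizes $Q$ on monomials up to $R$; your factorization buys a conceptually cleaner scalar model at the cost of invoking Proposition \ref{p:inverse} and an extra elliptic inversion. A second, related difference is in the bookkeeping: the paper takes $N(h)\sim r^2/(8h)$ and consequently carries an ``annulus'' term $\|\widetilde u\|_{H_{\Phi_0}(V\setminus\widetilde V)}$ in \eqref{eq4.9}, which it then controls through \eqref{eq4.17} and the exterior elliptic estimate; your smaller cutoff $N=c_1/h$ with $c_1\ll r_1^2$ avoids that term but requires the equivalence $\|v^<\|_{H_{\Phi_0}(\mathbb C)}\asymp\|v^<\|_{H_{\Phi_0}(V_1)}$, which indeed follows from Lemma \ref{decay_exp} once $c_1$ is chosen small enough relative to the radius of $V_1$. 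Both choices work; the paper's is slightly more robust to the nesting of neighbourhoods, yours slightly more transparent about where the spectral gap enters.
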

The proof is quite involved and we start with a few technical lemmas.
The first comes from the work of G\'erard--Sj\"ostrand \cite[Lemma 4.5]{GeSj} and is natural when
studying approximation by holomorphic polynomials \eqref{eq4.1.8}. Since the proof is simple we recall it for the convenience of the reader.
\begin{lemm}
\label{Taylor}
Suppose that $u$ is holomorphic in a neighbourhood of $\overline{D(0,1)}$ satisfying $u^{(j)}(0) = 0$ for $j<N$. Then for $ 0 < \rho < 1 $ we have
\begin{equation}
\label{eq4.2}
\| u \|_{L^{\infty}(D(0, \rho))} \leq N ( 1 - \rho)^{-1} \rho^N \| u \|_{L^{\infty}(D(0, 1))}.
\end{equation}
\end{lemm}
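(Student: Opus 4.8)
The plan is to convert the order-$N$ vanishing of $u$ at the origin into a gain of $\rho^N$ by a routine Taylor/Cauchy-estimate argument. Since $u$ is holomorphic in a neighbourhood of $\overline{D(0,1)}$, I would first write its Taylor expansion $u(z)=\sum_{k\ge N}a_kz^k$ with $a_k=u^{(k)}(0)/k!$ (the terms with $k<N$ drop out by hypothesis), and record the Cauchy estimate $|a_k|\le r^{-k}\|u\|_{L^\infty(D(0,r))}$ on each circle $|z|=r<1$; letting $r\uparrow 1$ and using that $u$ extends holomorphically past the unit circle gives $|a_k|\le\|u\|_{L^\infty(D(0,1))}$ for all $k\ge N$.

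Then for $|z|\le\rho$ I would simply sum the geometric tail:
\[
|u(z)|\le\sum_{k\ge N}|a_k|\rho^k\le\|u\|_{L^\infty(D(0,1))}\sum_{k\ge N}\rho^k=\frac{\rho^N}{1-\rho}\,\|u\|_{L^\infty(D(0,1))},
\]
and take the supremum over $|z|\le\rho$. Since $N\ge 1$, this is in fact slightly stronger than the asserted bound $N(1-\rho)^{-1}\rho^N\|u\|_{L^\infty(D(0,1))}$; the extra factor $N$ is harmless and is kept only because that is the shape in which the estimate is applied later. An even quicker variant: factor $u(z)=z^N v(z)$ with $v$ holomorphic in a neighbourhood of $\overline{D(0,1)}$ (possible precisely because $u^{(j)}(0)=0$ for $j<N$), observe that $\|v\|_{L^\infty(D(0,1))}=\sup_{|z|=1}|v(z)|=\sup_{|z|=1}|u(z)|\le\|u\|_{L^\infty(D(0,1))}$ by the maximum principle, and conclude $|u(z)|=|z|^N|v(z)|\le\rho^N\|u\|_{L^\infty(D(0,1))}$ for $|z|\le\rho$.

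There is essentially no obstacle here; the only point that genuinely uses the hypothesis is the limit $r\uparrow 1$ in the Cauchy estimate (equivalently, the application of the maximum principle up to the boundary circle in the second variant), which is exactly what holomorphy in a neighbourhood of $\overline{D(0,1)}$ guarantees. I expect the write-up to be a few lines along the lines of the first argument above.
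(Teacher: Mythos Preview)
Your proof is correct and in fact yields a sharper bound than the one stated: the power-series argument gives $(1-\rho)^{-1}\rho^N$, and the factorization $u(z)=z^Nv(z)$ together with the maximum principle gives simply $\rho^N$, both without the extra factor $N$.

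The paper takes a different route: it writes Taylor's formula with integral remainder,
\[
u(x)=\int_0^1 \frac{(1-t)^{N-1}}{(N-1)!}\Bigl(\tfrac{d}{dt}\Bigr)^{N} u(tx)\,dt,
\]
and bounds the $N$-th derivative of $t\mapsto u(tx)$ by Cauchy estimates applied to $\zeta\mapsto u(\zeta x)$ on the disc $D(0,1/\rho)$; the resulting integral $N\int_0^1 (1-t)^{N-1}(1/\rho - t)^{-N}\,dt$ produces the factor $N(1-\rho)^{-1}\rho^N$. Your approach exploits the holomorphic structure more fully (the Schwarz-lemma factorization in particular is the cleanest possible argument here), whereas the paper's argument is more ``real-variable'' in spirit and would generalize more readily to several complex variables or to situations without a global power series. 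For the purpose at hand, your argument is both shorter and stronger.
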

\begin{proof}
We have by Taylor's formula,
$$
u(x)  = \int_0^1 \frac{(1-t)^{N-1}}{(N-1)!} \left(\frac{d}{dt}\right)^N u(tx)\,dt.
$$
To estimate $ u ( x ) $ for $ |x| \leq \rho $, we may assume that $\abs{x}=\rho $. Cauchy's estimates
applied to the holomorphic function
$
D\left(0,1/\rho \right) \ni \zeta \mapsto u(\zeta x)
$
give
$$
\abs{\left(\frac{d}{dt}\right)^N u (tx)}  \leq \frac{N!}{(1/\rho - t)^N}\| u \|_{L^{\infty}(D(0,1))}.
$$
Since
$$
N \int_0^1 \frac{(1-t)^{N-1}}{(1/\rho - t)^N}\,dt
= N \rho^{N} \int_0^1 \frac{(1 - t)^{N-1}} { (1 - \rho t )^N} dt
\leq \frac{N}{1 - \rho}\rho^{N},
$$
we obtain \eqref{eq4.2}.
\end{proof}

The next lemma provides estimates on elements of $ H_{\Phi_0} ( V ) $ orthogonal to
$ x^j$'s:
\begin{lemm}
\label{a_priori}
Put $V = D(0,r)$, $r>0$, and define $  N ( h ) $ as the least integer such that
\begin{equation}
\label{eq4.2.5}
N(h) \geq (\tfrac 1 8 r^2 + 2\delta ) h^{-1}, \quad \delta > 0.
\end{equation}
Then for $u\in H_{\Phi_0}(V)$ and $ \varphi_j $ in \eqref{eq4.1.8} we have
\begin{equation}
\label{eq4.2.6}
(u,\varphi_j)_{H_{\Phi_0}(V)} = 0, \ \ j < N(h) \ \Longrightarrow \
\| u \|_{H_{\Phi_0}(D(0,r/16))} \leq C e^{-\delta/h} \| u \|_{H_{\Phi_0}(D(0,r))}.
\end{equation}
\end{lemm}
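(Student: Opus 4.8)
The plan is to expand $u \in H_{\Phi_0}(V)$ in the orthonormal basis $\{\varphi_j\}$ of the global Bargmann space $H_{\Phi_0}(\mathbb C)$ and exploit the fact that, on the \emph{ball} $D(0,r)$, the functions $\varphi_j$ with large $j$ are overwhelmingly concentrated near the boundary $|x| = r$. Concretely, write $u = \sum_{j\ge 0} c_j \varphi_j$, holomorphic in $V = D(0,r)$; the hypothesis $(u,\varphi_j)_{H_{\Phi_0}(V)} = 0$ for $j < N(h)$ is \emph{not} quite the vanishing of $c_j$ (the inner product is over $V$, not $\mathbb C$), so the first step is to relate the two. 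Since the monomials are still orthogonal in $H_{\Phi_0}(V)$ (radial weight, radial domain), $(u,\varphi_j)_{H_{\Phi_0}(V)} = c_j \|\varphi_j\|^2_{H_{\Phi_0}(V)}$, and $\|\varphi_j\|^2_{H_{\Phi_0}(V)} = \|\varphi_j\|^2_{H_{\Phi_0}(\mathbb C)} - \|\varphi_j\|^2_{H_{\Phi_0}(\mathbb C\setminus V)} = 1 - (\text{something} > 0)$, which is positive for all $j$; hence the orthogonality hypothesis does force $c_j = 0$ for $j < N(h)$. So $u = \sum_{j \ge N(h)} c_j \varphi_j$ with $\sum_{j\ge N(h)} |c_j|^2 \|\varphi_j\|^2_{H_{\Phi_0}(V)} = \|u\|^2_{H_{\Phi_0}(V)}$.

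Next I would estimate $\|u\|_{H_{\Phi_0}(D(0,r/16))}$ termwise. Using the explicit formula $\varphi_j(x) = (\pi j!)^{-1/2} h^{-(j+1)/2} x^j$, a direct computation in polar coordinates gives, for a ball $D(0,s)$,
\[
\|\varphi_j\|^2_{H_{\Phi_0}(D(0,s))} = \frac{1}{j!}\int_0^{s^2/h} t^j e^{-t}\,dt \cdot \frac{1}{\Gamma(j+1)}\cdot \Gamma(j+1) \cdot \text{(normalization)} = \frac{\gamma(j+1, s^2/h)}{\Gamma(j+1)},
\]
the normalized lower incomplete Gamma function evaluated at $s^2/h$. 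The key quantitative point is a ratio bound: for $j \ge N(h)$, the mass of $\varphi_j$ on the small ball $D(0,r/16)$ relative to its mass on $D(0,r)$ is
\[
\frac{\|\varphi_j\|^2_{H_{\Phi_0}(D(0,r/16))}}{\|\varphi_j\|^2_{H_{\Phi_0}(D(0,r))}} = \frac{\gamma(j+1,(r/16)^2/h)}{\gamma(j+1,r^2/h)} \le C e^{-2\delta/h},
\]
which one proves by a Laplace/Chernoff estimate: since $j+1 \ge N(h) \ge (\tfrac18 r^2 + 2\delta)h^{-1} > (r/16)^2 h^{-1}$, the parameter $(r/16)^2/h$ lies well below the "mean" $j+1$ of the Gamma density $t^j e^{-t}/\Gamma(j+1)$, and $\gamma(j+1, a)/\Gamma(j+1) = \mathbb P(\mathrm{Gamma}(j+1) \le a) \le e^{-a}a^{j+1}/((j+1)!) e \cdot(\cdots)$, a standard tail bound; comparing exponents $a = (r/16)^2/h$ against $j+1$ produces the gain $e^{-2\delta/h}$ (the precise arithmetic is exactly the role of the constant $\tfrac18 r^2 + 2\delta$ in \eqref{eq4.2.5}). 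The denominator $\gamma(j+1, r^2/h)/\Gamma(j+1)$ is bounded below — it is between $0$ and $1$, but one must also check it does not decay too fast; since $r^2/h \ge 8(j+1)/\text{const}$... actually more carefully, one only needs the numerator-over-denominator ratio, and the worst case is handled uniformly.

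Alternatively — and this is probably the cleaner route given Lemma \ref{Taylor} was just proved — I would \emph{use Lemma \ref{Taylor} directly}: $u$ vanishes to order $N = N(h)$ at $0$, so on scaled discs $\|u\|_{L^\infty(D(0,\rho r))} \le N(1-\rho)^{-1}\rho^N \|u\|_{L^\infty(D(0,r))}$. Taking $\rho = 1/16$ gives $L^\infty$ decay $\rho^N = 16^{-N(h)}$, i.e. $e^{-N(h)\log 16/1}$; one converts $L^\infty$ on $D(0,r)$ to the $H_{\Phi_0}$-norm via the submean-value property (losing a factor that is only polynomial in $h$), and similarly converts the $H_{\Phi_0}(D(0,r/16))$-norm of $u$ up to an $L^\infty(D(0,r/16))$ bound times the Gaussian weight factor. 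The weight $e^{-|x|^2/h}$ on $D(0,r/16)$ is harmless (bounded by $1$), while on the outer region one picks up $\sup_{|x|\le r} e^{-|x|^2/h} \le 1$ but the lower bound on $\|u\|_{H_{\Phi_0}(D(0,r))}$ in terms of $\|u\|_{L^\infty}$ must be traded carefully — here the point is that, conversely, $\|u\|_{H_{\Phi_0}(D(0,r))}$ can be as small as a Gaussian factor times $L^\infty$, so one should instead feed the $L^\infty$-to-$L^\infty$ bound into an interior estimate only on the small disc and relate \emph{that} small-disc $H_{\Phi_0}$ norm directly. Getting the bookkeeping of these weighted-versus-sup norms to land on the clean constant $\delta$ with threshold $\tfrac18 r^2 + 2\delta$ is the main obstacle; it is precisely why the Gamma-function computation above (which tracks the weight exactly) is the more robust path, and I would carry that one out, using \eqref{eq4.2.5} to verify the exponent arithmetic $(r/16)^2/h$ versus $N(h)$ yields a net $e^{-\delta/h}$ (with a harmless polynomial-in-$1/h$ or constant prefactor absorbed into $C$).
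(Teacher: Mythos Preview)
Your second route---applying Lemma \ref{Taylor} directly---is exactly the paper's proof, and the obstacle you flagged (that $\|u\|_{H_{\Phi_0}(D(0,r))}$ can be Gaussian-small compared to $\|u\|_{L^\infty(D(0,r))}$) is resolved by one extra idea you stopped just short of: insert an \emph{intermediate} radius $r_2<r$ before reintroducing the weight. Concretely, with $r_0=r/16$, $r_1=er/16$, $r_2=r/2$, the chain is
\[
\|u\|^2_{H_{\Phi_0}(D(0,r_0))}\le \|u\|^2_{L^2(D(0,r_0))}\le Cr_0^2\|u\|^2_{L^\infty(D(0,r_0))}\le CN^2(r_0/r_1)^{2N}\|u\|^2_{L^\infty(D(0,r_1))},
\]
then the mean-value inequality gives $\|u\|_{L^\infty(D(0,r_1))}\le C(r_2-r_1)^{-1}\|u\|_{L^2(D(0,r_2))}$, and finally $\|u\|^2_{L^2(D(0,r_2))}\le e^{r_2^2/h}\|u\|^2_{H_{\Phi_0}(D(0,r_2))}\le e^{r^2/4h}\|u\|^2_{H_{\Phi_0}(V)}$. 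The weighted-norm loss is only $e^{r^2/4h}$, not $e^{r^2/h}$, and since $r_0/r_1=e^{-1}$ the gain $(r_0/r_1)^{2N}=e^{-2N}$ with $2N\ge (r^2/4+4\delta)/h$ beats it by $e^{-4\delta/h}$; the polynomial prefactor $N^2$ is absorbed into $C$.

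Your preferred incomplete-Gamma route is a genuine alternative and also works, but the phrase ``the worst case is handled uniformly'' hides the only real step: the ratio $\gamma(j{+}1,a)/\gamma(j{+}1,b)$ with $a=(r/16)^2/h$, $b=r^2/h$ must be bounded by $Ce^{-2\delta/h}$ uniformly over \emph{all} $j\ge N(h)$, including $j\gg b$ where both numerator and denominator are deep in the left tail. One clean way: since $a<j$, bound the numerator by $a^{j+1}e^{-a}$, and bound the denominator from below by $\int_{ea}^{(e+1)a}t^je^{-t}\,dt\ge a(ea)^je^{-ea}$ (valid since $(e{+}1)a\le j$); the ratio is then $\le e^{-j+(e-1)a}$, and $j\ge 32a+2\delta/h$ from \eqref{eq4.2.5} closes it. This buys you a proof with no $L^\infty$/$L^2$ conversions at all, at the price of a slightly fussier tail estimate.
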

\begin{proof}
The monomials
\begin{equation}
\label{eq4.1.17}
f_j = {\varphi_j}/{\|{\varphi_j}\|_{H_{\Phi_0}(V)}},\quad j = 0,1,2\ldots,
\end{equation}
form an orthonormal basis in $H_{\Phi_0}(V)$, and we have for $u\in H_{\Phi_0}(V)$ (recall that
$ V $ is a disc),
\begin{equation}
\label{eq4.1.18}
(u,f_j)_{H_{\Phi_0}(V)} = \frac{u^{(j)}(0)}{j!}\, (x^j, f_j)_{H_{\Phi_0}(V)}, \quad j = 0,1,2\ldots.
\end{equation}
If  $ (u,f_j)_{H_{\Phi_0}(V)}  = 0 $, $ j < N $, and $ 0 < r_0 < r_1 < r_2 < r $,  then Lemma \ref{Taylor}
(applied with $ \rho = r_0/r_1 $ after a natural rescaling) and (\ref{eq4.1.18}) give,
\begin{equation}
\label{eq4.2.1}
\begin{split}
\| u \|^2_{H_{\Phi_0}(D(0,r_0))} & \leq \| u \|^2_{L^2(D(0,r_0))} \leq C r_0^2 \| u \|_{L^{\infty}(D(0,r_0))}^2 \\
& \leq  C r_1^2   N^2 ( 1 - r_0/r_1 )^{-2} (r_0/r_1)^{2N}  \| u \|_{L^{\infty}(D(0,r_1))}^2 \\
& \leq C N^2 r_1^2
( r_2 - r_1)^{-2}  ( 1 - r_0/r_1 )^{-2} (r_0/r_1)^{2N}   \| u \|_{L^{2}(D(0,r_2))}^2 \\
& \leq C N^2   r_1^2
( r_2 - r_1)^{-2}  ( 1 - r_0/r_1 )^{-2} (r_0/r_1)^{2N}  e^{r_2^2/h} \| u \|^2_{H_{\Phi_0}(D(0,r_2))}.
\end{split}
\end{equation}
Taking $r_0 = r/16$, $r_1 = er/16$, $r_2 = r/2$ gives
\begin{equation}
\label{eq4.2.2}
\| u \|^2_{H_{\Phi_0}(D(0,r/16))} \leq CN^2  e^{-2 N} e^{r^2/4 h} \| u \|^2_{H_{\Phi_0}(V)}.
\end{equation}
Choosing $ N $ satisfying \eqref{eq4.2.5} gives \eqref{eq4.2.6}.
\end{proof}

The final preparatory lemma concerns exponential decay of $ \varphi_j$'s away from $ 0 $:
\begin{lemm}
\label{decay_exp}
For $\rho > 0$ and $ \varphi_j $ given in \eqref{eq4.1.8},
\begin{equation}
\label{eq4.3}
j +1 \leq \frac{ \rho^2}{ 4 h} \ \Longrightarrow \
\int_{\abs{x} > \rho} \abs{\varphi_j(x)}^2 e^{-2\Phi_0(x)/h}\, L(dx) \leq e^{-\rho^2/4h}.
\end{equation}
\end{lemm}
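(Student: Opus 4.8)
The plan is to reduce the two-dimensional integral to a one-dimensional tail integral and estimate it by elementary means. Writing $x = re^{i\theta}$, recalling that $\Phi_0(x) = \tfrac12|x|^2$ and that $\varphi_j$ has the explicit form \eqref{eq4.1.8}, one has $|\varphi_j(x)|^2 e^{-2\Phi_0(x)/h} = (\pi j!)^{-1} h^{-(j+1)} |x|^{2j} e^{-|x|^2/h}$, so after passing to polar coordinates and substituting $s = r^2/h$,
\begin{equation*}
\int_{|x|>\rho} |\varphi_j(x)|^2 e^{-2\Phi_0(x)/h}\, L(dx) = \frac{2}{j!\,h^{j+1}} \int_\rho^\infty r^{2j+1} e^{-r^2/h}\, dr = \frac{1}{j!} \int_t^\infty s^j e^{-s}\, ds, \qquad t := \rho^2/h .
\end{equation*}
The hypothesis $j+1 \le \rho^2/4h$ becomes simply $t \ge 4(j+1)$. (This is the tail of an incomplete-Gamma / Poisson distribution, but we will not need that interpretation.)

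Next I would use $t \ge 4(j+1) > 2j$ to bound the integrand. The function $s \mapsto s^j e^{-s/2}$ is decreasing on $[2j,\infty)$, hence on $[t,\infty)$, so $s^j e^{-s} \le t^j e^{-t/2}\, e^{-s/2}$ for $s \ge t$, which gives
\begin{equation*}
\frac{1}{j!}\int_t^\infty s^j e^{-s}\, ds \le \frac{t^j e^{-t/2}}{j!} \int_t^\infty e^{-s/2}\, ds = \frac{2\, t^j e^{-t}}{j!} .
\end{equation*}
It then remains to prove the elementary inequality $2\, t^j e^{-t} \le j!\, e^{-t/4}$, i.e. $2\, t^j e^{-3t/4} \le j!$, for $t \ge 4(j+1)$. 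Since $t \mapsto t^j e^{-3t/4}$ is decreasing on $[\tfrac43 j, \infty) \supset [4(j+1),\infty)$, it suffices to check this at $t = 4(j+1)$, i.e. to show $2\,(4(j+1))^j e^{-3(j+1)} \le j!$; invoking $j! \ge (j/e)^j$ and $(1 + 1/j)^j \le e$ reduces this to $2 e^{-2}(4/e^2)^j \le 1$, which holds for all $j \ge 1$ because $4/e^2 < 1$, while the case $j = 0$ is immediate since then one only needs $2 e^{-3t/4} \le 1$ and $t \ge 4$.

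The argument is wholly elementary and there is no serious obstacle; the only point requiring a little care is the choice of the intermediate exponent in the splitting. Bounding $e^{-s} \le e^{-t}$ and integrating a bare polynomial would discard the Gaussian tail and be too weak for small $j$, while bounding the whole integrand by its value at $s = t$ without keeping an integrable factor also fails; peeling off precisely $e^{-s/2}$ leaves exactly enough room to absorb the prefactor $t^j/j!$ against $e^{-t/2}$ and still deliver the clean exponent $t/4 = \rho^2/4h$ in the final bound.
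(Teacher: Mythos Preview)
Your proof is correct. The reduction to the one-dimensional tail integral is identical to the paper's, and your subsequent estimate is valid.

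The paper's proof takes a slightly cleaner path after the same reduction: instead of using the monotonicity of $s^j e^{-s/2}$ and then a Stirling-type bound $j!\geq (j/e)^j$, it simply writes $e^{-y/h}\leq e^{-\rho^2/2h}e^{-y/2h}$ on $[\rho^2,\infty)$, extends the remaining integral to $[0,\infty)$, and evaluates it exactly as a Gamma integral to obtain $e^{-\rho^2/2h}\,2^{j+1}$. The hypothesis $j+1\leq\rho^2/4h$ then gives $2^{j+1}\leq e^{\rho^2/4h}$ directly, without any appeal to Stirling. Your route is a bit longer but equally elementary; the paper's route buys you a two-line finish at the cost of a less sharp intermediate bound (extending the integration domain), while yours keeps the integration domain but needs the extra inequality $2t^j e^{-3t/4}\leq j!$.
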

\begin{proof}
The integral in (\ref{eq4.3}) is equal to
\begin{multline}
\label{eq4.4}
\frac{1}{\pi j!} \frac{1}{h^{1+j}}\int_{\abs{x} > \rho} \abs{x}^{2j} e^{-\abs{x}^2/h}\, L(dx) = \frac{1}{j!}\frac{1}{h^{1+j}}
\int_{\rho}^{\infty} 2r^{2j +1} e^{-r^2/h}\, dr \\
= \frac{1}{j!}\frac{1}{h^{1+j}} \int_{\rho^2}^{\infty} y^{j} e^{-y/h}\, dy \leq e^{-\rho^2/2h} \frac{1}{j!}\frac{1}{h^{1+j}}
\int_{0}^{\infty} y^{j} e^{-y/2h}\, dy = e^{-\rho^2/2h}\, 2^{j+1},
\end{multline}
which gives the estimate in (\ref{eq4.3}).
\end{proof}

We now move to
\begin{proof}[Proof of Proposition \ref{p:warm_up}]
We consider
\begin{equation}
\label{eq4.1.10}
(P^{\rm{w}}(x,hD_x;h) -z)u = v, \ \ \ u\in H_{\widehat{\Phi},m}(\mathbb C), \ v\in H_{\widehat{\Phi}}(\mathbb C), \ \ \ z\in {\rm neigh}_{\mathbb C}(0) .
\end{equation}
Restricting the attention to the neighbourhood $U$ in (\ref{eq4.1.3}), we rewrite this using  (\ref{eq4.1.4}) as
\begin{equation}
\label{eq4.1.11}
(P-z)u = v + w \quad \wrtext{in}\,\,U, \ \ \| w \|_{H_{\Phi}(U)} \leq C e^{-\delta/h} \| u \|_{H_{\widehat{\Phi},m}(\mathbb C)}.
\end{equation}
Following Proposition \ref{p:norm} we apply the operator $B$ to (\ref{eq4.1.11}) to get
\begin{equation}
\label{eq4.1.13}
(Q-z)\widetilde{u} = \widetilde{v} \quad \wrtext{in}\,\,V, \ \ \ \widetilde{u} = Bu \in H_{\Phi_0}(V),
\end{equation}
where
\begin{equation}
\label{eq4.1.14}
\widetilde{v} = B(v+w) - BP(1-AB)u
\end{equation}
satisfies
\begin{equation}
\label{eq4.1.15}
\|\widetilde{v}\|_{H_{\Phi_0}(V)} \leq C \|{v}\|_{H_{\widehat \Phi}(\mathbb C)} + C  e^{-\delta/h} \| u \|_{H_{\widehat{\Phi},m}(\mathbb C)} + C \| u \|_{H_{\Phi}(U\setminus \widetilde{U})},
\end{equation}
with $\widetilde{U} \Subset U$ being a small open neighbourhood of $0\in \mathbb C$. Here we recall that
\[ V = D ( 0 , r ) , \ \ \ 0 < r \ll 1. \]

In the notation of Lemma \ref{a_priori}, we define
\begin{equation}
\label{eq4.2.7}
\Pi_{N(h)} : H_{\Phi_0}(V) \rightarrow H_{\Phi_0}(V), \ \ \
\Pi_{N(h)} u :=  \sum_{j=0}^{N(h) -1} {u^{(j)}(0)} x^j/j!,
\end{equation}
which in view of \eqref{eq4.1.18} is
the orthogonal projection onto the span of $f_j$'s, $ 0 \leq j < N ( h ) $ in \eqref{eq4.1.17}. Lemma \ref{a_priori} then gives
\begin{equation}
\label{eq4.2.9}
1 - \Pi_{N(h)} = \mathcal O(1)\, e^{-\delta/h}: H_{\Phi_0}(V) \rightarrow H_{\Phi_0}(W), \ \
 \ W := D ( 0 , r/16) .
\end{equation}

Lemma \ref{decay_exp} and \eqref{eq4.2.5} show that
$\| {\varphi_j}\|_{H_{\Phi_0}(V)} = 1 + C e^{-r^2/4h}$, $ j < N(h)$, uniformly with respect to $j$, provided that $\delta>0$ in (\ref{eq4.2.5}) satisfies $0 < \delta < r^2/32$. Hence, in the notation of  (\ref{eq4.1.17}),
\begin{equation}
\label{eq4.4.2}
f_j = \varphi_j + \mathcal O_{ H_{\Phi_0} (V) }(1) \, e^{-r^2/4h}, \quad j < N(h).
\end{equation}
Another application of Lemma \ref{decay_exp} (with $ \rho= \alpha r $) together with (\ref{eq4.4.2}) gives that
\begin{equation}
\label{eq4.4.4}
\|f_j\|_{H_{\Phi_0}(\alpha r < \abs{x} < r)} = \mathcal O(1)\, e^{-\eta/h}, \quad \alpha = \left(\tfrac{2}{3}\right)^{\frac12}, \ \ \eta > 0,
\end{equation}
uniformly with respect to $j < N(h)$, provided that $\delta>0$ in (\ref{eq4.2.5}) satisfies $0 < \delta < r^2/48$. In what follows, we shall assume that $\delta >0$ is chosen small enough so that (\ref{eq4.4.2}), (\ref{eq4.4.4}) hold.

Applying the orthogonal projection (\ref{eq4.2.7}) to (\ref{eq4.1.13}), we get
\begin{equation}
\label{eq4.6}
\Pi_{N(h)} (Q-z) \Pi_{N(h)} \widetilde{u} = \Pi_{N(h)} \widetilde{v} - \Pi_{N(h)} (Q -z) (1-\Pi_{N(h)})\widetilde{u}.
\end{equation}
We first estimate the norm of the second term on the right hand side of \eqref{eq4.6}.
To this end, let $\widetilde{V} =D ( 0, r' ) $ with
$ r/16 < r' < r $ sufficiently close to $ r/16 $ so that
we still have
\begin{equation}
\label{eq4.8}
1 - \Pi_{N(h)} = \mathcal O(1)\, e^{-\delta/h}: H_{\Phi_0}(V) \rightarrow H_{\Phi_0}(\widetilde{V}).
\end{equation}
Then, (see Proposition \ref{p:norm} for the mapping properties of $ Q  - z $)
\begin{equation}
\label{eq4.9}
\begin{split}
& \|\Pi_{N(h)} (Q -z) (1-\Pi_{N(h)})\widetilde{u}\|_{H_{\Phi_0}(V)} \leq C \|(1-\Pi_{N(h)})\widetilde{u}\|_{H_{\Phi_0}(V)} \\
&  \ \ \ \ \ \leq C \| \indic_{\widetilde{V}} (1-\Pi_{N(h)})\widetilde{u}\|_{L^2_{\Phi_0}(V)}
+ C \| (1-\indic_{\widetilde{V}})(1-\Pi_{N(h)})\widetilde{u}\|_{L^2_{\Phi_0}(V)} \\
&  \ \ \ \ \  \leq C \|\widetilde{u}\|_{H_{\Phi_0}(V\backslash \widetilde{V})} + Ce^{-\delta/h} \|\widetilde{u}\|_{H_{\Phi_0}(V)} .
\end{split}
\end{equation}
It follows next from (\ref{eq4.1.6}) that the matrix $\mathcal D_{N(h)}$ of the operator $\Pi_{N(h)}(Q-z)\Pi_{N(h)}$ with respect to the orthonormal basis $f_j$, $0 \leq j \leq N(h)-1$, is of the form ($ \delta > 0 $)
\begin{equation}
\label{eq4.10}
\left( G((2j+1)h;h)-z)\delta_{j,k}\right)_{0\leq j,k\leq N(h)-1} + \mathcal O(e^{-\delta/h})
 = \mathcal O(1): \mathbb C^{N(h)} \rightarrow \mathbb C^{N(h)}.
\end{equation}
Here we equip $\mathbb C^{N(h)}$ with the Euclidean norm.

For $ z $ satisfying \eqref{eq:condz} we also have
\begin{equation}
\label{eq4.12}
\mathcal D_{N(h)}^{-1} = \mathcal O(h^{-1}): \mathbb C^{N(h)} \rightarrow \mathbb C^{N(h)}.
\end{equation}
Hence, for $z$ satisfying \eqref{eq:condz},  (\ref{eq4.6}), (\ref{eq4.9}), and (\ref{eq4.12}) give
\begin{equation*}
\begin{split}
\|\Pi_{N(h)} \widetilde{u}\|_{H_{\Phi_0}(V)} & \leq C h^{-1} \left(\|\Pi_{N(h)} \widetilde{v}\|_{H_{\Phi_0}(V)} + \|\Pi_{N(h)} (Q -z) (1-\Pi_{N(h)})\widetilde{u}\|_{H_{\Phi_0}(V)}\right) \\
& \leq  C h^{-1} \left(\|\widetilde{v}\|_{H_{\Phi_0}(V)} + \|\widetilde{u}\|_{H_{\Phi_0}(V\backslash \widetilde{V})}\right) + C e^{-\delta/h} \|\widetilde{u}\|_{H_{\Phi_0}(V)}, \quad \delta > 0.
\end{split}
\end{equation*}
Using this and (\ref{eq4.2.9}) we obtain that
\begin{equation}
\label{eq4.15}
\begin{split}
\|\widetilde{u}\|_{H_{\Phi_0}(W)} & \leq  \|\Pi_{N(h)} \widetilde{u}\|_{H_{\Phi_0}(W)} + \|(1-\Pi_{N(h)})\widetilde{u}\|_{H_{\Phi_0}(W)} \\
& \leq C h^{-1} \left(\|\widetilde{v}\|_{H_{\Phi_0}(V)} + \|\widetilde{u}\|_{H_{\Phi_0}(V\backslash \widetilde{V})}\right) + C e^{-\delta/h} \|\widetilde{u}\|_{H_{\Phi_0}(V)}.
\end{split}
\end{equation}
Recalling (\ref{eq4.1.15}) and the fact that $\widetilde{u} = Bu$, we get
\begin{equation}
\label{eq4.16}
\begin{split}
\|Bu\|_{H_{\Phi_0}(W)}
& \leq C h^{-1}\left(\|v\|_{H_{\widehat \Phi}(\mathbb C)} + \| u \|_{H_{\Phi}(U\backslash \widetilde{U})} + \|Bu\|_{H_{\Phi_0}(V\backslash \widetilde{V})}\right) + C e^{-\delta/h} \| u \|_{H_{\widehat{\Phi},m}(\mathbb C)} \\
& \leq C h^{-1}\left(\|v\|_{H_{\widehat \Phi}(\mathbb C)} + \| u \|_{H_{\Phi}(U\backslash \widehat{U})}\right) + C\, e^{-\delta/h} \| u \|_{H_{\widehat{\Phi},m}(\mathbb C)}.
\end{split}
\end{equation}
Here $\widehat{U} \Subset \widetilde{U} \Subset U$ and we have used the fact that
\begin{equation}
\label{eq4.17}
\|Bu\|_{H_{\Phi_0}(V\setminus \widetilde{V})} \leq C \| u \|_{H_{\Phi}(U\setminus \widehat{U})} +
C e^{-\delta/h} \| u \|_{H_{\Phi}(U)}, \quad \delta > 0,
\end{equation}
for $\widehat{U} \Subset \widetilde{\kappa}(\widetilde{V}) \Subset U$. Writing $u = ABu + (1-AB)u$ we obtain from (\ref{eq4.16}) for $U_1 \Subset \widetilde{\kappa}(W) \Subset U$,
\begin{equation}
\label{eq4.18}
\| u \|_{H_{\Phi}(U_1)} \leq C h^{-1}\left(\|v\|_{H_{\widehat \Phi}(\mathbb C)} + \| u \|_{H_{\Phi}(U\backslash \widehat{U})}\right) + C e^{-\delta/h} \| u \|_{H_{\widehat{\Phi},m}(\mathbb C)}.
\end{equation}
Combining (\ref{eq4.18}) with (\ref{eq4.1.2}) applied for $\varepsilon > 0$ sufficiently small but fixed, we obtain that for $z$ satisfying
\eqref{eq:condz}, equation (\ref{eq4.1.10}) implies that
\begin{equation}
\label{eq4.19}
\| u \|_{H_{\widehat{\Phi},m}(\mathbb C)} \leq C h^{-1}\|v\|_{H_{\widehat{\Phi}}(\mathbb C)} + C e^{-\delta/h} \| u \|_{H_{\widehat{\Phi},m}(\mathbb C)}, \quad \delta > 0.
\end{equation}
The operator \eqref{eq:4.20} is therefore injective, for all $h>0$ small enough, and hence bijective, as it is Fredholm of index $0$. This completes the proof.
\end{proof}

The main part of Theorem \ref{t:gen} follows from the next proposition about the spectrum of $ P^{\rm{w}}(x,hD_x;h)$: it follows from (\ref{eq4.1.1.11}) that $\widehat{\Phi} - \Phi_1$ is compactly supported so that the spectrum of $ P^{\rm{w}} $ on $ H_{\widehat{\Phi}}(\mathbb C) $ is the same as the spectrum on $ H_{\Phi_1 }(\mathbb C)$ (the two spaces are the same but the norms are different). We also recall that
the operators $ H^{\rm{w}} $ on $ L^2 ( \mathbb R ) $ and $ P^{\rm{w}} $ on $ H_{\Phi_1 }(\mathbb C)$ are unitarily equivalent using an FBI transform adapted to $ \Phi_1 $, see (\ref{eq4.1.0.2}), (\ref{eq4.1.0.3}).

\begin{prop}
\label{p:new}
Suppose that
\begin{equation}
\label{eq4.21}
\abs{z - G((2j+1)h;h)} <{h}/C, \ \ \
hj \leq \tfrac14{\rho^2}, \ \ 0 < \rho \ll 1.
\end{equation}
Then we have
\begin{equation}
\label{eq4.23}
z \in \Spec_{ H_{\widehat{\Phi}}}( P^{\rm{w}} ( x, h D, h ) )
\ \Longleftrightarrow \ z = G((2j+1)h;h) + \mathcal O(e^{-\delta/h}), \ \ \delta > 0,
\end{equation}
and the eigenvalue is simple.
\end{prop}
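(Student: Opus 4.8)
The plan is to set up a rank one Grushin problem adapted to the single quasi-eigenvalue $\lambda_j := G((2j+1)h;h)$. As in the proof of Proposition \ref{p:warm_up}, I would first transfer the spectral problem for $P^{\rm w}(x,hD_x;h)$ on $H_{\widehat\Phi}(\mathbb C)$ to the model operator $Q = G(2ixhD_x+h;h) + R$ on $H_{\Phi_0}(V)$, $V = D(0,r)$, using the ellipticity estimate \eqref{eq4.1.2} near infinity, the local realization $P_U$ from \eqref{eq4.1.3}--\eqref{eq4.1.4}, and the Fourier integral operators $A$, $B$ of Proposition \ref{p:norm}; here $R = \mathcal O(e^{-\delta/h})$ in the sense of \eqref{eq4.1.6.5}, and $(2ixhD_x+h)\varphi_k = (2k+1)h\,\varphi_k$, so that $G(2ixhD_x+h;h)\varphi_k = \lambda_k\varphi_k$ for the monomials $\varphi_k$ of \eqref{eq4.1.8}. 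Since $hj \le \tfrac14\rho^2$, Lemma \ref{decay_exp} shows that $\varphi_j$ is concentrated in $D(0,\rho)$ up to an $L^2_{\Phi_0}$-error of size $e^{-\rho^2/4h}$; transporting $\varphi_j$ to the global space by $A$ followed by a cutoff supported near $0$ therefore yields a function $e_j \in H_{\widehat\Phi}(\mathbb C)$, together with an exponentially small discrepancy, and I would set $R_+u := (u,e_j)_{H_{\widehat\Phi}}$, $R_-u_- := u_-e_j$, and $\mathcal P(z) := \begin{pmatrix} P^{\rm w}(x,hD_x;h) - z & R_- \\ R_+ & 0\end{pmatrix}$ acting $H_{\widehat\Phi,m}(\mathbb C)\oplus\mathbb C \to H_{\widehat\Phi}(\mathbb C)\oplus\mathbb C$.

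The first main step is to show that $\mathcal P(z)$ is bijective for $z$ as in \eqref{eq4.21}, with inverse $\mathcal E(z) = \begin{pmatrix} E & E_+ \\ E_- & E_{-+}\end{pmatrix}$ satisfying $E = \mathcal O(h^{-1})$, $E_\pm = \mathcal O(1)$ and $E_{-+} = \mathcal O(h)$. This amounts to re-running the estimates in the proof of Proposition \ref{p:warm_up}, the only difference being that $z$ is now allowed to lie within $h/C$ of $\lambda_j$: after applying $B$ and projecting with $\Pi_{N(h)}$ as in \eqref{eq4.6}, the relevant finite matrix is, by \eqref{eq4.10}, the diagonal matrix with entries $\lambda_k - z$, $k < N(h)$, up to $\mathcal O(e^{-\delta/h})$; deleting the row and column with index $j$ leaves a matrix whose entries are bounded from below by $c_0h$. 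Here one uses that $\lambda_{k+1} - \lambda_k = 2hG_0'(0) + \mathcal O(h^2)$ with $G_0'(0)\neq 0$ (Proposition \ref{p:norm}) and chooses $C$ in \eqref{eq4.21} large, so the reduced matrix is invertible with inverse of norm $\mathcal O(h^{-1})$. Combining this with \eqref{eq4.2.9}, the decay estimate \eqref{eq4.4.4}, the mapping property \eqref{eq4.1.6.5} of $R$, and the elliptic estimate \eqref{eq4.1.2} --- exactly as in the proof of Proposition \ref{p:warm_up} --- gives the bijectivity of $\mathcal P(z)$ and the stated bounds, together with holomorphy of $\mathcal E(z)$ in $z$, since $\mathcal P(z)$ is affine in $z$.

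It then remains to identify the scalar effective Hamiltonian $E_{-+}(z)$. By the usual Schur complement formula for Grushin problems, $z \in \Spec_{H_{\widehat\Phi}}(P^{\rm w}(x,hD_x;h))$ if and only if $E_{-+}(z) = 0$, and the order of vanishing of the holomorphic function $z \mapsto E_{-+}(z)$ equals the algebraic multiplicity of $z$. Since on the model side $(G(2ixhD_x+h;h) - z)\varphi_j = (\lambda_j - z)\varphi_j$ is a multiple of $\varphi_j$, while $R = \mathcal O(e^{-\delta/h})$ and all the cutoff and gluing errors incurred in passing between $H_{\widehat\Phi}(\mathbb C)$, $H_\Phi(U)$ and $H_{\Phi_0}(V)$ are $\mathcal O(e^{-\delta/h})$ thanks to the concentration of $\varphi_j$, a direct computation gives $E_{-+}(z) = c(h)(z - \lambda_j) + \mathcal O(e^{-\delta/h})$ with $c(h) = 1 + \mathcal O(e^{-\delta/h})$. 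Because $\partial_z E_{-+}(z) = c(h) + \mathcal O(e^{-\delta/h}) \neq 0$, Rouch\'e's theorem (equivalently, the holomorphic inverse function theorem) produces a unique zero $z = \lambda_j + \mathcal O(e^{-\delta/h})$ of $E_{-+}$ in the disc of \eqref{eq4.21}, of order $1$; this yields both the equivalence \eqref{eq4.23} and the simplicity of the eigenvalue.

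The step I expect to be the main obstacle is the bijectivity of the Grushin problem with the correct $\mathcal O(h^{-1})$ / $\mathcal O(1)$ bounds: it requires carefully threading the localization estimates of the proof of Proposition \ref{p:warm_up} --- the three-way passage between the global weighted space, the local realization $P_U$, and the model $Q$, the control of $1 - \Pi_{N(h)}$ via Lemmas \ref{a_priori} and \ref{decay_exp}, and the verification that every cutoff, gluing and $R$-contribution is genuinely $\mathcal O(e^{-\delta/h})$ --- while now showing that, once the single mode $\varphi_j$ is removed, the residual finite matrix is not merely bounded from below but invertible with an $\mathcal O(h^{-1})$ bound. Once this is in place, the identification of $E_{-+}$ and the Rouch\'e argument are routine.
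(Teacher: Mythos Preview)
Your proposal is correct and follows essentially the same approach as the paper: a rank-one Grushin problem built from the mode $\varphi_j$ transported to the global space, well-posedness established by re-running the localization estimates of Proposition~\ref{p:warm_up} (now with the single row/column $j$ removed from the diagonal matrix \eqref{eq4.10}), and identification of $E_{-+}(z)$ as $z - \lambda_j$ modulo $\mathcal O(e^{-\delta/h})$. The only differences are cosmetic: the paper defines $R_+$ and $R_-$ asymmetrically (via the $H_{\Phi_0}(V)$-pairing with $\varphi_j$ after applying $B$, respectively via $\Pi_{\widehat\Phi_m}(\psi A\varphi_j)$), and it computes $E_{-+}$ by checking that $(\widetilde E_+,\widetilde E_{-+}) = (R_-,\, z-\lambda_j)$ is an approximate right inverse column rather than invoking Rouch\'e explicitly, but this is equivalent to your argument.
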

\begin{proof}
We will use the Schur complement formula and
a  globally well-posed Grushin problem for the operator {$P^{\rm{w}}(x,hD_x;h) -z: H_{\widehat{\Phi},m}(\mathbb C) \rightarrow H_{\widehat{\Phi}}(\mathbb C)$, see \cite[\S D.1]{z12} for a general introduction to this method and references.

It is convenient for us to assume, as we may, that the order function $m$ introduced in (\ref{eq2.1.1}) satisfies $m\in C^{\infty}(\mathbb C)$ with
\begin{equation}
\label{eq4.23.1}
\abs{\partial^{\alpha} m} \leq C_{\alpha}m,\quad \forall \alpha \in \mathbb N^2,
\end{equation}
see~\cite[Lemma 8.7]{z12}.

As before, let $V = D(0,r)$ and  $\displaystyle \rho \leq r/C_0$, where $C_0>0$ is sufficiently large. This
guarantees that $V$ is much larger than the region where $\varphi_j$ in (\ref{eq4.1.8}) is not exponentially small -- see Lemma \ref{decay_exp}.  Let $\chi \in C^{\infty}_0(\mathbb C; [0,1])$ satisfy
\[ \chi(x) = 1, \ \ \abs{x} \leq \rho, \ \ \ \ \chi(x) < 1, \ \   \abs{x} > \rho,  \]
and define
\begin{equation}
\label{eq4.24}
\widetilde{\Phi}_0(x) = \Phi_0(x) - \tfrac18{\rho^2}(1 - \chi(x)).
\end{equation}
We then have
\begin{equation}
\label{eq4.25}
\widetilde{\Phi}_0(x) = \Phi_0(x), \quad \abs{x} \leq \rho, \quad \widetilde{\Phi}_0(x) < \Phi_0(x), \quad \abs{x} > \rho.
\end{equation}
It follows from Lemma \ref{decay_exp} that for $ hj \leq \frac14{\rho^2}$,
\begin{equation}
\label{eq4.26}
\|\varphi_j\|_{H_{\widetilde{\Phi}_0}(V)}^2 \leq \int_{\mathbb C} \abs{\varphi_j(x)}^2 e^{-2\widetilde{\Phi}_0(x)/h}\, L(dx) \leq 2.
\end{equation}
Using (\ref{eq3.9.2}), (\ref{eq3.9.4}), and (\ref{eq4.26}) we see that
\begin{equation}
\label{eq4.27}
\| A \varphi_j\| _{H_{\widetilde{\Phi}}(U)} =  \mathcal O(1), \quad hj \leq \tfrac14{\rho^2},
\end{equation}
where $\widetilde{\Phi} \in C(U;\mathbb R)$ is a suitable weight satisfying
\begin{equation}
\label{eq4.28}
\widetilde{\Phi} \leq \Phi \quad \wrtext{in}\,\,U, \quad \widetilde{\Phi} < \Phi \quad \wrtext{in}\,\,U\backslash \overline{U_1}.
\end{equation}
Indeed, we can take, in the notation of (\ref{eq3.9.2}) and (\ref{eq3.9.4}),
$$
\widetilde{\Phi}(x) = \Phi(x) - \inf_{y\in V} \left(\frac{\abs{x-\widetilde{\kappa}(y)}^2}{2C} + \frac{\rho^2}{8} (1-\chi(y))\right).
$$
Here $U_1 \Subset U \Subset \mathbb C$ can be chosen arbitrarily small provided that $\rho > 0$ is small enough.
Let $ \Pi_{\widehat{\Phi}_m}: L^2_{\widehat{\Phi},m}(\mathbb C) \rightarrow H_{\widehat{\Phi},m}(\mathbb C)$ be the orthogonal
projection and let
\begin{equation}
\label{eq:props} \psi \in C^{\infty}_0(U;[0,1]), \ \ \ \psi (x) = 1,  \ \ x \in \neigh_{\mathbb C } (\overline{U_1}) .
\end{equation}
We define a continuous uniformly bounded map
\begin{equation}
\label{eq4.29}
R_- = \mathcal O(1): \mathbb C \ni u_- \longmapsto u_-\Pi_{\widehat{\Phi}_m}(\psi A \varphi_j) \in H_{\widehat{\Phi},m}(\mathbb C).
\end{equation}
Using \eqref{eq4.23.1} we get
\begin{equation}
\label{eq4.30.1}
\begin{gathered}
\widehat{\Phi}_m := \widehat{\Phi} - h\log m \in  C^{\infty}(\mathbb C), \ \ \ \partial^{\alpha} \widehat{\Phi}_m \in L^{\infty}(\mathbb C), \quad \abs{\alpha} = 2, \\
\Delta \widehat{\Phi}_m \geq \Delta \widehat{\Phi} - \mathcal O(h) \geq 1/{C} - \mathcal O(h) \geq {1}/{2C}, \ \
0 < h < h_0 ,
\end{gathered}
\end{equation}
and  $L^2_{\widehat{\Phi},m}(\mathbb C) = L^2_{\widehat{\Phi}_m}(\mathbb C)$, $H_{\widehat{\Phi},m}(\mathbb C) = H_{\widehat{\Phi}_m}(\mathbb C)$.

It is important for us to know that $R_- u_-$ is exponentially small away from the origin:
\begin{lemm}
\label{loc_R-}
There exists $\delta >0$ such that
\begin{equation}
\label{eq4.31}
\|R_- u_- - u_- \psi A\varphi_j\|_{L^2_{\widehat{\Phi},{m}}(\mathbb C)} \leq C e^{-\delta/h}\abs{u_-}, \ \
hj \leq \tfrac14{\rho^2}
\end{equation}
\end{lemm}
\begin{proof}
Definition (\ref{eq4.29}) gives
\begin{equation}
\label{eq4.32}
R_- u_- - u_- \psi A\varphi_j = -u_-(1-{\Pi_{\widehat{\Phi}_m})}(\psi A\varphi_j),
\end{equation}
where $ u := {(1- \Pi_{\widehat{\Phi}_m})(\psi A \varphi_j) \in L^2_{\widehat{\Phi}_m}(\mathbb C)}$ is the  minimal norm solution of
\[ \bar\partial u = \bar{\partial} (\psi A \varphi_j) =( \bar \partial \psi) A \varphi_j =  \mathcal O_{L^2_{\widehat{\Phi}_m}}( e^{-\delta/h}), \quad \delta >0,\]
where we use (\ref{eq4.27}), (\ref{eq4.28}), and \eqref{eq:props} to obtain the last estimate ($ \bar \partial \psi$ is supported in the region where we have uniform estimates with better weights).

An application of $L^2$-estimates for the $\overline{\partial}$ equation in the exponentially weighted space $L^2_{\widehat{\Phi},m}(\mathbb C) = L^2_{\widehat{\Phi}_m}(\mathbb C)$ (in the elementary one-dimensional case, see~\cite[\S 4.2]{Horm_Conv}) shows that $ \| u \|_{ L^2_{\widehat{\Phi}_m} } = \mathcal O ( e^{-\delta/h} ) $,
concluding the proof.
\end{proof}

Using $ R_-$ from \eqref{eq4.29} and defining
\begin{equation}
\label{eq4.35.1}
R_+: H_{\widehat{\Phi}}(\mathbb C) \ni u \mapsto (B(u|_U),\varphi_j)_{H_{\Phi_0}(V)} \in  \mathbb C,
\end{equation}
where $U\Subset \mathbb C$ is the open neighbourhood of $0$ given in (\ref{eq4.1.3}),
we pose the following global Grushin problem for $z \in \mathbb C$ satisfying (\ref{eq4.21}):
\begin{equation}
\label{eq4.36}
\mathcal P( z )  :=  \begin{pmatrix} P^{\rm{w}}  - z & R_- \\
R_+ & 0 \end{pmatrix} : H_{\widehat{\Phi},m}(\mathbb C)\times \mathbb C \to H_{\widehat{\Phi}}(\mathbb C) \times \mathbb C, \ \
\mathcal P ( z ) \begin{pmatrix} u \\
\ u_- \end{pmatrix} = \begin{pmatrix} v \\ \ v_+ \end{pmatrix}.
\end{equation}

Restricting $ u $ to $U$, using (\ref{eq4.1.4}), Lemma \ref{loc_R-}, (\ref{eq4.27}), (\ref{eq4.28}),
as well as the fact that $ \| (\psi-1)A\varphi_j\|_{ H_\Phi (U) } = \mathcal O(e^{-\delta/h})$ {in view of (\ref{eq:props})}, we get
\begin{equation}
\label{eq4.37}
\begin{gathered}
(P-z)u + u_- A\varphi_j = v + w \quad \wrtext{in}\,\,U, \quad R_+ u = v_+, \\
\| w \|_{H_{\Phi}(U)} \leq C e^{-\delta/h} (\| u \|_{H_{\widehat{\Phi},m}(\mathbb C)} + \abs{u_-}) ,  \ \ \delta > 0 .
\end{gathered}
\end{equation}
Applying the operator $B$ to the first equation in (\ref{eq4.37}) we get, similarly to (\ref{eq4.1.13}),
\begin{equation}
\label{eq4.39}
(Q-z) \widetilde{u} + u_- \varphi_j = \widetilde{v} \quad \wrtext{in}\,\,V, \quad (\widetilde{u},\varphi_j)_{H_{\Phi_0}(V)} = v_+.
\end{equation}
Here $\widetilde{u} = B(u|_U) \in H_{\Phi_0}(V)$.
With $ \widetilde{U} \Subset U$ close to $ U $, and similarly to (\ref{eq4.1.14}), (\ref{eq4.1.15}), we have
\begin{equation}
\label{eq4.40}
\begin{gathered}
\widetilde{v} = B(v+w) -BP(1-AB)u + u_-(1-BA)\varphi_j,\\
\|\widetilde{v}\|_{H_{\Phi_0}(V)} \leq C \|v\|_{H_{\widehat{\Phi}}(\mathbb C)} + C e^{-\delta/h} \left(\| u \|_{H_{\widehat{\Phi},m}(\mathbb C)} + \abs{u_-}\right) + C \| u \|_{H_{\Phi}(U\setminus \widetilde{U})}.
\end{gathered}
\end{equation}
Recalling (\ref{eq4.4.2}), we rewrite the microlocal Grushin problem (\ref{eq4.39}) as follows,
\begin{equation}
\label{eq4.42}
(Q-z) \widetilde{u} + u_- f_j = \widetilde{v} \quad \wrtext{in}\,\,V, \quad (\widetilde{u},f_j)_{H_{\Phi_0}(V)} = \widetilde{v}_+.
\end{equation}
Here $f_j \in H_{\Phi_0}(V)$ is given by (\ref{eq4.1.17}). The function $\widetilde{v}$ in (\ref{eq4.42}) is modified by an exponentially small quantity compared to the $\widetilde{v}$ in (\ref{eq4.40}) to take into account the difference between $ f_j $ and $ \varphi_j $, see \eqref{eq4.4.2}. It still satisfies the estimate in (\ref{eq4.40}) and we also have
\begin{equation}
\label{eq4.43}
\abs{\widetilde{v}_+} \leq \abs{v_+} + C e^{-\delta/h} \| u \|_{H_{\widehat{\Phi},m}(\mathbb C)}.
\end{equation}

The analysis of (\ref{eq4.42}) proceeds similarly to (\ref{eq4.1.13}), applying the orthogonal projection $\Pi_{N(h)}$ in (\ref{eq4.2.7}) to the first equation in (\ref{eq4.42}) and using that $\Pi_{N(h)} f_j = f_j$, for $hj \leq \rho^2/4$. We get
\begin{equation}
\label{eq4.44}
\Pi_{N(h)} (Q-z) \Pi_{N(h)} \widetilde{u} + u_- f_j = \Pi_{N(h)} \widetilde{v} - \Pi_{N(h)} (Q -z) (1-\Pi_{N(h)})\widetilde{u},
\end{equation}
\begin{equation}
\label{eq4.45}
(\Pi_{N(h)}\widetilde{u},f_j)_{H_{\Phi_0}(V)} = \widetilde{v}_+.
\end{equation}
Recalling that the matrix of the operator $\Pi_{N(h)} (Q-z) \Pi_{N(h)}$ acting on the range of $\Pi_{N(h)}$, with respect to the orthonormal basis $f_j$, $0\leq j \leq N(h) -1$, is given by (\ref{eq4.10}) and using (\ref{eq4.21}) we conclude that the problem (\ref{eq4.44}), (\ref{eq4.45}) enjoys the estimate
\begin{equation*}
\begin{split}
h \|\Pi_{N(h)} \widetilde{u}\|_{H_{\Phi_0}(V)} + \abs{u_-}
& \leq C \|\Pi_{N(h)} \widetilde{v} - \Pi_{N(h)} (Q -z) (1-\Pi_{N(h)})\widetilde{u}\|_{H_{\Phi_0}(V)} \\
& \ \ \ \ \ \ + C h \abs{\widetilde{v}_+}  + C\,e^{-\delta/h}\|\widetilde{u}\|_{H_{\Phi_0}(V)} \\
& \leq C \|\widetilde{v}\|_{H_{\Phi_0}(V)} + C \|\Pi_{N(h)} (Q -z) (1-\Pi_{N(h)})\widetilde{u}\|_{H_{\Phi_0}(V)}
\\ & \ \ \ \ \ \ + C h \abs{\widetilde{v}_+} + C\, e^{-\delta/h}\|\widetilde{u}\|_{H_{\Phi_0}(V)}, \quad \delta > 0.
\end{split}
\end{equation*}
Combining this with  (\ref{eq4.9}), (\ref{eq4.40}) and (\ref{eq4.43}) gives
\begin{equation}
\label{eq4.47}
\begin{split}
h \|\Pi_{N(h)} \widetilde{u}\|_{H_{\Phi_0}(V)} + \abs{u_-} & \leq C \|v\|_{H_{\widehat{\Phi}}(\mathbb C)} +  C\| u \|_{H_{\Phi}(U\backslash \widetilde{U})} + C  \|\widetilde{u}\|_{H_{\Phi_0}(V\backslash \widetilde{V})}  \\
& \ \ \ \ \ +C  h\abs{v_+} + C  e^{-\delta/h} \left(\| u \|_{H_{\widehat{\Phi},m}(\mathbb C)} + \abs{u_-}\right).
\end{split}
\end{equation}
Here $\widetilde{V} \Subset V$ and $\widetilde{U} \Subset U$. We obtain therefore, in view of (\ref{eq4.2.9}) and (\ref{eq4.47}),
\begin{equation}
\label{eq4.48}
\begin{split}
h \|\widetilde{u}\|_{H_{\Phi_0}(W)} + \abs{u_-} & \leq h \|\Pi_{N(h)} \widetilde{u}\|_{H_{\Phi_0}(W)} + \abs{u_-} + h \|(1-\Pi_{N(h)}) \widetilde{u}\|_{H_{\Phi_0}(W)} \\
& \leq C\left(\|v\|_{H_{\widehat{\Phi}}(\mathbb C)} + \| u \|_{H_{\Phi}(U\backslash \widetilde{U})} + \|\widetilde{u}\|_{H_{\Phi_0}(V\backslash \widetilde{V})} + h\abs{v_+}\right) \\
& \ \ \ \ \ \ \ \ \ \ + C\, e^{-\delta/h}\left(\| u \|_{H_{\widehat{\Phi},m}(\mathbb C)} + \abs{u_-}\right).
\end{split}
\end{equation}
We get from (\ref{eq4.48}), using that $\widetilde{u} = B(u|_U)$, similarly to (\ref{eq4.16}), (\ref{eq4.17}), (\ref{eq4.18}),
\begin{equation}
\label{eq4.49}
\begin{split}
h \| u \|_{H_{\Phi}(U_1)} + \abs{u_-}
& \leq C\left(\|v\|_{H_{\widehat{\Phi}}(\mathbb C)} + \| u \|_{H_{\Phi}(U\backslash \widehat{U})} + h\abs{v_+}\right) \\
& \ \ \ \ \ \ \ \ \ \ + C\, e^{-\delta/h}\left(\| u \|_{H_{\widehat{\Phi},m}(\mathbb C)} + \abs{u_-}\right).
\end{split}
\end{equation}
Here $U_1 \Subset \widehat{U} \Subset U$. We have using the elliptic estimate (\ref{eq4.1.2}) and the first equation in (\ref{eq4.36}),
\begin{equation}
\label{eq4.50}
\| u \|_{H_{\Phi}(U\backslash \widehat{U})} \leq C\left(\|v\|_{H_{\widehat{\Phi}}(\mathbb C)} + e^{-\delta/h}\left(\| u \|_{H_{\widehat{\Phi},m}(\mathbb C)} + \abs{u_-}\right)\right).
\end{equation}
Here we have also used that, in view of Lemma \ref{loc_R-}, the term $R_- u_-$ is exponentially small away in $L^2_{\widehat{\Phi},m}$ from an arbitrarily small neighbourhood of $0$, provided that $\rho > 0$ in (\ref{eq4.21}) is small enough. We get, injecting (\ref{eq4.50}) into (\ref{eq4.49}),
\begin{equation}
\label{eq4.51}
h \| u \|_{H_{\Phi}(U_1)} + \abs{u_-} \leq C\left(\|v\|_{H_{\widehat{\Phi}}(\mathbb C)} + h\abs{v_+}\right) + C\, e^{-\delta/h}\left(\| u \|_{H_{\widehat{\Phi},m}(\mathbb C)} + \abs{u_-}\right).
\end{equation}}
Using the elliptic estimate again,
\begin{equation}
\label{eq4.52}
\| u \|_{H_{\widehat{\Phi},m}(\mathbb C \backslash U_1)} \leq C\left(\|v\|_{H_{\widehat{\Phi}}(\mathbb C)} + e^{-\delta/h}\left(\| u \|_{H_{\widehat{\Phi},m}(\mathbb C)} + \abs{u_-}\right)\right),
\end{equation}
valid for $\rho > 0$ small enough, we get, adding (\ref{eq4.51}) and (\ref{eq4.52}),
\begin{equation}
\label{eq4.53}
h \| u \|_{H_{\widehat{\Phi},m}(\mathbb C)} + \abs{u_-} \leq C\left(\|v\|_{H_{\widehat{\Phi}}(\mathbb C)} + h\abs{v_+}\right) + C e^{-\delta/h}\left(\| u \|_{H_{\widehat{\Phi},m}(\mathbb C)} + \abs{u_-}\right).
\end{equation}
We have proved therefore that for all $h>0$ small enough, the operator $ \mathcal P ( z ) $ in
\eqref{eq4.36}
is injective, for $z\in \mathbb C$ satisfying (\ref{eq4.21}) for $\rho > 0$ sufficiently small but fixed.
Since $ \mathcal P ( z) $  is a Fredholm operator of index zero (it is a
finite rank perturbation of the operator in which $ R_{\pm } $ are replaced by $ 0 $ and that operator
has that property thanks to the Fredholm property of $P^{{\rm w}} -z$), it is consequently bijective.

Let
\begin{equation}
\label{eq4.55}
\mathcal E ( z) =
\left( \begin{array}{ccc}
E(z) & E_+ ( z) \\
E_- ( z) & E_{-+}(z)
\end{array} \right): H_{\widehat{\Phi}}(\mathbb C) \times \mathbb C \rightarrow
H_{\widehat{\Phi},m}(\mathbb C) \times \mathbb C,
\end{equation}
be the inverse of $\mathcal P( z) $ in (\ref{eq4.36}), and let us recall that $z\in \mathbb C$ in the set (\ref{eq4.21}) belongs to spectrum of $P^{\rm{w}}(x,hD_x;h)$ precisely when $E_{-+}(z)=0$. We shall now compute $E_{-+}(z)$ up to an exponentially small term. For that we set
\begin{equation}
\label{eq4.56}
\widetilde{E}_+ = R_-, \quad \widetilde{E}_{-+}(z) = z - G ((2j +1)h;h).
\end{equation}
Using (\ref{eq4.29}), (\ref{eq4.35.1}), (\ref{eq4.37}), (\ref{eq4.28}), and Lemma \ref{loc_R-}, we see that uniformly in $j$, we have for $v_+ \in \mathbb C$,
\begin{equation}
\label{eq4.57}
\begin{split}
R_+ \widetilde{E}_+ v_+ & = R_+(v_+\Pi_{\widehat{\Phi}_m}(\psi A\varphi_j)) = R_+(v_+ \psi A\varphi_j) + \mathcal O(e^{-\delta/h}) v_+  \\
& = R_+(v_+ A\varphi_j) + \mathcal O(e^{-\delta/h}) v_+ = v_+ + \mathcal O(e^{-\delta/h}) v_+,
\end{split}
\end{equation}
that is, $ R_+ \widetilde{E}_+ - 1 = \mathcal O  (e^{-\delta/h}): \mathbb C \rightarrow \mathbb C$.

Next, we consider
\begin{equation}
\label{eq4.59}
\begin{split}
\left(P^{\rm{w}} -z\right)\widetilde{E}_+\,v_+  + R_- \widetilde{E}_{-+}v_+
& = \left(P^{\rm{w}} -z\right)R_- \,v_+
+  R_-\left(z- G ((2j +1)h;h) \right)v_+ .
\end{split}
\end{equation}
We first observe that in view of  Lemma \ref{loc_R-},
\begin{equation}
\label{eq4.60}
\| R_- v_+\|_{ L^2_{\widehat{\Phi},m} ( \mathbb C\backslash \widetilde{U} )} =
\mathcal O( e^{-\delta/h})\abs{v_+},
\end{equation}
where $\widetilde{U} \Subset U$ is a neighborhood of the origin, and thus the $L^2_{\widehat{\Phi}}(\mathbb C \setminus U)$--norm of the function in (\ref{eq4.59}) is $\mathcal O(e^{-\delta/h}) |v_+|$.
Restricting the attention to the neighbourhood $U$, we see that modulo an error term of the $H_{\Phi}(U)$--norm $\mathcal O( e^{-\delta/h})\abs{v_+}$, the holomorphic function in (\ref{eq4.59}) is equal to
\begin{equation}
\label{eq4.62}
v_+ \left((P-z)A\varphi_j + A\varphi_j \left(z- G ((2j +1)h;h)\right)\right) = v_+ \left(PA-AQ\right)\varphi_j \\
= \mathcal O( e^{-\delta/h}) |v_+|,
\end{equation}
in $H_{\Phi}(U)$. It follows that
\begin{equation}
\label{eq4.63}
\left(P^{\rm{w}}(x,hD_x;h) -z\right)\widetilde{E}_+ + R_- \widetilde{E}_{-+} = \mathcal O( e^{-\delta/h}): \mathbb C \rightarrow H_{\widehat{\Phi}}(\mathbb C).
\end{equation}
Combining (\ref{eq4.57}) and (\ref{eq4.63}) with the well-posedness of the Grushin problem, we see that we have computed $E_{-+}(z)$ in (\ref{eq4.55}) up to an exponentially small error term,
\begin{equation}
\label{eq4.64}
E_{-+}(z) = z - G ((2j +1)h;h) + \mathcal O( e^{-\delta/h}),
\end{equation}
uniformly in $j$ satisfying (\ref{eq4.21}), for $\rho > 0$ small enough but fixed. The proof of Theorem \ref{t:gen} is complete.
\end{proof}

We finish by indicating the vanishing of $ G_1 $ when the subprincipal symbol of $ H^{{\rm w}}$ in Theorem \ref{t:gen} vanishes. For this we can work in the setting of \cite{Hi04} and consider expansions modulo $ \mathcal O(h^\infty) $ and $ C^\infty $ Fourier integral operators.
In the proof of Proposition \ref{p:norm}, we can follow the proof of \cite[(a.3.11)]{HelSj} which shows that the Fourier integral operator $A_0$ can be chosen so that the conjugation by $ A_0 $ gives a normal form up to $ \mathcal O ( h^2) $ terms, see also~\cite[Section 2]{HitSj1}. The pseudodifferential conjugation does not introduce any $ \mathcal O ( h )$ terms and that shows that $ G_1 \equiv 0 $.}

\section*{Appendix: Vey's Morse Lemma in dimension 2}
\renewcommand{\theequation}{A.\arabic{equation}}
\refstepcounter{section}
\renewcommand{\thesection}{A}
\setcounter{theo}{0}
\newtheorem{thm}{Theorem}[section]
\setcounter{equation}{0}

For the reader's convenience we present a self-contained account of Vey's Morse Lemma in dimension two:
\begin{thm}
\label{t:vey}
Suppose that \[ p \in \mathscr O ( \neigh_{\mathbb C^2 }  (0,0) ), \ \ \
 p ( x, \xi ) = q ( x, \xi )  + \mathcal O (( x, \xi)^3 ) ),  \]
where $ q $ is a
quadratic form such that $ \{ q ( x, \xi ) :
( x, \xi ) \in \mathbb R^2 \} \neq \mathbb C $ and $ q|_{\mathbb R^2 \setminus 0 } \neq 0 $.
 Then there exists
a\ biholomorphic map $ F : \neigh_{\mathbb C^2 } ( 0 ) \to \neigh_{\mathbb C^2 } ( 0 ) $
and $ f \in \mathscr O ( \neigh_{\mathbb C }  (0 ) ) $, $ f ( 0 ) = 1 $,
such that
\begin{equation}
\label{eq:vey}
F^* p ( z, \zeta ) = q( z ,\zeta),  \ \ \ F^* ( d\xi \wedge d x ) = f ( q(z ,\zeta )) d \zeta \wedge d z ,
\end{equation}
{$ F ( 0 )  = 0$, $ dF ( 0 ) = I_{\mathbb C^2} $}. Equivalently, we can find a biholomorphic map
\[
\kappa : \neigh_{\mathbb C^2 } ( 0 ) \to \neigh_{\mathbb C^2 } ( 0 )
\]
such that
\begin{equation}
\label{eq:vey1}
\begin{gathered}
\kappa^* p ( z, \zeta ) = g( q (z, \zeta))  ,  \ \ \ \ \kappa^* ( d\xi \wedge d x ) = d \zeta \wedge d z ,\\
{\kappa(0)=0, \ \  d\kappa ( 0) = I_{\mathbb C^2} }, \ \ \
g'(t ) = f ( g ( t ) ) ^{-1} , \ \ \  g ( 0 ) = 0 .
\end{gathered}
\end{equation}
If $ p|_{ \mathbb R^2 } $ is real valued and $ q $ is positive definite, then $ F, \kappa$ and $f, g $ are real valued in the real domain.
We also note that while $ F$ and $ \kappa $ are not unique, $ f $ and $ g $ are.
\end{thm}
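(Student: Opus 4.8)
The plan is to separate the normalization of the function from that of the two-form, and to read off the multiplier as a ratio of periods. First I would use the hypotheses on $ q $ together with Lemma~\ref{l:quand} to see that $ q $ is a nondegenerate quadratic form whose Hamilton field lies in $ \mathfrak{sl}_2(\mathbb C) $ and is semisimple; a complex linear symplectomorphism of $ \mathbb C^2 $ then brings it to the model $ q = a\, z\zeta $, $ a\in\mathbb C^* $, with $ \omega := d\xi\wedge dx = d\zeta\wedge dz $ (conjugating back at the end preserves $ F(0)=0 $ and $ dF(0) = I $). Next, the holomorphic splitting lemma --- obtained by a Moser homotopy argument applied to $ p_t = q + t(p-q) $ --- gives a biholomorphism $ \Phi $, $ \Phi(0) = 0 $, $ d\Phi(0) = I $, with $ p\circ\Phi = q $. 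In these coordinates $ p $ is exactly $ q $ while $ \omega $ has become a nowhere-vanishing holomorphic $ 2 $-form $ \omega_1 = b(x,\xi)\, d\zeta\wedge dz $ with $ b(0) = 1 $, and what remains is to find a biholomorphism $ G $ preserving $ q $ (hence each leaf $ \Sigma_E := \{q = E\} $) with $ G^*\omega_1 = f(q)\,\omega $, $ f(0) = 1 $, $ dG(0) = I $; then $ F := \Phi\circ G $ satisfies the first line of \eqref{eq:vey}.

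The leafwise problem is handled by Gelfand--Leray (residue) forms. On $ \Sigma_E $, $ E\neq 0 $, define the nowhere-vanishing holomorphic $ 1 $-forms $ \sigma^1_E $, $ \sigma^0_E $ by $ dq\wedge\sigma^1_E = \omega_1 $ and $ dq\wedge\sigma^0_E = \omega $; since $ q\circ G = q $, the identity $ G^*\omega_1 = f(q)\,\omega $ is equivalent, leaf by leaf, to $ G_E^*\sigma^1_E = f(E)\,\sigma^0_E $. Each $ \Sigma_E $ is an annulus (the Milnor fibre of the $ A_1 $ node $ q=0 $) with vanishing cycle $ \gamma_E $, and any nowhere-vanishing holomorphic $ 1 $-form on an annulus equals $ c\,dw/w $ in a suitable holomorphic coordinate, with $ 2\pi i\, c $ its period over $ \gamma_E $; hence two such forms are biholomorphically equivalent by a self-map of the annulus exactly when their $ \gamma_E $-periods agree. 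This forces
\[
f(E) := \Big(\oint_{\gamma_E}\sigma^1_E\Big)\Big/\Big(\oint_{\gamma_E}\sigma^0_E\Big),
\]
and I would check that both periods extend holomorphically across $ E = 0 $ --- a standard feature of the $ A_1 $ singularity, the value at $ E=0 $ being the residue of the Gelfand--Leray form at the node, which is nonzero for $ \omega $ and equals $ b(0)=1 $ times that for $ \omega_1 $ --- so $ f\in\mathscr O(\neigh_{\mathbb C}(0)) $ with $ f(0) = 1 $. Assembling the leafwise coordinate changes with holomorphic dependence on $ E $, and fixing the residual rotation gauge so that the result extends across $ \Sigma_0 $ with $ dG(0) = I $, produces $ G $.

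For the equivalent $ \kappa $-statement, let $ g $ solve $ g'(t) = f(g(t))^{-1} $, $ g(0)=0 $ (so $ g'(0) = 1 $ since $ f(0)=1 $), put $ \psi(z,\zeta) := \big(z,\ g(a z\zeta)(a z\zeta)^{-1}\zeta\big) $ --- holomorphic because $ g(w)/w $ is --- and compute $ q\circ\psi = g(q) $ and $ \psi^*\omega = g'(q)\,\omega $; then $ \kappa := F\circ\psi $ gives $ \kappa^*p = g(q) $ and $ \kappa^*\omega = f(g(q))\,g'(q)\,\omega = \omega $, with $ d\kappa(0) = I $. Uniqueness of $ f $ (and hence of $ g $) follows because any $ q $-preserving biholomorphism restricts on each leaf to a map fixing the class of $ \gamma_E $ and so preserves periods, making the ratio above intrinsic; $ F $ and $ \kappa $ themselves are visibly non-unique (compose with leaf- and period-preserving maps). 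In the real case $ p|_{\mathbb R^2} $ real and $ q $ positive definite, every ingredient --- the linear normalization, the Moser homotopy, the residue computation, the ODE --- respects complex conjugation, giving the real-valued conclusion. The main obstacle will be the analysis at the singular leaf $ E = 0 $: showing that the family $ (G_E) $ can be chosen holomorphically in $ E $ \emph{and} glued across the node $ \Sigma_0 = \{z=0\}\cup\{\zeta=0\} $ into a biholomorphism of a full neighbourhood of $ 0\in\mathbb C^2 $. This is done by passing to logarithmic coordinates along each branch and carrying out the period-matching uniformly up to $ E=0 $, which is the technical heart of Vey's theorem.
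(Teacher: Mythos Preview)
Your overall architecture is sound and your reduction from $F$ to $\kappa$ via the explicit $\psi$ is correct (and a bit slicker than the paper's symmetric scaling), but the route is genuinely different from the paper's, and the step you yourself flag as ``the technical heart'' is exactly where the argument remains incomplete.

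The paper follows the Colin de Verdi\`ere--Vey scheme. After the standard holomorphic Morse lemma produces $\widetilde F^* p = q$ with $\widetilde F^*(d\xi\wedge dx) = R\,d\zeta\wedge dz$, the key step is a purely algebraic decomposition (equation~\eqref{eq:CVL}):
\[
R(z,\zeta)\,d\zeta\wedge dz \;=\; f(q(z,\zeta))\,d\zeta\wedge dz \;+\; dq\wedge d\eta,
\]
obtained by an explicit power-series computation in coordinates where $q = i\mu z\zeta$ (formula~\eqref{eq:F2f_app}). From there a Moser homotopy is run on the \emph{two-form} family $\omega_t = f(q)\,d\zeta\wedge dz + t\,dq\wedge d\eta$: the vector field is determined by $\iota_{X_t}\omega_t = -\eta\,dq$, one line shows $X_t q = 0$, and the time-one flow yields $F = \widetilde F\circ F_1$. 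No leafwise analysis, no gluing across the singular fibre, and no holomorphic-dependence-on-parameter argument is needed.

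Your approach instead builds $G$ leaf by leaf from Gelfand--Leray forms and a period-matching argument on the Milnor fibres $\Sigma_E$. This is geometrically appealing and recovers the period interpretation of $f$ (which the paper records separately as Theorem~\ref{t:action}), but it pushes all the difficulty into the step you leave open: choosing the leafwise biholomorphisms $G_E$ so that the family is holomorphic in $E$ \emph{and} extends across the singular leaf $\Sigma_0 = \{z=0\}\cup\{\zeta=0\}$ to a biholomorphism of a full neighbourhood of $0\in\mathbb C^2$. Each $G_E$ is only determined up to an automorphism of the annulus, and pinning down this gauge coherently through $E=0$ is a genuine analytic problem, not a formality. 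The paper's decomposition~\eqref{eq:CVL} is precisely the device that sidesteps it. As written, then, your proposal outlines a plausible alternative strategy but does not actually establish the existence of $G$; the uniqueness of $f$ via periods and the real-valued statement are fine.
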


We start by recalling the following fact (see \cite[Proposition 2.1.10]{pravda} for a slightly different version
and more general statements):
\begin{lemm}
\label{l:quand}
\noindent
Any 
holomorphic quadratic form on $ \mathbb C^2 $ such that
$ \{ q ( x, \xi ) : ( x, \xi ) \in \mathbb R^2 \} \neq \mathbb C $ and $ q|_{\mathbb R^2\setminus 0 }
\neq 0 $
is complex symplectically equivalent to
 $ \frac 1 {2i} \mu ( x^2 + \xi^2 ) $, $ \mu \in \mathbb C \setminus \{ 0 \}$, (or equivalently to
 $ \mu x \xi $).
\end{lemm}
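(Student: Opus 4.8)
The plan is to reduce the statement to the conjugacy classification in $\mathfrak{sp}(2,\mathbb{C})$ and then to discard the degenerate classes using the two hypotheses on the real part. First I would observe that one may take the symplectic equivalence to be \emph{linear}: a holomorphic symplectomorphism $\kappa$ near $0$ with $\kappa(0)=0$ for which $\kappa^*q$ is again quadratic must satisfy $\kappa^*q=(d\kappa(0))^*q$ by comparing quadratic parts, and $d\kappa(0)\in \mathrm{Sp}(2,\mathbb{C})$; conversely every linear symplectic map is a global symplectomorphism. Passing to the Hamilton vector field $H_q$ of $q$ (with respect to $\sigma=d\xi\wedge dx$), which is linear, identifies $q$ with a matrix $A_q\in\mathfrak{sp}(2,\mathbb{C})=\mathfrak{sl}(2,\mathbb{C})$, and linear symplectic equivalence of quadratic forms corresponds exactly to $SL(2,\mathbb{C})$-conjugacy of the $A_q$'s. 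By the Jordan normal form for traceless $2\times2$ matrices there are precisely three orbits: $A_q=0$; $A_q$ nilpotent and nonzero; and $A_q$ semisimple with eigenvalues $\pm\mu$, $\mu\neq0$. Translating back, $q$ is symplectically equivalent to $0$, to $c\,\ell(x,\xi)^2$ for some $c\in\mathbb{C}\setminus\{0\}$ and some nonzero complex-linear form $\ell$ (the nilpotent case, where $q$ is a perfect square), or to $\mu x\xi$, $\mu\neq0$. In the last case the standard linear-symplectic diagonalization of $x^2+\xi^2=(x+i\xi)(x-i\xi)$ shows $\mu x\xi$ is equivalent to $\tfrac1{2i}\mu(x^2+\xi^2)$ — the sign of $\mu$ being immaterial, since $(x,\xi)\mapsto(\xi,-x)$ is symplectic and sends $\mu x\xi$ to $-\mu x\xi$ — which also settles the parenthetical assertion. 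So it remains only to exclude the orbits $q\equiv0$ and $q=c\ell^2$.

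The exclusion step is where the hypotheses enter. The case $q\equiv0$ is killed directly by $q|_{\mathbb{R}^2\setminus0}\neq0$. For $q=c\ell^2$ I would split on whether $\ell$ is proportional to a form with real coefficients: if it is, then $\ell$, and hence $q$, vanishes on a real line through the origin, contradicting $q|_{\mathbb{R}^2\setminus0}\neq0$; if it is not, then $\ell$, viewed as an $\mathbb{R}$-linear map $\mathbb{R}^2\to\mathbb{C}$, has rank $2$ and is therefore onto, so $q(\mathbb{R}^2)=c\{t^2:t\in\mathbb{C}\}=\mathbb{C}$, contradicting $\{q(x,\xi):(x,\xi)\in\mathbb{R}^2\}\neq\mathbb{C}$. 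Hence $A_q$ is semisimple with nonzero eigenvalues and $q$ is symplectically equivalent to $\tfrac1{2i}\mu(x^2+\xi^2)$, as claimed. The positivity addendum (when $q|_{\mathbb{R}^2}$ is real and positive definite) would follow by noting that the diagonalizing map can then be chosen real.

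The only genuinely delicate point is this last dichotomy in the nilpotent case: because neither of the two hypotheses is invariant under complex symplectic maps, they cannot be transported to a normal-form representative and must be tested against the general perfect square $c\ell^2$, with each of the two sub-cases eliminated by exactly one of the assumptions. Everything else is the finite classification of $\mathfrak{sl}(2,\mathbb{C})$ under conjugacy together with elementary linear algebra; alternatively one could invoke \cite[Proposition 2.1.10]{pravda} for the classification and then carry out only the short exclusion argument above.
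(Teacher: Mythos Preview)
Your proof is correct and takes a genuinely different route from the paper's. The paper proceeds by direct computation: it factors $q(x,\xi)=(x-a_+\xi)(x-a_-\xi)$, uses the nonvanishing on $\mathbb{R}^2\setminus 0$ to get $a_\pm\notin\mathbb{R}$, passes to the complex coordinate $z=x+i\xi$ to write $q=\alpha\bar z(z+\beta\bar z)$, and then uses the range hypothesis to force $|\beta|<1$, from which it extracts the intermediate conclusion that $\Re(\lambda q)|_{\mathbb{R}^2}$ is positive definite for some $\lambda\in\mathbb{C}$; a real symplectic diagonalization of $\Re(\lambda q)$ followed by an orthogonal diagonalization of $\Im(\lambda q)$ and a complex rescaling finishes. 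Your argument instead invokes the Jordan classification in $\mathfrak{sl}(2,\mathbb{C})\simeq\mathfrak{sp}(2,\mathbb{C})$ of the Hamilton matrix $A_q$ and eliminates the nilpotent orbit (i.e.\ $q=c\ell^2$) by the clean dichotomy on whether $\ell$ is a complex multiple of a real form.

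What each approach buys: yours is shorter and more structural, and your observation that the two hypotheses are \emph{not} invariant under complex symplectic maps (so the exclusion must be done on the original $q$, not on a normal form) is exactly the right point. The paper's more hands-on route has the advantage that it produces, as an explicit intermediate step, the existence of $\lambda\in\mathbb{C}$ with $\Re(\lambda q)$ positive definite on $\mathbb{R}^2$; this fact is invoked separately in \S\ref{s:gen} (just before \eqref{eq4.1.0.2}) and does not fall out of your argument without additional work. Your parenthetical ``positivity addendum'' belongs to Theorem~\ref{t:vey} rather than to this lemma and can be dropped here.
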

\begin{proof}
If $ q ( x, \xi ) = a x^2 + 2 b x\xi + c \xi^2 $ and, since $ q|_{\mathbb R^2 \setminus 0 } \neq 0 $,
we can rescale and assume $ a = 1 $. If $ a_\pm $ are roots of $ z^2 + 2bz  + c = 0 $ then
$ q ( x, \xi ) = ( x - a_+ \xi ) ( x - a_- \xi ) $, and again, $ a_+ , a_- \notin \mathbb R $. By
a real linear change of variables we can assume that $ a_+ = i $ and $ a_- \notin \mathbb R $.
Since the range of $ q $ is not all of $ \mathbb C$, $ a_- \neq i $.
Writing $ z = x + i \xi $, we then get
\[ q ( x, \xi ) = \alpha \bar z ( z + \beta \bar z ), \ \ \alpha := - i (i - a_-)  , \ \
\beta = \frac{ i + a_- }{ i - a_- },   \ \ | \beta | \neq 1 . \]
If $ | \beta | < 1 $ then $ \Re ( \bar \alpha q ) = | \alpha|^2 ( |z|^2 - \Re ( \beta \bar z^2 ) ) > 0 $, $ z \neq 0 $.
Otherwise,  $ q ( x, \xi ) = \alpha \beta ( \bar z^2 + |z|^2/\beta )
= \alpha |\beta| r^2 ( e^{  i \theta } - 1/|\beta | )$, $ \theta = - \frac12 \arg z - \arg \beta
 $,  takes call complex values.

Hence (by multiplying $ q $ by a complex number ) we can assume that  $ \Re q|_{\mathbb R^2\setminus 0 } > 0 $.  We can then take a real linear canonical transformation such that
 $ \Re q |_{\mathbb R^2 } = \lambda (x^2 + \xi^2 ) $, $ \lambda \in \mathbb R \setminus 0 $. Now we use an orthogonal transformation (with
 determinant $ 1 $; it preserves our new real part) to diagonalize $ \Im q |_{\mathbb R^2 } =
 \mu_1 x^2 + \mu_2 \xi^2  $, $ \mu_j \in \mathbb R$. Complexifying these $ \mathbb R$-linear transformations
 gives us a complex symplectic reduction to $ ( \lambda + i \mu_1) x^2 + ( \lambda + i \mu_2 ) \xi^2 $.
 A complex symplectic scaling finally produces $ \tfrac1{2i} \mu ( x^2 + \xi^2 ) $, $ \mu \in \mathbb C \setminus \{0 \} $.  \end{proof}

In our setting the proof is a simple adaptation of the elegant argument presented by Colin de Verdi\`ere and Vey \cite{CV}. An alternative
argument was given (independently of \cite{vey}, \cite{CV}) by Helffer--Sj\"ostrand \cite{HelSj}. The advantage we stress here is the algorithmic nature of the method which allows computation of the functions $ f $ and $ g $.

We also record the following simple but useful
\begin{thm}
\label{t:action}
In the notation of Theorem \ref{t:vey}, suppose that $ q ( x, \xi ) = \frac12 \mu ( x^2 + \xi^2 ) $, $ \mu \in \mathbb C\setminus \{ 0 \} $  and that for $ t \in \neigh_{\mathbb R } ( 0) $,
\[  S ( t \mu ) := \int_{ \gamma ( t \mu ) } \xi d x  \]
where $ \gamma (t \mu ) $ is a closed cycle on the complex curve $ p ( x, \xi ) = t \mu $ close to
the positively oriented (real) circle $ \frac12 (x^2 + \xi^2)  =  t $. Then $ S \in \mathscr O ( \neigh_{\mathbb C}  (0 ) ) $ and in the notation of
\eqref{eq:vey} and \eqref{eq:vey1}
\begin{equation}
\label{eq:f2S} \mu S' ( w ) =  2 \pi  f  ( w ) , \ \ S ( 0 ) = 0 ,   \ \   \mu S ( g ( w ) ) = 2 \pi w  . \end{equation}
\end{thm}
\begin{proof}
In coordinates given by $ F $ in \eqref{eq:vey}, $ p = \frac12 \mu ( z^2+ \zeta^2) $ and
$ \gamma ( t \mu ) $ is the positively oriented real circle. By Stokes's theorem we have
(in the real domain and using polar coordinates)
\[  S ( t \mu ) = \int_{ \frac12 (x^2 + \xi^2) \leq t } f ( \tfrac12 \mu ( x^2 + \xi^2 ) ) d x d \xi = 2 \pi \int_{
\frac12 r^2 \leq t } f ( \tfrac12 \mu r^2 ) r dr = 2 \pi \int_{0}^t f ( \mu \tau ) d\tau . \]
This gives $  \mu S' ( t \mu ) =2  \pi f ( \mu t ) $ and the first statement in \eqref{eq:f2S} follows. Since $ f $ is holomorphic
near $ 0 $ so is $ S$. The relation between $ S $ and $ g $ now follows from the relation between
$ g $ and $ f $ given in \eqref{eq:vey1}.
\end{proof}

\begin{proof}[Proof of Theorem \ref{t:vey}]
We start by showing how \eqref{eq:vey1} follows from \eqref{eq:vey}.
If $ p ( x, \xi ) = q ( x, \xi ) $ and $ \omega := f ( q ( x, \xi ) ) d\xi \wedge dx $,
we need to find $ ( x, \xi ) = H ( z, \zeta ) $ such that $ H^* q = g ( q ) $ and
$ H^* \omega = d\zeta \wedge d z$.  Then $ \kappa = F \circ H $ gives \eqref{eq:vey1}.

We define $ H $ by
\[   x = \sqrt{a ( q ( z, \zeta ) ) } z, \ \ \  \xi = \sqrt{a ( q ( z , \zeta ) )} \zeta , \ \ \  a (0) = 1.  \]
so that
\[  H^* q  = q a ( q) , \ \ \  H^* \omega =
f ( a ( q ) q ) ( a( q ) +  q a'(q)  ) d \zeta \wedge d z  , \ \  q = q ( z, \zeta ) . \]
We then put
$ g ( t ) = a(t ) t $ where $ a $ is chosen so that,
\[  g' ( t ) = a(t) + t a'(t)  = \frac{1}{ f ( a(t) t ) } = \frac{ 1}{ f ( g (t)) } ,\]
which is the condition in \eqref{eq:vey1}. We also note that $dH(0)= I_{\mathbb C} $ and hence
$ d\kappa (0) = dF( 0) = I_{\mathbb C^2}$.

Following \cite{CV}, we start by establishing the following fact: for any $ R
\in \mathscr O ( \neigh_{\mathbb C^2 } ( 0 , 0))  $
there exists $ f \in \mathscr O ( \neigh_{\mathbb C } (0))  $ and $
\eta \in  \mathscr O ( \neigh_{\mathbb C^2 } ( 0 , 0)) $ such that
\begin{equation}
\label{eq:CVL}   R ( z , \zeta ) d \zeta \wedge dz = f ( q ( z, \zeta ) )d\zeta \wedge d z +
d q \wedge d \eta , \ \ \ \eta (0,0) = 0 .
\end{equation}
In dimension two and for holomorphic $ R $ (and unlike in the general case considered in
\cite{CV}) this is very simple. We can assume (by a complex linear symplectic
change of variables) that $ q ( z, \zeta ) = i \mu z \zeta $ so that \eqref{eq:CVL} becomes
\[  R ( z, \zeta ) = f ( i \mu z \zeta ) + i \mu ( z \partial_z \eta ( z, \zeta ) - \zeta \partial_\zeta \eta ( z, \zeta ) ), \ \ \eta (0, 0) = 0   . \]
If $ R ( z , \zeta ) = \sum_{ n,m } R_{nm} z^n \zeta^m $ then the solutions are given by
\begin{equation}
\label{eq:F2f_app}    f ( w ) = \sum_{ n=0}^\infty ( - i)^n \mu^{-n} R_{nn} w^n, \ \ \
\eta ( z, \zeta ) = (i \mu)^{-1} \sum_{ n\neq m } \frac{ R_{nm} z^n w^m}{ n-m} .
\end{equation}

To obtain \eqref{eq:vey} we first apply the standard holomorphic Morse lemma to obtain
 a biholomorphism $ \widetilde F $ such that
 \[  \widetilde F^* p ( z , \zeta ) = q ( z, \zeta ) , \ \ \ \widetilde F^* ( d \xi \wedge d x ) = R (z, \zeta ) d \zeta \wedge dz . \]
We then use \eqref{eq:CVL} and define
\[ \omega_t := f ( q ( z, \zeta ) )d\zeta \wedge d z + t d q \wedge d \eta . \]
We now search for a family of biholomorphic maps $ F_t $ such that $ F_0  ( z, \zeta ) = ( z, \zeta ) $
and
\begin{equation}
\label{eq:Ft}  F_t^* \omega_t = \omega_0 , \ \  F_t^* q = q . \end{equation}
Finding $ F_t $ is equivalent to finding a holomorphic family of vector fields $ X_t $ which define
$ F_t $ by $ X_t ( F_t ( x ) ) = \partial_t F_t ( x ) $.  Then \eqref{eq:Ft} is equivalent to
\begin{equation}
\label{eq:Xt}    \mathscr L_{X_t} \omega_t + \partial_t \omega_t = 0 , \ \ \
X_t q = 0 . \end{equation}
Cartan's formula and the definition of $ \omega_t $ show that that this is equivalent to
\[    d ( \iota_{X_t} \omega_t ) =  dq \wedge d \eta = - d ( \eta  d q ) . \]
Since $ d q \wedge d \eta $ vanishes at $ ( 0 , 0 )$, $ \omega_t $ is nondegenerate
in a neighbourhood of $ ( 0, 0 ) $ and we can find $ X_t $ such that
$ \iota_{X_t} \omega_t =  - \eta  d q$. We have
\[ 0 = \omega_t  ( X_t , X_t ) = - \eta dq (X_t ) = - \eta X_t q . \]
If $ \eta \equiv 0 $ then $ X_t \equiv 0 $. Otherwise, $ X_t q $ vanishes on a open set and
analyticity
shows that $ X_t q \equiv 0$. This means that  \eqref{eq:Xt}, and hence \eqref{eq:Ft}, hold
and we obtain \eqref{eq:vey} with $ F = \widetilde F \circ F_1 $.

Finally, we note that once we have $ F$ satisfying \eqref{eq:vey}, we can can compose it
with a linear symplectic transformation so that $ d F ( 0 ) = I_{\mathbb C^2} $. In fact,
once \eqref{eq:vey} holds, we have the $ dF ( 0 ) $ is symplectic (we have $ f ( 0 ) = 1 $)
and $ dF(0)^* q = q $. But then then $ \widetilde F := dF ( 0 )^{-1} \circ F $ satisfies
\eqref{eq:vey} and $ d \widetilde F ( 0 ) = I_{\mathbb C^2}$.
\end{proof}

\end{document}